\documentclass[useAMS,usenatbib, referee]{biom}


\usepackage[figuresright]{rotating}
\usepackage{amsmath,comment}
\usepackage{subcaption}
\newcommand{\norm}[1]{\left\lVert#1\right\rVert}
\usepackage{xcolor}
\usepackage{mathtools}
\usepackage{amsmath}
\usepackage{booktabs}
\usepackage{amsmath,comment}
\usepackage{mathrsfs}
\usepackage{amssymb}
\usepackage{enumitem}
\usepackage[colorlinks,citecolor=blue,urlcolor=blue,linkcolor=blue]{hyperref}

\DeclareMathOperator{\argmax}{argmax}
\title{Nonparametric estimation of the causal effect of a stochastic threshold-based intervention}
\author{Lars van der Laan$^{1,3*}$\email{lvdlaan@uw.edu}, Wenbo Zhang$^2$, Peter B. Gilbert$^{2,3}$ \\
$^{1}$Department of Statistics, University of Washington, Seattle, Washington, 98109, U.S.A. \\
$^{2}$Department of Biostatistics \\ University of Washington, Seattle, Washington, 98109, U.S.A.\\
$^{3}$Vaccine and Infectious Disease and Public Health Sciences Divisions, \\
Fred Hutchinson Cancer Research Center, Seattle, Washington, 98109, U.S.A. \\
}\date{}          

\newcommand\independent{\protect\mathpalette{\protect\independenT}{\perp}}
\def\independenT#1#2{\mathrel{\rlap{$#1#2$}\mkern2mu{#1#2}}}

\usepackage{amssymb}

\begin{document}

\begin{abstract}
Identifying a biomarker or treatment-dose threshold that marks a specified level of risk is an important problem, especially in clinical trials. In view of this goal, we consider a covariate-adjusted threshold-based interventional estimand, which happens to equal the binary treatment-specific mean estimand from the causal inference literature obtained by dichotomizing the continuous biomarker or treatment as above or below a threshold. The unadjusted version of this estimand was considered in Donovan et al. (2019). Expanding upon Stitelmen et al. (2010), we show that this estimand, under conditions, identifies the expected outcome of a stochastic intervention that sets the treatment dose of all participants above the threshold. We propose a novel nonparametric efficient estimator for the covariate-adjusted threshold-response function for the case of informative outcome missingness, which utilizes machine learning and Targeted Minimum-Loss Estimation (TMLE). We prove the estimator is efficient and characterize its asymptotic distribution and robustness properties. Construction of simultaneous 95\% confidence bands for the threshold-specific estimand across a set of thresholds is discussed. In the supplementary information, we discuss how to adjust our estimator when the biomarker is missing-at-random, as occurs in clinical trials with biased sampling designs, using inverse-probability-weighting. Efficiency and bias-reduction of the proposed estimator are assessed in simulations. The methods are employed to estimate neutralizing antibody thresholds for virologically confirmed dengue risk in the CYD14 and CYD15 dengue vaccine trials. 

 \keywords
 Causal Inference, Nonparametric Efficient Estimation, Stochastic Intervention, Targeted Minimum-Loss Estimation, Threshold Estimation,  Vaccine Trials.
 \endkeywords
\end{abstract}

\maketitle

\section{Introduction and previous work}

 In clinical trials, it is often of interest to identify a biomarker that is predictive of a clinical outcome of interest. In particular, in vaccine efficacy trials, it is of interest to find so-called 
 correlates of risk, such as neutralizing antibody titer, that are indicative of the risk of acquisition of disease.
If additional analyses show that such a biomarker correlate is also a valid surrogate endpoint, then it can be used to predict vaccine efficacy of new vaccines by only analyzing the biomarker, as opposed to observing the clinical endpoints. Such analysis generally requires only hundreds, rather than thousands, of participants in the vaccine study, and allows for vaccine efficacy to be assessed efficiently in terms of both economic resources and time. 
As an intermediate step toward meeting this objective, it is of interest to determine a threshold value of an immune-response biomarker that predicts a low risk of disease. While there is a large literature of statistical methods for estimating correlates of risk and protection (e.g., Chan et al., 2002; Siber et al., 2007; Callegaro et al., 2019),\nocite{Chan2002,Siberetal2007,CallegaroTibaldi2019} prominent methods rely on parametric assumptions, which suffer from misspecification when their strict assumptions are not met. Donovan, Hudgens and Gilbert (2019)\nocite{DonovanHudgensGilbert2019} proposed a nonparametric minimum-loss estimator for the so-called unadjusted threshold-response function  $E[Y \mid  A \geq v]$ with $Y$ the outcome and $A$ the biomarker, and used the nonparametric bootstrap for inferences. 
The threshold-response function can be viewed as a dose-response-like curve, which maps each threshold to the expected outcome given the biomarker is above that threshold. Unlike the nonparametric dose-response curve, the threshold-response function is $\sqrt{n}$-estimable, allowing one to construct efficient estimators and $95\%$ (simultaneous) confidence bands using standard techniques from semiparametric efficiency theory. 

Even in randomized trials, the immune-response biomarker, outcome and outcome missingness mechanisms are generally not randomized itself and thus may be confounded by baseline covariates. In order to have any hope of estimating a causal effect that is due to a threshold-based intervention on the biomarker, covariate adjustment is generally required. Typical techniques such as covariate-stratified estimation, as mentioned in Donovan et al., can lead to confounding bias due to discretization and perform worse as the number of stratification covariates increase. Thus, there is a need to generalize the estimator and estimand proposed by Donovan et al. in a way that allows for flexible and adaptive covariate adjustment. Efficient-influence-function-based causal inference methods (Bickel et al., 1993; van der Laan, Robins, 2003; van der Laan, Rose, 2011)\nocite{unified} provide general tools for efficient nonparametric estimation of such covariate-adjusted estimands, allowing for the use of machine-learning. Specifically, in this manuscript, we employ the targeted learning methodology (van der Laan, Rose, 2011)\nocite{vanderLaanRose2011}, which provides a general template for constructing efficient substitution estimators.  

We extend the previous work of Donovan et al. by allowing for the nonparametric adjustment of arbitrary baseline covariates in the presence of possibly informative outcome missingness. In addition to (1) proposing an efficient targeted minimum-loss estimator (TMLE) that generalizes the estimator of Donovan et al. for the covariate-adjusted threshold-response estimand, $E_W \left[ E  [Y \mid A \geq v,\, W \right]$, in the presence of possibly informative outcome missingness, further novel additions include: (2) Establishing rigorous causal identification results for the threshold-response parameter as a stochastic intervention; (3) Proving asymptotic efficiency and robustness properties of the proposed TMLE; (4) Providing simultaneous confidence bands. Following the treatment by Donovan et al., we discuss in Web Appendix E how to adjust our method when the biomarker is missing-at-random using inverse-probability weighting. This allows our method to be applied in studies with biased sampling designs.  

The covariate-adjusted threshold-response estimand considered in this manuscript has been considered earlier in the literature.  Notably, the threshold-response estimand with no outcome missingness is equal to the G-computation-based binary treatment-specific mean estimand (Robins et al., 1994; van der Laan, Robins, 2003; van der Laan, Rose, 2011)\nocite{robinsAIPW} where the continuous biomarker or treatment is  dichotomized into an indicator of being above or below a threshold. The idea of dichotomizing a continuous biomarker, exposure or treatment variable as an indicator of being above or below a threshold and then applying causal inference methods for binary treatment effects is not a new idea and is used, for instance, in Taubman et al. (2009). Moreover, in an unpublished manuscript, Stitelman et al. (2010)\nocite{Stitelman2010TheIO} discuss changes in causal interpretation and confounding bias due to discretizing continuous treatments and then applying binary or categorical treatment causal inference methods like the AIPW or TMLE estimator for the treatment-specific mean (van der Laan, Robins; 2003; Bang and Robins, 2005; van der Laan, Rose, 2011). In Sections 4 and 5 of their manuscript, the authors show that the binary-treatment-specific mean estimand based on dichotomizing a continuous treatment as above or below a threshold can be viewed as a stochastic intervention. In this manuscript, we provide the following novel additions to the aforementioned authors' work: (1) We show that in the case of no outcome missingness there is no loss in statistical efficiency by applying binary treatment causal inference methods to estimate the threshold-response estimand; (2) We show that when there is outcome missingness the binary-treatment-based estimators are statistically inefficient for the threshold-response estimand, and therefore there is a need to develop a new TMLE that is fully efficient. Moreover, if the outcome missingness is informed by the biomarker value then the binary treatment estimators can be inconsistent due to not fully adjusting for confounding between the continuous treatment and outcome missingness. 
 
The structure of the article is as follows. In Sections 2 and 3, we define the data structure, parameter of interest, and causal assumptions needed for the threshold-response estimand to be identified and interpreted as a stochastic intervention. In Section 4, the efficient influence function of the target parameter is given. In Section 5, we present a novel sequential-regression-based TMLE (srTMLE) and establish its theoretical properties. We also discuss how to construct simultaneous confidence intervals and in the supplementary information we discuss how to adjust the srTMLE when the biomarker variable is missing-at-random. In Section 6, we investigate through simulations the efficiency and bias-reduction of the proposed srTMLE relative to the Donovan et al. estimator and the inefficient binary-treatment specific-mean TMLE (binTMLE). In Section 7, we apply the new method to the CYD14 and CYD15 dengue vaccine trials.

\section{Notation, data structure, and estimand of interest}

Consider a study where we observe the $n$ iid realizations $O_i$ of the random variable $O = (W, A, \Delta, \Delta Y) \sim P_0$ where $P_0$ is the data-generating distribution. Here, $W \in \mathbb{R}^d$ represents baseline variables, $A \in \mathbb{R}$ is a continuous biomarker of interest measured during follow-up, $\Delta$ is an outcome missingness indicator that takes the value $1$ if the outcome is observed, and $ Y$ is a binary outcome variable of interest measured at end-of-study. We use notation $\Delta Y$ to denote that $Y$ is observed if and only if $\Delta = 1$. For example, in the vaccine trial setting, $A$ may be an immune-response biomarker measured some time during the trial, and $Y \in \{0,1\}$ may be the binary variable that takes the value 1 if the participant acquired the disease study endpoint by the end of the follow-up period, and 0 otherwise. Let $v \in \mathbb{R}$ be a given threshold in the support of $A$ such that $P_0(A \geq v \mid W) \geq \delta$ a.e. $W$ for some $\delta > 0$, and define the dichotomized biomarker $D_v = 1(A \geq v)$ and coarsened data-structure $O_v := (W, D_v, \Delta, \Delta Y)$, which will be referenced throughout this manuscript.
We assume that $P_0$ is contained in a nonparametric statistical model and let $\norm{\,\cdot\,}$ denote the $L^2(P_0)$ norm. We will denote $P_W$ as the marginal distribution of $W$, $P_{A \mid W}$ as the conditional distribution of $A \mid W$, $Q(A,W) = E_P[Y=1 \mid A, \, W,\, \Delta =1]$, $Q_v(W) = E_{P_{A \mid W}}[E_P[Y=1 \mid A, \,W, \,\Delta =1] \mid   A \geq v,\, W]$, $g_v(W) = P(A \geq v \mid W)$. Let $Q_0$, $Q_{0,v}$, $g_{0,v}$, $P_{0,W}$, $P_{0,A \mid W}$ correspond with $P = P_0$. Throughout, we will abbreviate $E_{P_0},  E_{P_{0,A \mid W}}, E_{P_{0,W}}$ as $E_0, E_{0,A \mid W}, E_{0,W}$. Occasionally, we will use the empirical process notation: $P_0 f = E_{0} f(O)$  and $P_n f = \frac{1}{n} \sum_{i=1}^n f(O_i)$ for a function $o \mapsto f(o)$. 

\noindent Our estimand of interest is
\begin{equation}
\Psi_{v,0}^{adj}  = E_{0,W}\left( E_0[Y \mid  A \geq v,\,W] \right)  = E_{0,W} \left( E_{0,A\mid W}   \left[ E_0 \left[Y \mid A,\, W \right] \mid A\geq v,\, W \right] \right). \label{equation::fulldataestimand}
\end{equation}
To ensure the estimand is well-defined and identified from the observed data-generating distribution, we will make the following assumptions on $P_0$.
\begin{enumerate}[series=conditions,label=({A}{\arabic*})]
\item There exists $\delta > 0$ s.t. $P_0(A \geq v  \mid W) > \delta$ and $P_0(\Delta = 1 \mid A,\, W) > \delta$ a.e. $A, \, W$
\item $\Delta \independent Y \mid A,\, W$ ($Y$ is missing-at-random).
\end{enumerate}

Assumption (A1) consists of standard overlap/positivity conditions that ensure the estimands are well-defined. Specifically, the assumption ensures that there is a positive probability of observing a biomarker value above the threshold within all strata of $W$ and a positive probability of observing the outcome $Y$ within all strata of $(W,A)$. Assumption (A2) is a missing-at-random assumption that ensures that the outcome missingness is noninformative conditional on $W$ and $A$ and is also commonly made for missing-data problems (van der Laan, Robins, 2003).

 \noindent Motivated by the latter form of the estimand given in Equation (\ref{equation::fulldataestimand}), we define the target parameter $\Psi_v^{adj}: \mathcal{M} \mapsto \mathbb{R}$ 
\begin{equation}
    \Psi_v^{adj}(P) = E_{P_W}\left( E_{P_{A\mid W}} \left[ E_P \left[Y \mid A,  \, W,   \, \Delta =1 \right] \mid A \geq v,  \, W  \right]  \right) \label{equation::param}.
\end{equation}
Under assumptions (A1) and (A2), the observed-data estimand $\Psi_v^{adj}(P_0)$ identifies the estimand given in Equation (\ref{equation::fulldataestimand}). We will call this estimand the (adjusted) \textit{threshold-response} at the threshold $v$, and we will call the map $v \mapsto \Psi_v^{adj}(P_0)$ the \textit{threshold-response function}. In the special case where the missingness $\Delta$ is not informed by the marker $A$ conditional on $W$ and $A \geq v$, i.e. $Y \independent \Delta \mid W,\, A \geq v$, Equation (\ref{equation::param}) reduces to the well-known binary treatment-specific mean estimand: $\Psi_v^{adj}(P) = E_{P_W} \left[E_P\left[Y \mid A \geq v, \, W,  \,\Delta = 1 \right] \right] = E_{P_W} \left[E_P\left[Y \mid D_v = 1, \, W,  \,\Delta = 1 \right] \right]$. Notably, the simpler form of the estimand is determined by the data-generating distribution of the coarsened data-structure $O_v=(W, D_v, \Delta, \Delta Y)$. We will refer to this case as the quasi-informative missingness case, which includes the case of no outcome missingness. We define the unadjusted threshold-response function $\Psi_v^{unadj}(P) = E_P[Y \mid A \geq v, \,\Delta = 1].$ \noindent In the case where $W$ is independent of $A$, the unadjusted and adjusted threshold response functions are equal.

\section{Causal interpretation and identification of the threshold-response function as a stochastic intervention}

Since we have already presented conditions under which the outcome missingness estimand implied by Equation (\ref{equation::param}) identifies the estimand given in Equation (\ref{equation::fulldataestimand}), we restrict ourselves to the case of no outcome missingness and consider the estimand given in Equation (\ref{equation::fulldataestimand}). Informally and under conditions, the threshold-response estimand can be viewed as the expected outcome under an intervention through the continuous biomarker $A$ that sets the dichotomized biomarker or binary ``treatment" variable $D_v = 1(A \geq v)$ to $1$ for a given individual. Moreover, this intervention is such that an individual with $A \geq v$ is not intervened upon. From this point of view, the threshold-response estimand is quite similar to standard interventional estimands for binary treatments (Robins et al., 1994; van der Laan, Robins, 2003; van der Laan, Rose, 2011). However, the intervention mechanism associated with the threshold-response estimand acts through the continuous marker $A$, and, in fact, there are many different interventions on individuals with $A<v$ preintervention that set $D_v \equiv 1$. Since the distribution of the outcome $Y$ generally depends on the continuous values of $A$ and not only the binary indicator $D_v$, different interventions that set $D_v \equiv 1$ can lead to different expected counterfactual outcomes (Stitelman et al., 2010).

To rigorously formulate the notion of an intervention, we consider the following causal model. Let $U_W, U_A, U_Y$ be exogenous random variables, and let $f_W, f_A, f_Y$ be deterministic functions. We define the underlying causal data structure to be $O_{causal} = (W, A, Y, U_W, U_A , U_Y) \sim P_{0,c }$.  We assume that $O = (W,A,Y)$ is causally generated by the following nonparametric structural equations model (NPSEM) (Pearl, 2009)\nocite{Pearl2009},
$$W = f_W(U_W), \, A = f_A(W, U_A),\, Y = f_{Y}(A, W, U_Y).$$
Following the formulation given in Pearl (2009) and the notation of D\'iaz et al. (2021), we define an intervention on $A$ as a rule that maps a realization $(w,a)$ of $(W,A)$ to a possibly randomized biomarker value $d_0(w,a) \in \text{support}(A)$ that may depend on $P_0$. For a given observation $O = (W,A,Y)$, the counterfactual outcome associated with the intervention $d_0$ is $Y_{d_0} := f_Y(d_0(W,A), W, U_Y)$. 
If $A \in \{0,1\}$ were binary then $d_{v=1}(a) := 1(a \geq 1) \equiv 1(a = 1)$ is a valid intervention that corresponds with the standard binary treatment intervention and $Y_1 := Y_{d_{v=1}}$ would be the familiar counterfactual outcome associated with the treatment assignment $A \equiv 1$. 


The threshold-response estimand given in Equation (\ref{equation::fulldataestimand}) can be viewed as the expected outcome under a randomized intervention $\widetilde{d}_{0,v}$ on $A$ that satisfies $1(\widetilde{d}_{0,v}(A,W) \geq v) =1$ and is partially stochastic conditional on $(A,W)$. Specifically,  $\widetilde{d}_{0,v}$ is given by
$\widetilde{d}_{0,v}(w,a) := 1(a < v) \cdot  F_{0,v}^{-1}(Z \mid w) + 1(a\geq v) \cdot a$
where $F_{0,v}$ is the conditional CDF $s \mapsto P_0(A \leq s \mid W=w,\, A \geq v)$ and $Z \sim P_Z$ is a randomizer that is uniformly distributed on $[0,1]$ and independent of $O$. Noting that $F_{0,v}^{-1}(Z \mid w)$ is exactly an independent draw from the conditional CDF $F_{0,v}$, we are guaranteed that the assigned interventional marker value is always above the threshold $v$. 
\begin{remark}
In the special case where $W$ is discretely valued, the above interventional value $\widetilde{d}_{0,v}(w,a)$ can be computed in a more intuitive way as follows. Consider an infinitely-large target population consisting of realizations of $(W,A,Y) \sim P_0$. For a given individual $ O= (W,A,Y)$ with $(W=w,\,A=a)$ and $a < v$, uniformly-at-random match this individual with a member $O_{match,v}(w)$ of the population with the same baseline covariates $W_{match,v}=w$ but with biomarker value $A_{match,v}(w) \geq v$. After matching, assign the interventional biomarker value of the individual $O$ to the observed biomarker value of the matching individual $O_{match,v}(w)$. For individuals in the sample with $A \geq v$, no intervention is performed. In this case, the random interventional value can be written as $\widetilde{d}_{0,v}(A,W) = A \cdot 1(A \geq v) + A_{match,v}(W)\cdot 1(A < v)$.
\end{remark}

Back to the general case, the expected outcome under the intervention $\widetilde{d}_{0,v}$ is given by
$\Psi_{v,causal}(P_{0,c}) := E_{P_{0,c} \times P_Z}[Y_{\widetilde{d}_{0,v}(W,A)}] = E_{P_{0,c} \times P_Z}[f_Y(\widetilde{d}_{0,v}(W,A), W, U_Y)],$
and it is identified from $P_0$ by the estimand given in Equation (\ref{equation::fulldataestimand}) under the first part of assumption (A1) and the following additional assumption.
 \begin{enumerate}[resume*=conditions]
\item $U_A \independent  U_Y  \mid W $ $\Big(\text{or } Y \mid \{A=a, W  \} =_d Y_a \mid \{A=a,W\}  \text{ \& } Y_a \independent A \mid W$ for all $a \geq v\Big)$   
\end{enumerate}
Viewing $P(A\geq v \mid W)$ as a threshold-specific propensity score, the first part of assumption (A1) is analogous to the overlap assumption needed for the identification of binary treatment causal effects (Rubin, 2000; Pearl, 2009)\nocite{Rubin2000}. Notably, the overlap condition is fairly mild relative to those for other stochastic interventions on continuous variables (see, e.g. D\'iaz et al., 2021). It only requires that there is some mass assigned to the interval $[v, \infty)$ for all strata of $W$ and no conditions are imposed on how this mass is distributed. Assumption (A3) requires that there are no unmeasured confounders between $A$ and $Y$ and both versions are similar to the assumptions sufficient for identification of other stochastic interventions (D\'iaz et al., 2021)\nocite{diaz2021nonparametric}.

\section{Efficient influence function of the parameter of interest and inefficiency of binary-treatment-based estimators } \label{section::EIF}
A key object necessary for constructing efficient estimators for an estimand is the efficient influence function (EIF), which is uniquely defined given a statistical model and target parameter. The efficient influence function is essential for characterizing the asymptotic distribution of an efficient estimator. Specifically, given the data-generating distribution $P_0$ contained in the nonparametric statistical model $\mathcal{M}$, the parameter $\Psi_v^{adj}$ and its efficient influence function  $D_{P,v}$ indexed by $P \in \mathcal{M}$ and a threshold $v \in \mathbb{R}$, the optimal $\sqrt{n}$-scaled and centered asymptotic distribution among all asymptotically linear and regular (w.r.t. $\mathcal{M}$) estimators of $\Psi_v^{adj}(P_0)$ is mean-zero normally distributed with variance $E_{0} D_{P_0,v}(O)^2$.  For a more detailed account of the theory of efficient influence functions and semiparametric efficiency theory, we refer to Bickel et al. (1993)\nocite{bickel1993efficient}. The derivation of the EIFs can be found in Web Appendix A. 
\begin{lemma}
The efficient influence function of the parameter $\Psi_{v}^{adj}$ is
$$D_{P,v}(W,A, \Delta, \Delta Y) =  \frac{1(A \geq v)}{P(A \geq v  \mid W)}\frac{\Delta }{P(\Delta=1 \mid A,W)}\left(Y -E_P[Y \mid  A, W, \Delta = 1]\right) $$
$$+\left( E_P[Y \mid  A, W, \Delta = 1] -  E_{P_{A\mid W}}[E_P[Y \mid A, W, \Delta = 1] \mid A\geq v, W]\right) \frac{1(A \geq v)}{P(A \geq v  \mid W)}$$
$$+  E_{P_{A\mid W}}[E_P[Y \mid A, W, \Delta = 1] \mid A\geq v, W] - \Psi_{v}^{adj}(P).$$ \label{lemma::EIF}
\end{lemma}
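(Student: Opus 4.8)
The plan is to obtain the efficient influence function as the canonical gradient of $\Psi_v^{adj}$ in the nonparametric model, via the standard argument: write the pathwise derivative of $\epsilon\mapsto\Psi_v^{adj}(P_\epsilon)$ along an arbitrary regular one-dimensional submodel $\{P_\epsilon\}\subset\mathcal M$ with score $s$ in the form $E_P[D_{P,v}(O)\,s(O)]$, and then, since the tangent space of a nonparametric model is all of $L^2_0(P)$, conclude that the representer $D_{P,v}$ is the unique efficient influence function. First I would factorize the observed-data density of $O=(W,A,\Delta,\Delta Y)$ as $p_W(w)\,p_{A\mid W}(a\mid w)\,p_{\Delta\mid A,W}(\delta\mid a,w)\,\big[p_{Y\mid A,W,\Delta=1}(y\mid a,w)\big]^{\delta}$ --- simply the chain rule for the law of $(W,A,\Delta,\Delta Y)$, with the outcome factor contributing only when $\Delta=1$ --- which requires neither (A1) nor (A2); positivity (A1) is needed only to guarantee that the resulting influence function is square-integrable, and (A2) plays no role in this computation. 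The factorization yields the orthogonal decomposition $L^2_0(P)=T_W\oplus T_{A\mid W}\oplus T_{\Delta\mid A,W}\oplus T_{Y\mid A,W,\Delta=1}$, and it suffices to compute the projection of $D_{P,v}$ onto each summand.

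Writing the parameter in nested form $\Psi_v^{adj}(P)=E_{P_W}[Q_v(W)]$ with $Q_v(W)=g_v(W)^{-1}E_{P_{A\mid W}}[1(A\ge v)\,Q(A,W)\mid W]$, $Q(A,W)=E_P[Y\mid A,W,\Delta=1]$, and $g_v(W)=P(A\ge v\mid W)$, I note that $\Psi_v^{adj}$ depends on $P$ only through $P_W$, $P_{A\mid W}$, and $Q$, and not through the missingness mechanism $P_{\Delta\mid A,W}$; hence $D_{P,v}$ is orthogonal to $T_{\Delta\mid A,W}$ and that summand contributes nothing. For the other three summands I would take a submodel with score $s=s_W(W)+s_{A\mid W}(A,W)+s_{\Delta\mid A,W}(\Delta,A,W)+\Delta\,s_Y(Y,A,W)$ and differentiate through the composition $P\mapsto Q$, $(P_{A\mid W},Q)\mapsto Q_v$, $(P_W,Q_v)\mapsto E_{P_W}[Q_v]$ using the chain rule for pathwise derivatives. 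The outer $P_W$-expectation produces $Q_v(W)-\Psi_v^{adj}(P)$ paired with $s_W$. Perturbing $Q$ gives $\partial_\epsilon Q=E_P[(Y-Q(A,W))\,s_Y\mid A,W,\Delta=1]$, and inserting the inverse-probability-of-censoring weight $\Delta/P(\Delta=1\mid A,W)$ to convert this $\{\Delta=1\}$-conditional expectation into an unconditional one, then composing with the operator $g_v^{-1}E[1(A\ge v)\,\cdot\mid W]$, yields the first displayed term of the lemma paired with $\Delta s_Y$. Perturbing $P_{A\mid W}$ yields the second displayed term paired with $s_{A\mid W}$.

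The step I expect to demand the most care is the differentiation of $Q_v(W)$ with respect to the $A\mid W$ component, because $Q_v(W)=g_v(W)^{-1}E_{P_{A\mid W}}[1(A\ge v)\,Q(A,W)\mid W]$ is a \emph{ratio} whose numerator and denominator both move with $\epsilon$: applying the quotient rule and simplifying $\partial_\epsilon(\mathrm{num})/g_v - (\mathrm{num}/g_v^{2})\,\partial_\epsilon g_v$ is exactly what manufactures the clever-covariate factor $1(A\ge v)/g_v(W)$ together with the recentering of $Q(A,W)$ by its truncated conditional mean $Q_v(W)$ in the second term, and one must take care not to double-count the cross-term arising from perturbing $Q$ inside the definition of $Q_v$. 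Once the three pieces are assembled into the candidate $D_{P,v}$, I would verify it is mean zero under $P$ --- each term vanishes in expectation by iterated conditioning (over $\Delta,Y\mid A,W$ for the first, over $A\mid\{A\ge v\},W$ for the second, and over $W$ for the third) --- and then, since $D_{P,v}\in L^2_0(P)$ and $L^2_0(P)$ is the full tangent space, conclude by uniqueness of the gradient that $D_{P,v}$ is the efficient influence function. As a sanity check I would specialize to the quasi-informative-missingness case, in which $D_{P,v}$ should collapse to the familiar augmented inverse-probability-weighted influence function for the binary treatment-specific mean based on the coarsened exposure $D_v=1(A\ge v)$.
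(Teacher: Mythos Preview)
Your proposal is correct and follows essentially the same route as the paper's derivation in Web Appendix~A: factorize the likelihood, hold the missingness mechanism fixed (since $\Psi_v^{adj}$ does not depend on it), differentiate along a one-dimensional submodel with score decomposed into the $W$-, $A\mid W$-, and $Y\mid A,W,\Delta=1$-components, and read off the three orthogonal pieces of the canonical gradient---the quotient-rule step you flag is exactly where the paper's direct differentiation of the integral produces the recentering of $Q(A,W)$ by $Q_v(W)$. One caveat on your closing sanity check: the paper shows the EIF collapses to the coarsened binary-treatment EIF only when $\Delta\equiv 1$, \emph{not} in the quasi-informative case $\Delta\independent A\mid W$ with genuine missingness---the failure of that collapse is precisely the efficiency-loss point made after Lemma~\ref{lemma::EIF}, so your sanity check as stated would (correctly) fail.
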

When there is no outcome missingness, i.e. $\Delta \equiv 1$, the EIF reduces to the efficient influence function for the binary treatment-specific mean $E_{P_W} E_{P}[Y \mid D_v=1,\, W]$ where $D_v = 1(A \geq v)$ plays the role of the binary treatment (van der Laan, Robins, 2003; Bang and Robins, 2005; van der Laan, Rose; 2011). For the case of no outcome missingness, this establishes that an estimator for the threshold-response estimand that is nonparametric-efficient w.r.t. the coarsened data-structure $(W, D_v, Y)$ is also efficient with w.r.t. the observed data-structure $(W, A, Y)$. On the other hand, when there is outcome missingness that is only informed by $W$ and not $A$ (so that the estimand reduces to $E_{P_W} E_P[Y| A \geq v, W, \Delta = 1]$), the first two terms of the EIF of Lemma \ref{lemma::EIF} does not reduce to the relevant EIF component for the coarsened data-structure $(W, D_v, \Delta, \Delta Y)$, given by $\frac{\Delta 1(A \geq v)}{P(A \geq v \mid W)P(\Delta = 1\mid W)} \left\{Y - E_P[Y \mid A \geq v,\, W,\, \Delta =1] \right\}$, which implies that estimators based on the coarsened data-structure $O_v = (W, D_v, \Delta, \Delta Y)$ are statistically inefficient with respect to the nonparametric statistical model $\mathcal{M}$. The loss in efficiency of the binary treatment methods is largely driven by how much more $E_P[Y\mid A,\,W,\, \Delta = 1]$ is predictive of $Y$ than $E_P[Y\mid A \geq v,\,W,\, \Delta = 1]$ among individuals with $A \geq v$ (Moore, van der Laan, 2009)\nocite{MooreAdjustment}.

\section{Methodology}
For efficient estimation of the threshold-response function, we employ Targeted Minimum-Loss Based Estimation (TMLE) (van der Laan, Rose, 2011). TMLE is a two-step approach for constructing nonparametric and often double-robust efficient substitution estimators for pathwise-differentiable parameters of interest. TMLE is closely related to the one-step and estimating equation methodology (Bickel et al., 1993; van der Laan, Robins, 2003; Bang and Robins, 2005) in that it utilizes the efficient influence function in its estimation procedure, but it does so in a way that ensures the resulting estimator is a substitution estimator, which can lead to additional robustness and improved performance in certain finite-sample settings (Porter et al., 2011). \nocite{PorterLaanPerform} In short, given an initial estimator $P_{n,0} \equiv (P_{n,W}, Q_n, Q_{n,v}, g_{n,v}, G_n)$ of $P_0 \equiv (P_{0,W}, Q_0, Q_{0,v}, G_{0,v}, G_0)$, the TMLE procedure performs minimum-loss estimation with loss $(P, O) \mapsto L(P, O)$ over a data-adaptive parametric submodel $P_{n, \varepsilon}$ through $P_{n,0}$ that points in the direction of maximal change of the parameter $\Psi_v^{adj}$. We call the resulting risk minimizer $P_{n,v}^*$ the TMLE of $\Psi_v^{adj}$. The submodel and choice of loss function have the key property that the ``likelihood" scores of the risk function along the submodel equals the empirical mean of the efficient influence function of $\Psi_{v}^{adj}$. That is, $\frac{d}{d\varepsilon} P_n L(P_{n,\varepsilon}) = P_n D_{P_{n,\varepsilon}, v}.$ As a consequence, the TMLE, which is the risk minimizer along this submodel, necessarily solves the efficient score equation: $P_n D_{P_{n,v}^*, v} = 0$. This implies $\Psi_v^{adj}(P_{n,v}^*) = \Psi_v^{adj}(P_{n,v}^*) + P_n D_{P_{n,v}^*, v}$, and therefore the substitution estimator using $P_{n,v}^*$ equals the one-step efficient estimator (Bickel et al., 1993) and inherits its asymptotic optimality properties. We refer to Gruber, van der Laan (2009) and van der Laan, Rose (2011) for a more thorough introduction and overview of TMLE. \nocite{TMLEintro}


\subsection{Targeted Minimum Loss Estimator}
We present a sequential-regression estimator (srTMLE) for the threshold-response estimand that applies in the presence of informative outcome missingness. We also provide an inefficient TMLE (binTMLE) which is a consistent estimator in the case of quasi-informative outcome missingness: $A \independent \Delta  \mid W$. The binTMLE is equal to the binary treatment-specific mean TMLE (van der Laan, Rose, 2011\nocite{vanderLaanRose2011}) where the binary treatment is $D_v = 1(A \geq v)$, and is therefore also asymptotically equivalent to the analogous one-step efficient AIPW estimator (see, for instance, Bang and Robins, 2005; van der Laan, Robins, 2003). The binTMLE is statistically inefficient when the missingness is informed by $W$ and inconsistent when informed by $W$ and $A$. In Web Appendix F, we discuss how our proposed estimators can be adjusted when there is missingness (e.g. due to biased sampling) in the biomarker $A$. These methods can also be applied with minor modifications when the outcome $Y$ is bounded and continuous (see Web Appendix I).
 
\subsubsection{A novel efficient sequential-regression-based TMLE} \label{section::seqregTMLE}

The sequential-regression TMLE (srTMLE) requires initial estimation of the nuisance parameters $Q_0(a,w) = E_{0}[Y \mid  A=a,\,W=w,\, \Delta = 1]$, $Q_{0,v}(w) = E_{0, A \mid W}\left[ Q_0(A,W) \mid A \geq v, W=w \right]$, $G_0(a,w) = P_0(\Delta = 1 \mid  A=a,\, W=w)$, and $g_{0,v}(w) = P_0(A \geq v  \mid W=w).$ Denote their respective estimators $Q_n$, $G_n$ and $g_{n,v}$, and let $P_{W,n}$ be the empirical estimator of the marginal distribution of $W$. Our initial estimator of (the relevant parts of) $P_0$ is given by $P_{n,v,0} := (P_{W,n}, g_{n,v}, Q_{n}, G_n)$.
The srTMLE is defined as follows.
\begin{enumerate}
    \item Define the indicator fluctuation submodel
$Q_{n, \varepsilon} (A,W)= \text{expit} \left \{\text{logit}(Q_{n})(A,W) + \varepsilon 1(A \geq v)  \right\}. $ 
    \item The MLE along this submodel is given by
    $$\hat \varepsilon_n = \argmax_{\varepsilon \in \mathbb{R}} \frac{1}{n} \sum_{i=1}^n  \frac{\Delta_i }{g_{n,v}(W_i) G_n(A_i,W_i)} \left \{ Y_i \cdot \log Q_{n, \varepsilon}(A_i, W_i) + (1- Y_i) \cdot \log (1 - Q_{n,  \varepsilon}(A_i, W_i)) \right\}.$$
    \item Define the updated estimate of $Q_0$ as $Q_{n}^* = Q_{n, \hat \varepsilon_n}$.
    \item Obtain an initial estimator $Q_{n,v}$ of $Q_{0,v}(W) = E_{P_0}[E_{P_0}[Y \mid A, W, \Delta = 1] \mid  A \geq v, W] $ using sequential regression (e.g. estimate $E[Q_n^*(A,W) \mid  A \geq v, W]$).
    \item Define the intercept fluctuation submodel,
$Q_{n, v, \varepsilon}= \text{expit} \left \{\text{logit}(Q_{n,v}) + \varepsilon  \right\}. $
    \item The MLE along this submodel is given by $Q_{n,v}^* = Q_{n,v, \hat \varepsilon_n}$ where
    $\hat \varepsilon_n =$
$$\argmax_{\varepsilon \in \mathbb{R}} \frac{1}{n} \sum_{i=1}^n  \frac{1(A_i \geq v)}{g_{n,v}(W_i)} \left \{ Q_{n}^*(A_i,W_i) \log Q_{n, v, \varepsilon}(W_i) + (1- Q_n^*(A_i,W_i)) \log (1 - Q_{n,  v, \varepsilon}(W_i)) \right\}.$$
    \item Let $P_{n,v}^* := (P_{W,n}, g_{n,v}, G_n, Q_{n,v}^*, Q_n^*)$. The TMLE of $\Psi_{v}^{adj}(P_0)$ is then given by the substitution estimator
$\Psi_{v}^{adj}(P_{n,v}^*) = E_{P_{W,n}}  Q_{n,v}^*(W)= \frac{1}{n} \sum_{i=1}^n Q_{n,v}^*(W_i).$
\end{enumerate}
Step (2) (resp. (6)) requires performing the IPW-weighted logistic regression of $Y$ (resp. $Q_n(A,W)$) on $1(A \geq v)$ (resp. (1)) with offset being an initial estimator of $Q_n(A,W)$ (resp. $Q_{n,v}(W)$).
In step (5), the initial estimator of $Q_{0,v}$ can be obtained by treating $Q_n^*(A,W)$ as a pseudo-outcome and then performing the nonparametric regression of $Q_n^*(A,W)$ on $W$ using only the observations $O_i$ with $A_i \geq v$. The key property of the srTMLE is that the targeted estimators $Q_{n}^*$ and $Q_{n,v}^*$ solve the following score equations: $\frac{1}{n} \sum_{i=1}^n \frac{1(A_i \geq v, \Delta_i=1)}{g_{n,v}(W_i) G_n(A_i, W_i)} (Y_i - Q_{n}^*(A_i, W_i)) = 0$
and
$\frac{1}{n} \sum_{i=1}^n \frac{1(A_i \geq v)}{g_{n,v}(W_i) } (Q_{n}^*(A_i, W_i) - Q_{n,v}^*(W_i)) = 0,$
which implies that the efficient score equation is solved: $P_n{D}_{P_{n,v}^*,v} = 0.$

\subsubsection{Inefficient TMLE for quasi-informative missingness}
\label{section::binaryTMLE}
 The inefficient binary-treatment-based TMLE (binTMLE) is consistent, but inefficient, if  $A \independent \Delta \mid W, A \geq v$. We use the same notation as in the previous section except let $G_{n,v}(w)$ be an estimator of $P_0(\Delta = 1 \mid A \geq v, \, W =w)$ and $Q_{n,v}(w)$ be an estimator of $E_0[Y \mid A \geq v, \, W , \, \Delta = 1]$ (which equals $Q_{0,v}$ under the assumption that $A \independent \Delta \mid W, A \geq v$).
\begin{enumerate}
    \item Define the intercept fluctuation submodel
    $Q_{n, v, \varepsilon} = \text{expit} \left \{\text{logit}(Q_{n,v}) + \varepsilon  \right\}. $
    \item Define the MLE along this submodel $Q_{n,v}^* = Q_{n,v, \hat \varepsilon_n}$ by
$$\hat \varepsilon_n = \argmax_{\varepsilon \in \mathbb{R}} \frac{1}{n} \sum_{i=1}^n \frac{\Delta_i 1(A_i \geq v)}{G_{n,v}(W_i) g_{n,v}(W_i)} \left \{ Y_i \cdot \log Q_{n, v, \varepsilon}(W_i) + (1- Y_i) \cdot \log (1 - Q_{n, v, \varepsilon}(W_i)) \right\}.$$
    \item Let $P_{n,v}^* = (P_{W,n}, g_{n,v}, G_{n,v}, Q_{n,v}^*)$. The binTMLE for $\Psi_v^{adj}(P_0)$ is given by the substitution estimator
    $\Psi_v^{adj}(P_{n,v}^*) = \frac{1}{n} \sum_{i=1}^n Q_{n,v}^*(W_i).$
\end{enumerate}

\subsection{Asymptotic inference with the srTMLE}

In this section, we present a general theorem that characterizes the asymptotic behavior of the srTMLE defined in Section \ref{section::seqregTMLE}. An analogous result for the binTMLE defined in section \ref{section::binaryTMLE} follows from van der Laan, Rose (2011). Let $P_{n,v}^* = (P_{W,n}, g_{n,v}, G_n, Q_n,  Q_{n,v}^*)$ denote the targeted nuisance estimates for the srTMLE of $\Psi^{adj}_v(P_0)$. Let $K \subset \mathbb{R}$ be a compact set. In order for the srTMLE to be asymptotically linear and efficient, we require the following regularity conditions on the initial estimators and nuisance functions.  
\begin{enumerate}[series=conditions, label=({B}{{\arabic*}})]
    \item  $\delta < P_0(Y = 1 \mid A \geq v) < 1- \delta$ for some $\delta > 0$.
    \item $g_{0,v}, G_{0} > \delta$ and $g_{n,v}, G_n > \delta$ with probability tending to 1 for some $\delta >0$.
    \item  The set of realizations of $w \mapsto g_{n,v}(w)$, $(a,w) \mapsto G_n(a,w)$, $(a,w) \mapsto Q_n(a,w)$, and $w \mapsto Q_{n,v}(w)$ are $P_0$-Donsker.  
    \item  $\norm{Q_{n} - Q_{0}} = o_P(n^{-1/4})$, $\norm{Q_{n,v} - Q_{0,v}} = o_P(n^{-1/4})$, $\norm{g_{n,v} - g_{0,v}}  = o_P(n^{-1/4})$, $\norm{G_{n} - G_{0}}  = o_P(n^{-1/4})$.
    \item  The union over all $v \in K$ of the set of realizations of the functions given in B3 is $P_0$-Donsker. 
\end{enumerate}

Condition B1 requires that the random outcome $Y$ is not degenerate on the event $\{A \geq v\}$, which ensures that the efficient influence function is nonvanishing. This condition is required for the srTMLE to have an asymptotic distribution after centering and scaling by $\sqrt{n}$ but is not required for $\sqrt{n}$-consistency of the srTMLE. Condition B2 is a standard positivity assumption needed for the estimand of interest and srTMLE procedure to be well-defined. Condition B3 requires that the estimators of the nuisance parameters are well-behaved with their realizations falling in a not-too-complex function space. More aggressive algorithms like 
random forests and gradient boosting are prone to overfitting when not properly tuned, which can lead to a violation of this condition. However, by cross-fitting the nuisance estimators and employing CV-TMLE (van der Laan, Robins, 2011; Chernozhukov et al., 2016)\nocite{DoubleML}, this condition can be removed entirely, allowing one to safely employ such estimators (see references for more discussion). Condition B4 requires the nuisance estimators converge fast enough to the nuisance parameters. Both condition B3 and B4 are satisfied under smoothness conditions by a number of estimators including risk minimizers over reproducing Kernel Hilbert spaces (RKHS), neural networks with VC dimension that does not grow too fast with sample size (Farrell et al., 2018), generalized additive models, and the Highly Adaptive Lasso estimator (Benkeser and van der Laan, 2016). Condition B5 ensures that the nuisance estimator realizations and nuisance parameters for all thresholds $v \in K$ fall in a single controlled function class. This condition is trivially satisfied for the nuisance estimators whenever condition B3 is satisfied and $K$ is finite. Otherwise, this can usually be enforced by pooling the estimation across the thresholds and then employing a machine-learning algorithm with $P_0$-Donsker realizations (e.g. use pooled logistic/linear regression). Alternatively, this can be enforced by ensuring that all nuisance estimators for each $v \in K$ fall in a single Donsker function class (e.g. functions of bounded variation or a reproducing kernel Hilbert space with uniformly bounded Hilbert space norm). This can be guaranteed by estimating the nuisance functions with the Highly Adaptive Lasso estimator (Benkeser, van der Laan, 2016) as long as the nuisance functions have a uniformly bounded variation norm. 

\begin{theorem}
Suppose conditions B1, B2, B3 and B4 hold. Then, the srTMLE estimator $\Psi_v^{adj}(P_{n,v}^*)$ satisfies
$$\sqrt{n} (\Psi_v^{adj}(P_{n,v}^*) - \Psi_v^{adj}(P_0)) = n^{-1/2} \sum_{i=1}^n D_{P_0,v}(W_i, A_i, \Delta_i, \Delta_i Y_i) + o_P(1).$$
If only B1 is violated then one has $(\Psi_v^{adj}(P_{n,v}^*) - \Psi_v^{adj}(P_0)) = o_P(n^{-1/2})$. If in addition assumption B4 holds uniformly for all $v$ in the bounded set $K \subset \mathbb{R}$ and assumption B5 holds, then $\left( \sqrt{n} (\Psi_v^{adj}(P_{n,v}^*) - \Psi_v^{adj}(P_0)): v \in K\right)$ converges to a tight mean-zero Gaussian process in $l^\infty(K)$ with covariance function $\rho(v_1,v_2) = P_0 D_{P_0,v_1} D_{P_0,v_2}.$

\end{theorem}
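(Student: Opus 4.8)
The plan is to follow the standard analysis of a TMLE, exploiting that by construction the srTMLE solves the efficient score equation exactly, $P_n D_{P_{n,v}^*, v} = 0$ (Section~\ref{section::seqregTMLE}). Let $R_v(P, P_0) := \Psi_v^{adj}(P) - \Psi_v^{adj}(P_0) + P_0 D_{P, v}$ denote the second-order remainder of the von Mises expansion of $\Psi_v^{adj}$. Adding and subtracting $P_n D_{P_{n,v}^*, v} = 0$ gives the exact identity
$$\Psi_v^{adj}(P_{n,v}^*) - \Psi_v^{adj}(P_0) = (P_n - P_0) D_{P_{n,v}^*, v} + R_v(P_{n,v}^*, P_0).$$
A direct computation (carried out alongside the EIF derivation in Web Appendix~A) shows that $R_v(P_{n,v}^*, P_0)$ is a sum of doubly robust second-order terms, each bounded in absolute value by an integral against $P_0$ of a product of two nuisance errors with a weight that is bounded under the positivity condition B2 --- controlled, for instance, by $\int \frac{|g_{0,v} - g_{n,v}|}{g_{n,v}}\, |Q_{0,v} - Q_{n,v}^*| \, dP_0$ and $\int \frac{1(A \geq v)}{g_{n,v}}\, \frac{|G_0 - G_n|}{G_n}\, |Q_0 - Q_n^*| \, dP_0$. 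By Cauchy--Schwarz this is bounded by a constant multiple of $\norm{g_{n,v} - g_{0,v}}\,\norm{Q_{n,v}^* - Q_{0,v}} + \norm{G_n - G_0}\,\norm{Q_n^* - Q_0} + \cdots$. The logistic fluctuation amounts $\hat\varepsilon_n$ solve smooth score equations whose value at $\varepsilon = 0$ is $o_P(n^{-1/4})$ under B3--B4 (an $O_P(n^{-1/2})$ empirical part plus an $o_P(n^{-1/4})$ bias from B4), so $\hat\varepsilon_n = o_P(n^{-1/4})$ by inverting the smooth score, and hence $\norm{Q_{n,v}^* - Q_{0,v}}$ and $\norm{Q_n^* - Q_0}$ are $o_P(n^{-1/4})$; combined with B4 this yields $R_v(P_{n,v}^*, P_0) = o_P(n^{-1/2})$.

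For the empirical process term I would write $(P_n - P_0) D_{P_{n,v}^*, v} = (P_n - P_0) D_{P_0, v} + (P_n - P_0)(D_{P_{n,v}^*, v} - D_{P_0, v})$. The first summand is exactly $n^{-1/2}\sum_{i=1}^n D_{P_0, v}(O_i)$ (recall $P_0 D_{P_0, v} = 0$), the desired linear term. For the second, note that $D_{P, v}$ is a fixed Lipschitz transformation --- sums, products, and ratios with denominators bounded away from zero by B2, together with the fixed indicator $1(A \geq v)$ --- of the nuisance realizations, whose set is $P_0$-Donsker by B3; since $\hat\varepsilon_n = o_P(1)$, the two logistic fluctuation steps keep $Q_n^*$ and $Q_{n,v}^*$ in a fixed Donsker class with probability tending to one, so $D_{P_{n,v}^*, v}$ lies in a fixed $P_0$-Donsker class with probability tending to one, while $\norm{D_{P_{n,v}^*, v} - D_{P_0, v}} = o_P(1)$ by the consistency part of B4 and $\hat\varepsilon_n = o_P(1)$. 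A standard empirical-process equicontinuity argument then gives $(P_n - P_0)(D_{P_{n,v}^*, v} - D_{P_0, v}) = o_P(n^{-1/2})$, which proves the first display. For the second claim, if only B1 is violated then $P_0(Y = 1 \mid A \geq v) \in \{0,1\}$, i.e.\ $Y$ is $P_0$-a.s.\ constant on $\{A \geq v\}$; using the missing-at-random assumption A2, this forces $Q_0(A, W) = Q_{0,v}(W) = \Psi_v^{adj}(P_0)$ for a.e.\ $(A,W)$ with $A \geq v$, so every term of $D_{P_0, v}$ vanishes, the linear term is identically zero, and since the remainder and equicontinuity arguments above used only B2--B4, the identity yields $\Psi_v^{adj}(P_{n,v}^*) - \Psi_v^{adj}(P_0) = o_P(n^{-1/2})$.

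For the process statement I would upgrade both $o_P(n^{-1/2})$ bounds to hold uniformly over $v \in K$. The remainder bound is uniform because B4 is assumed uniform over $K$ and the constants from B2 and Cauchy--Schwarz are free of $v$. The empirical process bound becomes uniform by B5 (the union over $v \in K$ of the sets of nuisance realizations from B3 forms a single $P_0$-Donsker class): the indicator family $\{1(A \geq v) : v \in K\}$ is VC, hence Donsker, and the map from $(v, \text{nuisances})$ to $D_{P, v}$ is jointly Lipschitz with a $v$-free envelope, so $\{D_{P_{n,v}^*, v} : v \in K\}$ sits inside a fixed Donsker class with probability tending to one and $\sup_{v \in K}\norm{D_{P_{n,v}^*, v} - D_{P_0, v}} = o_P(1)$; a uniform equicontinuity argument then yields
$$\sup_{v \in K} \left| \sqrt{n}\big(\Psi_v^{adj}(P_{n,v}^*) - \Psi_v^{adj}(P_0)\big) - n^{-1/2}\sum_{i=1}^n D_{P_0, v}(O_i) \right| = o_P(1).$$
Finally, $\{D_{P_0, v} : v \in K\}$ is itself $P_0$-Donsker by the same permanence argument applied to the true nuisance functions of B5, so $v \mapsto \sqrt{n}(P_n - P_0) D_{P_0, v}$ converges in $l^\infty(K)$ to a tight mean-zero Gaussian process with covariance $P_0 D_{P_0, v_1} D_{P_0, v_2} - (P_0 D_{P_0, v_1})(P_0 D_{P_0, v_2}) = \rho(v_1, v_2)$, using $P_0 D_{P_0, v} = 0$; the asymptotic equivalence just displayed transfers this weak limit to the srTMLE process.

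I expect the main obstacle to be the uniform-in-$v$ bookkeeping: verifying that the targeted nuisance estimators, after the two logistic fluctuation steps, remain in a \emph{single} Donsker class simultaneously for all $v \in K$ (rather than a $v$-dependent one), and checking the permanence properties --- products, ratios with denominators bounded away from zero, and composition with the indicator family $\{1(A \geq v)\}$ --- with $v$-free envelopes carefully enough to invoke a uniform equicontinuity result. A secondary point needing care is pinning down the exact form of the remainder $R_v$ and establishing $\hat\varepsilon_n = o_P(n^{-1/4})$ uniformly over $v \in K$ --- including at the boundary values of $Q_{0,v}$ that arise when B1 fails --- which is what makes $R_v$ negligible at the $o_P(n^{-1/2})$ rate.
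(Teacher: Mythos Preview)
Your proposal is correct and follows essentially the same route as the paper: the exact identity from $P_n D_{P_{n,v}^*,v}=0$, the explicit doubly robust form of the remainder bounded via Cauchy--Schwarz under B2 and B4, the empirical-process term handled by Donsker permanence and equicontinuity, and the process limit obtained by showing $\{D_{P_0,v}:v\in K\}$ is Donsker via the VC indicator family and Lipschitz permanence. You are in fact more careful than the paper on two points it glosses over: you argue $\hat\varepsilon_n=o_P(n^{-1/4})$ so that the \emph{targeted} estimators $Q_n^*,Q_{n,v}^*$ inherit the $o_P(n^{-1/4})$ rates of B4 (the paper simply invokes ``the consistency assumptions'' on the starred objects), and you supply an explicit argument for the degenerate case where only B1 fails, which the paper's proof does not spell out.
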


It follows immediately from the preceding theorem that the srTMLE is an efficient estimator for $\Psi_v^{adj}(P_0)$, since it is asymptotically linear with influence function being the efficient influence function. The srTMLE's scaled and centered asymptotic distribution is given by a mean-zero normally-distributed random variable with variance being the variance of its influence function. An estimate $\sigma_{n,v}$ of the standard error $\sigma_{v} := \sqrt{P_0 D_{P_0,v}^2}$ of the srTMLE is given by $\sigma^2_{n,v} = \frac{1}{n} \sum_{i=1}^n D_{P_{n,v}^*, v}(W_i, A_i, Y_i)^2$. Under condition B3, one has that $D_{P_{n,v}^*, v}$ falls in a class of $P_0$-Donsker functions, which when paired with condition B4 and the fact that $D_{P_0,v}$ is bounded under condition B2, implies that $ \mid \sigma_v - \sigma_{n,v} \mid  = o_P(1).$ Using this estimate of the standard error, Wald-type confidence intervals can be constructed for inference. Specifically, $\Psi_v^{adj}(P_{n,v}^*) \pm \Phi^{-1}(1-\alpha) \frac{\sigma_{n,v}}{\sqrt{n}}$ forms an asymptotic $1-\alpha$ confidence interval for $\Psi_v^{adj}(P_0)$ where $\Phi$ is the CDF of a $N(0,1)$ random variable.

\subsection{Robustness properties of the srTMLE estimator}
The srTMLE is double robust with respect to the nuisance estimators. The double-robustness property is due to the structure of the efficient influence function and is analogous to the double-robustness property for the well-known AIPW estimator of the average treatment effect (Bang and Robins, 2005)\nocite{bangrobins}. 

\begin{theorem}
Under conditions B2, B3, the srTMLE given in section 5.1.1. is a consistent estimator for $\Psi_{v}^{adj}(P_0)$ if either of the following conditions hold. If condition B4 holds as well then the srTMLE is consistent uniformly in $v \in K$. 
\begin{itemize}
    \item $P_0(A \geq v \mid W)$ and $P_0(\Delta = 1 \mid A, W)$ are estimated consistently.
    \item $E_{0, A \mid W}[E_{0}[Y \mid A, W, \Delta = 1] \mid A\geq v,W]$ and $E_{0}[Y \mid A, W, \Delta = 1]$  are estimated consistently.
\end{itemize}
\end{theorem}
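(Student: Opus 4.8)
The plan is to reuse the decomposition from the proof of the preceding asymptotic-linearity theorem, retaining only the weaker conclusion that the second-order remainder is $o_P(1)$ under each stated nuisance-consistency scenario. Write $P_{n,v}^{*}=(P_{W,n},g_{n,v},G_n,Q_n^{*},Q_{n,v}^{*})$ for the targeted fit. By construction the srTMLE solves the efficient score equation $P_n D_{P_{n,v}^{*},v}=0$, and with the remainder defined as $R_{2,v}(P,P_0):=\Psi_v^{adj}(P)-\Psi_v^{adj}(P_0)+P_0 D_{P,v}$ one has the master identity
\[ \Psi_v^{adj}(P_{n,v}^{*})-\Psi_v^{adj}(P_0)=(P_n-P_0)D_{P_{n,v}^{*},v}+R_{2,v}(P_{n,v}^{*},P_0), \]
so it suffices to show both right-hand terms are $o_P(1)$. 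The empirical-process term is the easy one: under B2, $D_{P_{n,v}^{*},v}$ is a fixed Lipschitz transformation -- with all denominators bounded below by $\delta$ -- of the nuisance realizations and of the scalar fluctuation parameters $\hat\varepsilon_n$ of Steps 2 and 6, which lie in a fixed compact set with probability tending to one (as in the proof of the preceding theorem); so B3 places $D_{P_{n,v}^{*},v}$ in a $P_0$-Donsker, hence Glivenko--Cantelli, class and $(P_n-P_0)D_{P_{n,v}^{*},v}=o_P(1)$.

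The substance is the product structure of $R_{2,v}(P_{n,v}^{*},P_0)$, which after the cancellation of the $P_n Q_{n,v}^{*}$ contributions depends on $P_{n,v}^{*}$ only through the nuisances $(g_{n,v},G_n,Q_n^{*},Q_{n,v}^{*})$. Using iterated expectations -- conditioning on $(A,W)$ to rewrite the missingness-score part via $E_0[\Delta(Y-Q_n^{*})\mid A,W]=G_0\,(Q_0-Q_n^{*})$, and conditioning on $W$ to turn $1(A\geq v)$ into $g_{0,v}$ -- and then collecting and telescoping, I expect to reach
\[ R_{2,v}(P_{n,v}^{*},P_0)=E_0\!\left[\frac{1(A\geq v)\,(G_n-G_0)}{g_{0,v}(W)\,G_n(A,W)}\,(Q_n^{*}-Q_0)\right]+E_{0,W}\!\left[\frac{g_{n,v}-g_{0,v}}{g_{n,v}}\Bigl\{(Q_{n,v}^{*}-\widetilde Q_{n,v}^{*})+E_{0,A\mid W}\!\Bigl[\tfrac{G_0}{G_n}(Q_n^{*}-Q_0)\,\Big|\,A\geq v,W\Bigr]\Bigr\}\right], \]
where $\widetilde Q_{n,v}^{*}(W):=E_{0,A\mid W}[Q_n^{*}(A,W)\mid A\geq v,W]$, and note $Q_{n,v}^{*}-\widetilde Q_{n,v}^{*}=(Q_{n,v}^{*}-Q_{0,v})-E_{0,A\mid W}[Q_n^{*}-Q_0\mid A\geq v,W]$. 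Cauchy--Schwarz with B2 then bounds $|R_{2,v}|$ by a $\delta$-dependent constant times $\|Q_n^{*}-Q_0\|\,\|G_n-G_0\|+\|g_{n,v}-g_{0,v}\|\bigl(\|Q_{n,v}^{*}-Q_{0,v}\|+\|Q_n^{*}-Q_0\|\bigr)$ -- a genuinely second-order quantity in which every summand carries a propensity-type difference ($g_{n,v}-g_{0,v}$ or $G_n-G_0$) or an outcome-regression-type difference ($Q_n^{*}-Q_0$ or $Q_{n,v}^{*}-Q_{0,v}$).

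To finish: if $g_{n,v}\to g_{0,v}$ and $G_n\to G_0$ in $L^2(P_0)$ (first bullet) both products vanish because the surviving factors are uniformly bounded (all conditional means and probabilities involved take values in $[0,1]$); if instead $Q_n\to Q_0$ and $Q_{n,v}\to Q_{0,v}$ (second bullet), I will first record the subsidiary fact that the fluctuation steps preserve consistency -- $\varepsilon=0$ maximizes the population version of each weighted log-likelihood as soon as the offset regression is correctly specified, even under a misspecified weight, so $\hat\varepsilon_n\to 0$ and hence $Q_n^{*}\to Q_0$, $Q_{n,v}^{*}\to Q_{0,v}$, $\widetilde Q_{n,v}^{*}\to Q_{0,v}$ -- after which both products again vanish. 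Either way $R_{2,v}\to 0$, so $\Psi_v^{adj}(P_{n,v}^{*})\to\Psi_v^{adj}(P_0)$ in probability. The uniform-in-$v$ statement under the added condition B4 follows since the rate bounds of B4 and the positivity constant $\delta$ of B2 are uniform over $v\in K$, making the $R_{2,v}$ bound uniform, while the Glivenko--Cantelli control is applied to the class of $D_{P_{n,v}^{*},v}$ indexed jointly by $v\in K$, and a standard argument then upgrades pointwise to uniform consistency over $K$. I anticipate the main obstacle to be exactly the explicit evaluation and telescoping of $R_{2,v}$ so that the double-robust product structure is manifest, together with the small lemma that $\hat\varepsilon_n\to 0$ whenever the offset regression is correct; the remaining steps are routine applications of Cauchy--Schwarz, boundedness, and Glivenko--Cantelli.
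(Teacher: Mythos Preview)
Your approach is the same as the paper's---reuse the von Mises expansion from the efficiency theorem, show the remainder has double-robust product structure, and bound it by Cauchy--Schwarz under B2---but your expression for $R_{2,v}$ is more complicated than it needs to be. In the proof of Theorem~1 (Web Appendix~B) the paper telescopes the remainder all the way down to exactly two clean product terms,
\[
R_2 \;=\; P_0\!\left[\frac{1(A\geq v)}{g_{n,v}(W)}\Bigl(\tfrac{G_0}{G_n}-1\Bigr)\bigl(Q_0-Q_n^{*}\bigr)\right]
\;+\;P_0\!\left[\Bigl(1-\tfrac{g_{0,v}}{g_{n,v}}\Bigr)\bigl(Q_{0,v}-Q_{n,v}^{*}\bigr)\right],
\]
with no auxiliary $\widetilde Q_{n,v}^{*}$ and no nested conditional-expectation piece; from this the two double-robustness cases are immediate. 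Your first term also has $g_{0,v}$ in the denominator where the derivation naturally produces $g_{n,v}$---a slip worth fixing, since the EIF you are taking $P_0$-expectation of carries $g_{n,v}$, not $g_{0,v}$. On the other hand, your subsidiary observation that the fluctuation steps preserve consistency of $Q_n$ and $Q_{n,v}$ even under misspecified weights (because $\varepsilon=0$ still maximizes the population weighted log-likelihood when the offset is correct) is a point the paper leaves implicit, and it is genuinely needed for the second bullet; including it makes your argument more complete than the paper's on that front.
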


\subsection{Simultaneous confidence bands for the threshold-response function}
Let $V = \{v_1, \dots, v_k\} \subset K \subset \mathbb{R}$ be a finite set of thresholds for some $v \in \mathbb{N}$, contained in a bounded set $K$. In practice, $V$ represents a discrete grid of threshold values that are of interest. Note that since $V$ is discrete, we can apply Theorem 1 with $K = V$. Thus, we have that the collection of srTMLEs $\lbrace \Psi_v^{adj}(P_{n,v}^*): v \in V \rbrace $ satisfies 
$\left\{\sqrt{n} \left(\Psi_v^{adj}(P_{n,v}^*) - \Psi_v^{adj}(P_0) \right) : v \in V\right\} \rightarrow \left\{Z_v: v \in V \right\},$
where $(Z_v: v \in V)$ is a mean-zero multivariate normally distributed random variable with covariance matrix $\Sigma_{v_1, v_2} = P_0 D_{P_0,v_1} D_{P_0,v_2}.$ The covariance matrix can be estimated consistently with the empirical covariance matrix $\hat \Sigma_{n, v_1, v_2} = P_n D_{P_{n,v}^*,v_1} D_{P_{n,v}^*,v_2}$, which further gives a consistent estimate of the distribution of $(Z_v: v \in V)$. Simultaneous confidence intervals for multivariate normally distributed random variables are well understood and we refer to Cai and M. van der Laan (2019)\nocite{SurvOneStepTmle} for an in-depth-treatment in the context of TMLE. By the second statement of Theorem 1, the inference remains asymptotically valid even as we let the number of thresholds contained in $V$ grow and approach the set $K$. Thus, by taking a sufficiently fine grid $V$ of thresholds, we can interpolate the estimates and confidence bands of the thresholds in $V$ to thresholds in $K \cap V^c$ at negligible cost in bias and coverage. We note that since $v \mapsto \Psi_v^{adj}(P_{n,v}^*)$ is a locally efficient, asymptotically linear and regular estimator, these confidence intervals are non-adaptive. In fact, by the local asymptotic minimax theorem (van der Vaart, 1998)\nocite{vaart_1998}, any adaptive estimator of $v \mapsto \Psi_v^{adj}(P_{0})$ that achieves better performance than the efficient estimator at some distribution that is a $n^{-1/2}$-fluctuation from $P_0$ must necessarily be suboptimal at some distributions, and therefore is not locally efficient. This is in contrast with constructing confidence intervals for general regression functions for which no such efficiency theory exists (Genovese, Wasserman, 2008)\nocite{AdaptCI}.

\section{Simulations}
\subsection{Asymptotic efficiency gains of srTMLE relative to binTMLE}
In this section, we explore the efficiency claims made in Section \ref{section::EIF}. Specifically, we investigate the efficiency gains from using the srTMLE as opposed to the binTMLE. For \text{const} $\in \{0,1\}$ and \text{offset} $\in \{0, -3\}$, we consider the following simulation setting:
$A \sim \text{truncnorm}(a = 0, b= 2, \text{mean} =  (0.8 + W_1 +  (W_2 + W_3)/2 )/2, \text{sd} = 0.5)$, 
$Y \sim \text{Bern}( \text{expit}( \text{offset} + 0.75\cdot W_1 -0.2 +  0.5\cdot (W_2 + W_3) - A + \text{const} \cdot 2 \cdot \sin(6 \cdot A) ))$,
$\Delta \sim \text{Bern}(\text{expit}(   -1 + W_1 +   W_2 + W_3 )) \text{ (on average 40\% missing)}$ where $W_1,W_2,W_3$ are baseline variables (See Web Appendix D).  For const $=1$, the addition of the term $2 \cdot \sin(6 \cdot A)$ in the distribution of $Y$ ensures there is a non-linear association between $A$ and $Y$ that will not be captured well by $v \mapsto E_0[Y| A \geq v,\, W]$. \text{offset} $=0$ corresponds with a non-rare event setting with $P_0(Y=1) \approx 0.45$, and \text{offset} $ =-3$ corresponds with a rare-event setting with  $P_0(Y=1) \approx 0.08$. For a grid of 10 thresholds in $[0,2]$, we estimated (using a large simulated sample) the relative loss in efficiency, defined as $ \frac{sd(D_{P_0,v, coarse})}{sd(D_{P_0,v})} - 1$ where $D_{P_0,v, coarse}$ is the inefficient influence function based on the coarsened data-structure $(W, D_v, \Delta, \Delta Y)$. $sd(D_{P_0,v, coarse})$ is the asymptotic standard error of the binTMLE and $sd(D_{P_0,v})$ is the asymptotic standard error of the srTMLE. The results for all settings is displayed in Figure \ref{Figure::effGains}. Both the rare and non-rare event settings show that there is a noticeable efficiency loss for the case where $E_0[Y \mid A,\, W]$ is non-linear. This makes sense because $E_0[Y \mid A,\, W]$ is more predictive of $Y$ $(r = 0.60, r_{rare} = 0.35)$ than $E_0[Y \mid A \geq v,\, W]$ ($r=0.08, r_{rare} = 0.07$ for $v=0$) (see the discussion in Section \ref{section::EIF}). The loss in efficiency is much more substantial in the non-rare-event setting, which can be explained by the fact that in the rare-event-setting both $E_0[Y \mid A,\, W]$ and $E_0[Y \mid A \geq v,\,W]$ are usually small and are therefore poor predictors of $Y$. Next, we see that in both linear settings, there is little-to-no loss in efficiency. This can be explained by (1) the decrease in the predictive power of $E_0[Y \mid A,\, W]$ that is directly due to $A$ by omitting the non-linear term $(r = 0.24,\, r_{rare} = 0.1)$ and (2) the monotone relationship between $E_0[Y \mid A,\, W]$ and $A$ that allows for $E_0[Y \mid A \geq v,\, W]$ to be more predictive ($r = 0.15,\, r_{rare} = 0.07$ for $v=0$) than it was in the non-linear case.

\begin{figure}[ht]
\centering 
\includegraphics[width=15cm,height = 9cm]{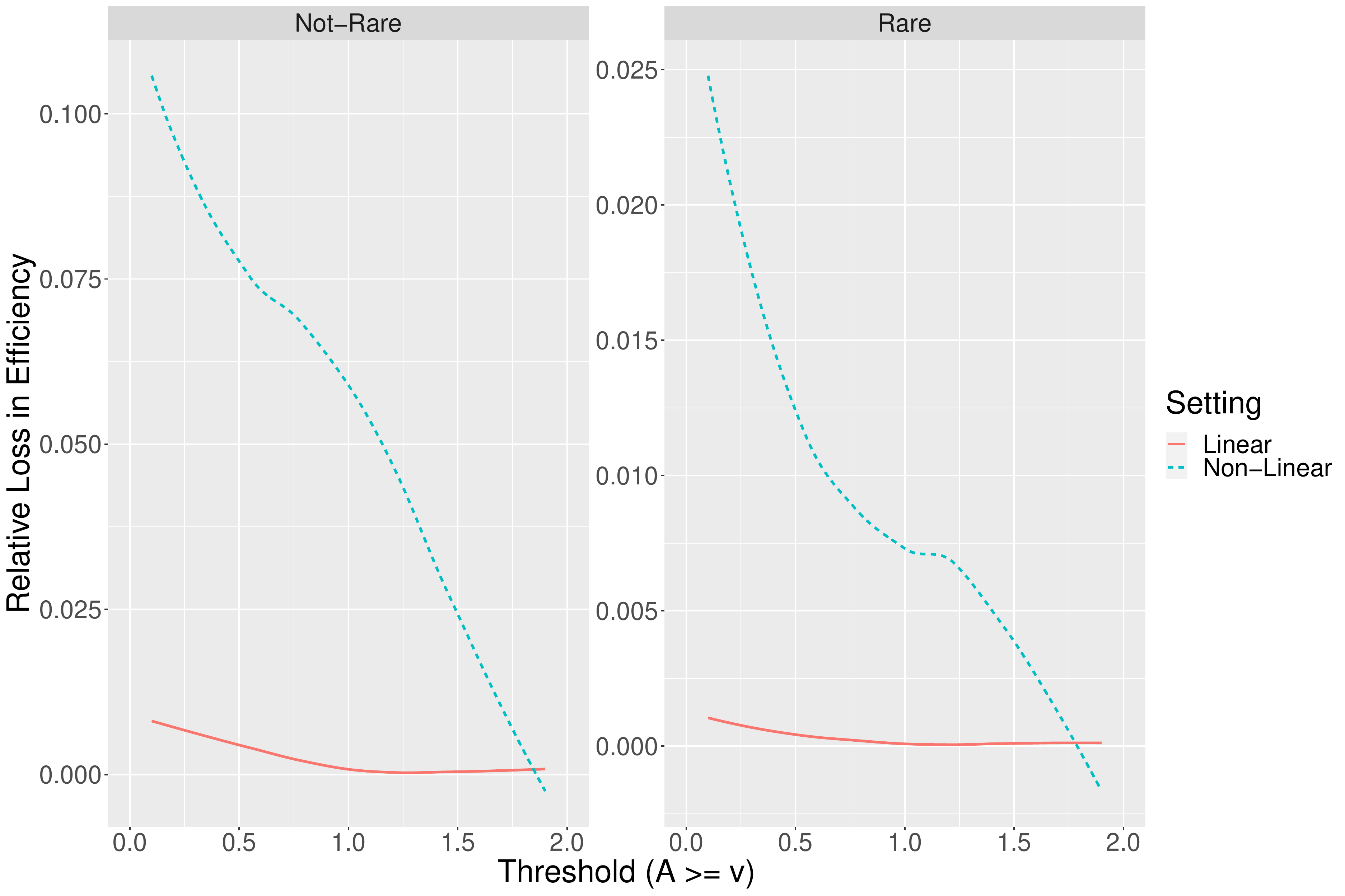}
\caption{Plot of efficiency loss of binary treatment estimator relative to proposed efficient estimator of threshold-response as a function of the threshold for the following simulation settings (in order of curves from top to bottom): Non-linear $Q$, Non-linear $Q$ with rare-events, Linear $Q$, Linear $Q$ with rare-events. This figure appears in color in the electronic version of this article, and any mention of color refers to that version.
}
\label{Figure::effGains}
\end{figure}

 \subsection{Asymptotic bias of inefficient (coarsened data-structure) TMLE}
In this section, we investigate how the following estimators perform in a setting with outcome missingness that is informed by both $W$ and $A$: (1) The proposed efficient srTMLE; (2) The inefficient binTMLE; (3) The estimator of Donovan et al. All nuisance functions were estimated using generalized additive models using the function ``gam" provided in the R package ``mgcv"  with default settings. For the distribution of $A$ and $Y$, we utilize the same distribution as the previous section with $\text{offset} = -3$ and $\text{const} = 1$, which corresponds with rare-events with non-linear $Q$. For $\text{const}_2 \in \{0,1\}$, we generate the missingness indicator in two different ways as follows:
$\Delta \sim \text{Bern}\left(\text{expit}(  -1 +  A + \text{const}_2 \cdot 2 \cdot \sin(6\cdot A) + W_1 + (W_2 +W_3)/2 )  \right)$ (on average 35\% missing).
$\text{const} = 1$ corresponds with complex outcome missingness and $\text{const}_2 = 0$ corresponds with simple outcome missingness. Due to confounding bias, we expect the estimator of Donovan et al. to perform poorly. We also expect the binTMLE to be asymptotically biased due to the strong dependence of $\Delta$ on $A$. The results of the simulations are given in Figure \ref{Figure::SimBias} and Figure \ref{Figure::SimBiasLess}. For the complex outcome missingness case, both the Donovan et al. estimator and the binTMLE are biased with especially poor confidence interval coverage as sample size increases. The srTMLE obtains the nominal 95\% confidence interval coverage at around sample size $n=1000$. For very small sample sizes, the bias of the binTMLE is comparable to that of the srTMLE, which is likely due to the sinusoidal signals being indistinguishable from noise and the resulting finite sample bias happening to be favorable for the binTMLE. 
For the simple outcome missingness case, given in Figure \ref{Figure::SimBiasLess}, we see both the binTMLE and srTMLE perform similarly in coverage and standard error. Even though the outcome missingness is informed by the biomarker, the binTMLE is only slightly biased and therefore still performs well. This suggests in some cases that the binTMLE, although biased, can still perform well when the outcome missingness is informed by the biomarker.

 \begin{figure}[ht]
\begin{subfigure}{.5\textwidth}
  \centering
   \caption{Standard error of estimator}
  \includegraphics[width=.8\linewidth]{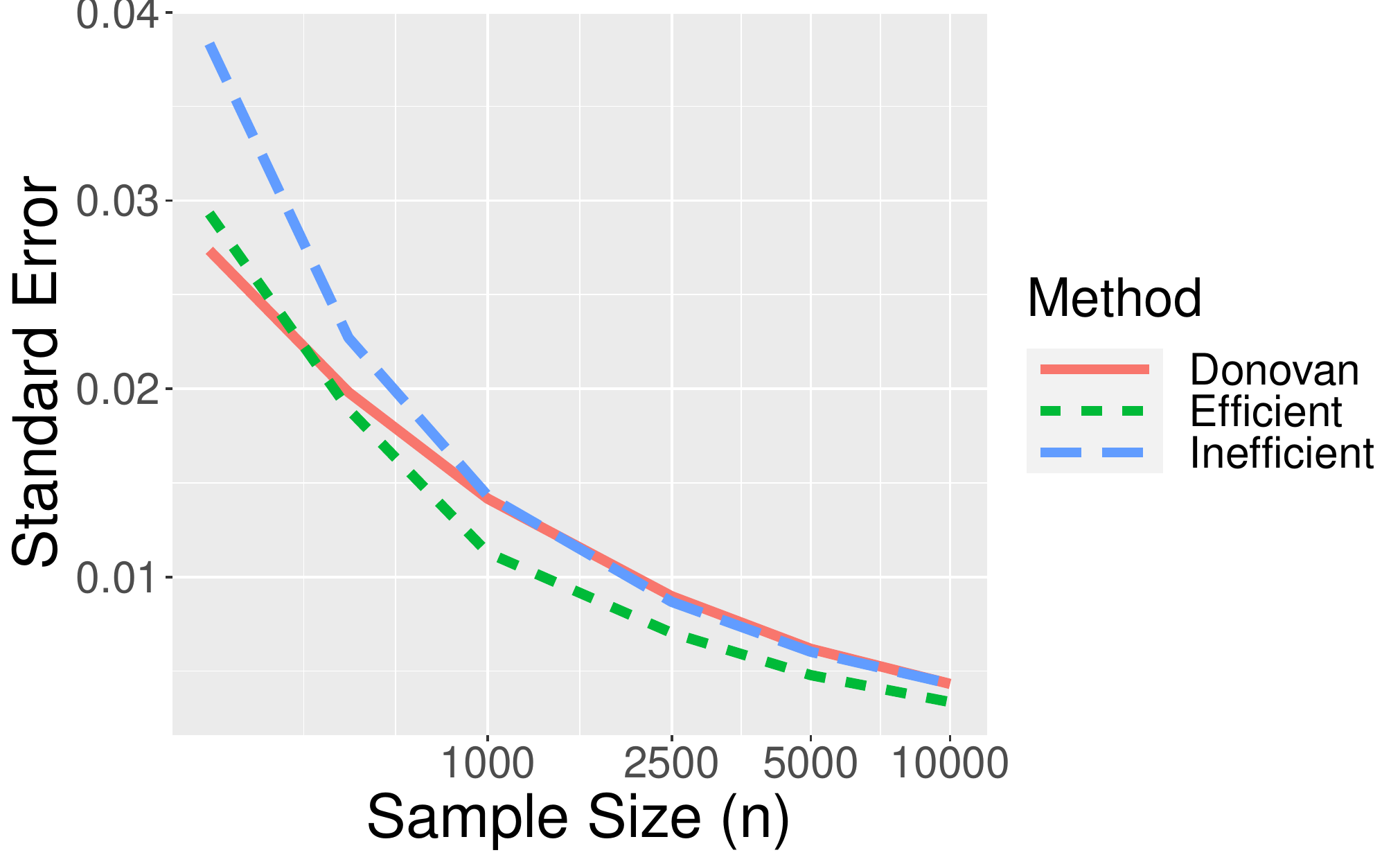}

\end{subfigure}
\begin{subfigure}{.5\textwidth}
  \centering
  
  \includegraphics[width=.8\linewidth]{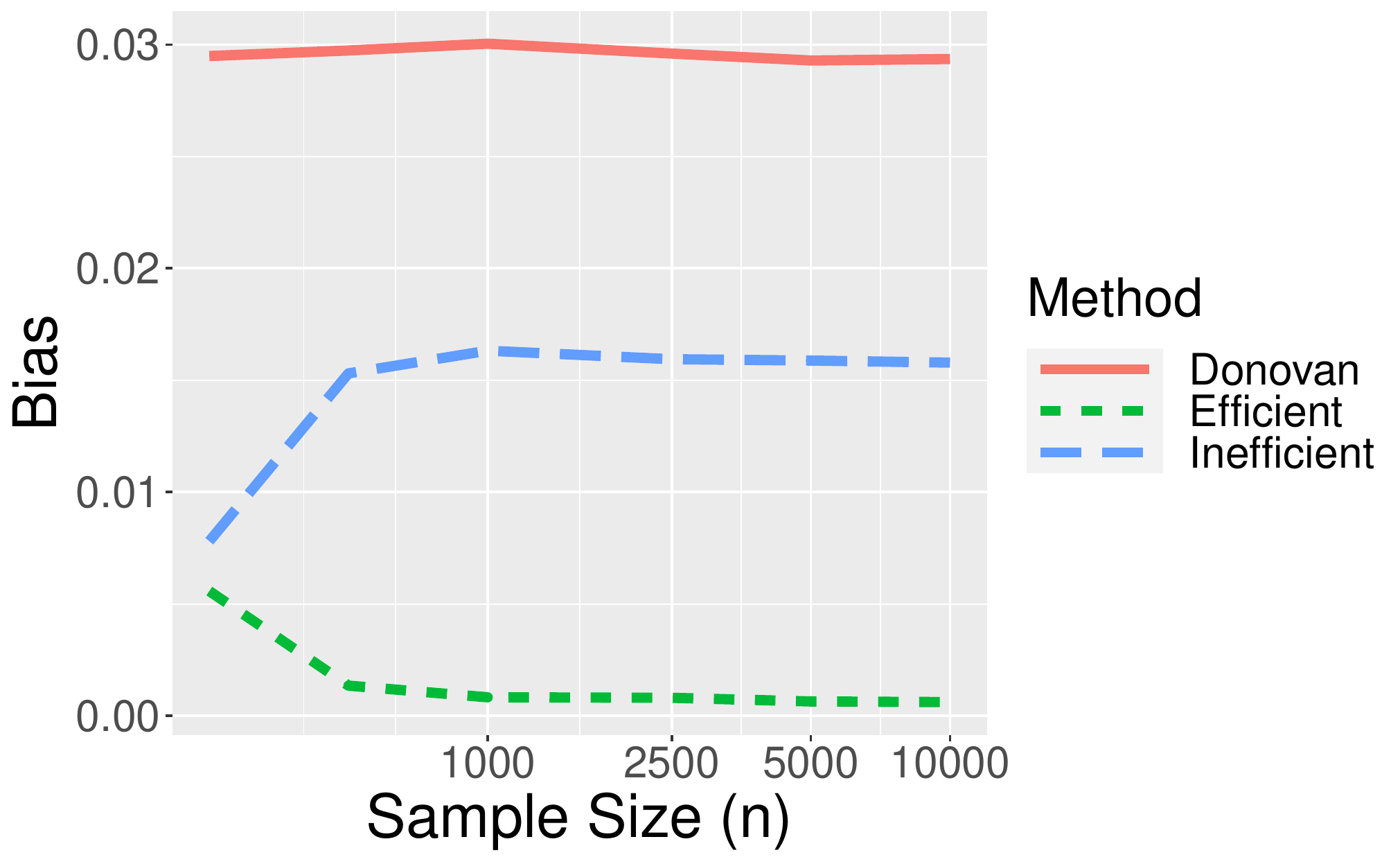}
   \caption{Absolute bias of estimators}

\end{subfigure} 
\begin{subfigure}{.5\textwidth}
  \centering
    \caption{$\sqrt{\text{Mean-squared-error}}$}
  \includegraphics[width=.8\linewidth]{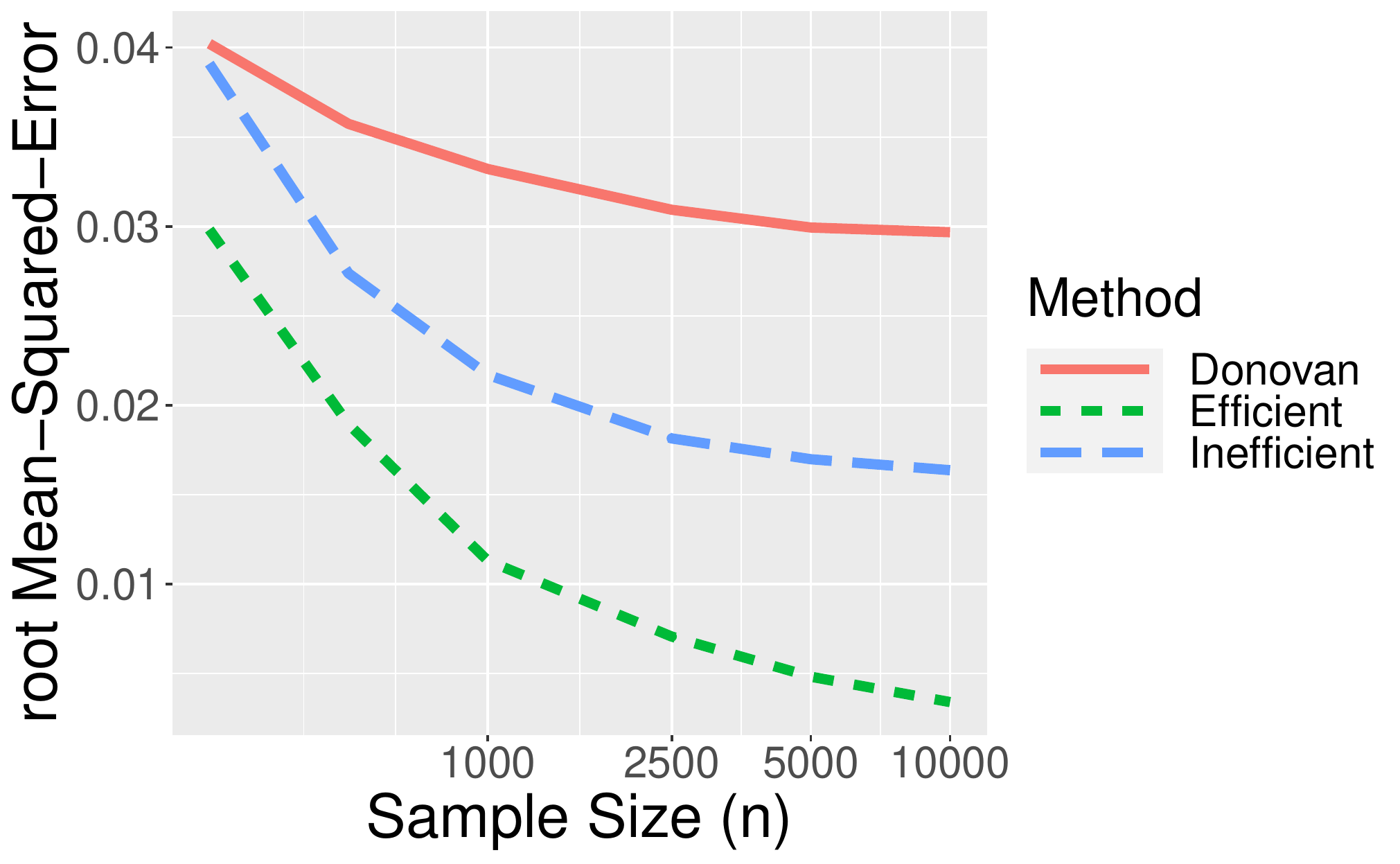}

\end{subfigure}
\begin{subfigure}{.5\textwidth}
  \centering
  
  \includegraphics[width=.8\linewidth]{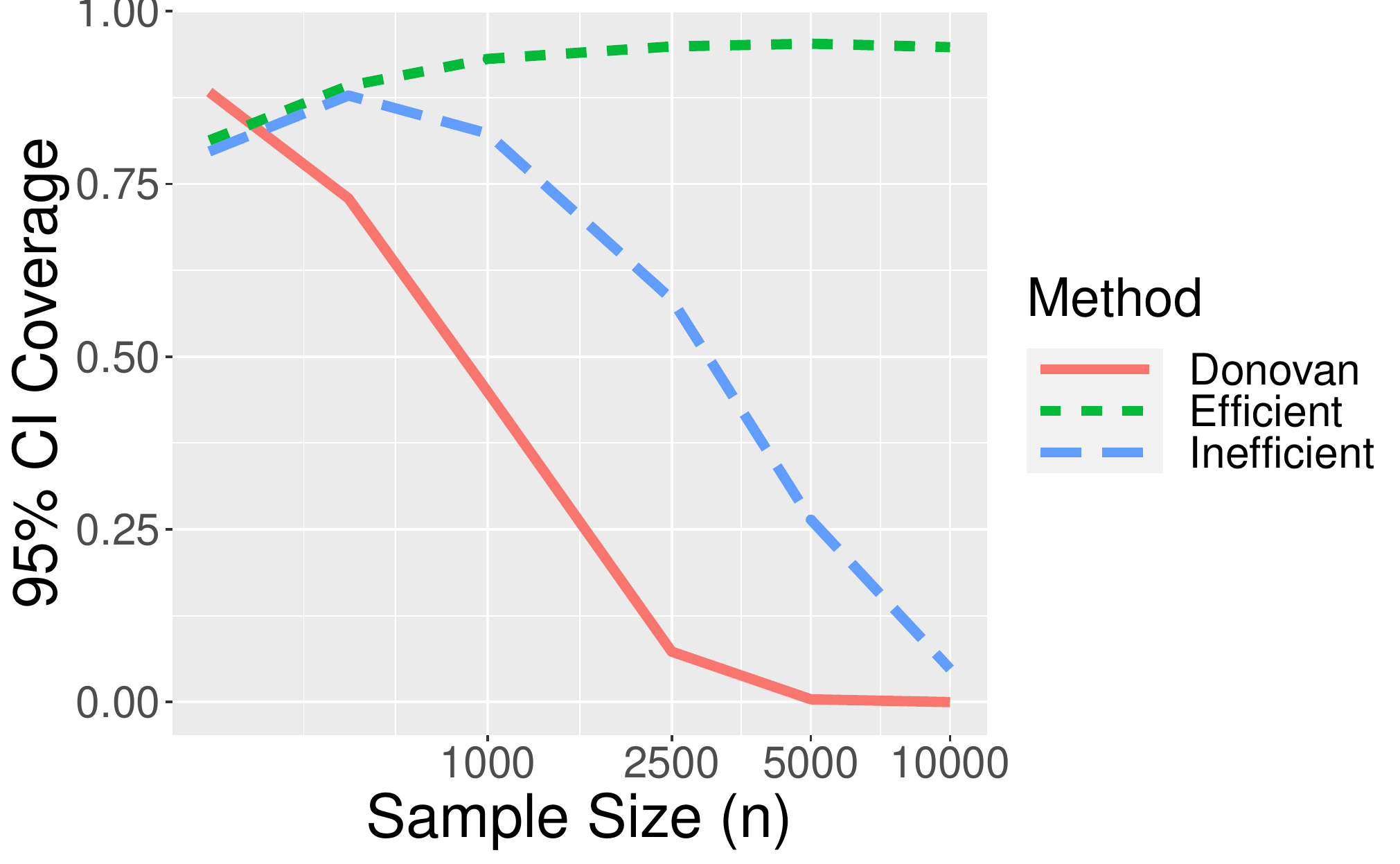}
   \caption{Coverage probability of estimated $95\%$ CI}

\end{subfigure} 
\caption{Results for second simulation (complex outcome missingness): (a) Standard error, (b) Absolute bias, (c) mean-squared error, and (d) confidence interval coverage computed from 1000 monte-carlo simulations for the proposed srTMLE, the binTMLE, and the Donovan unadjusted estimator. This figure appears in color in the electronic version of this article, and any mention of color refers to that version.
 }
\label{Figure::SimBias}
\end{figure}

 \begin{figure}[ht]
\begin{subfigure}{.5\textwidth}
  \centering
   \caption{Standard error of estimator}
  \includegraphics[width=.8\linewidth]{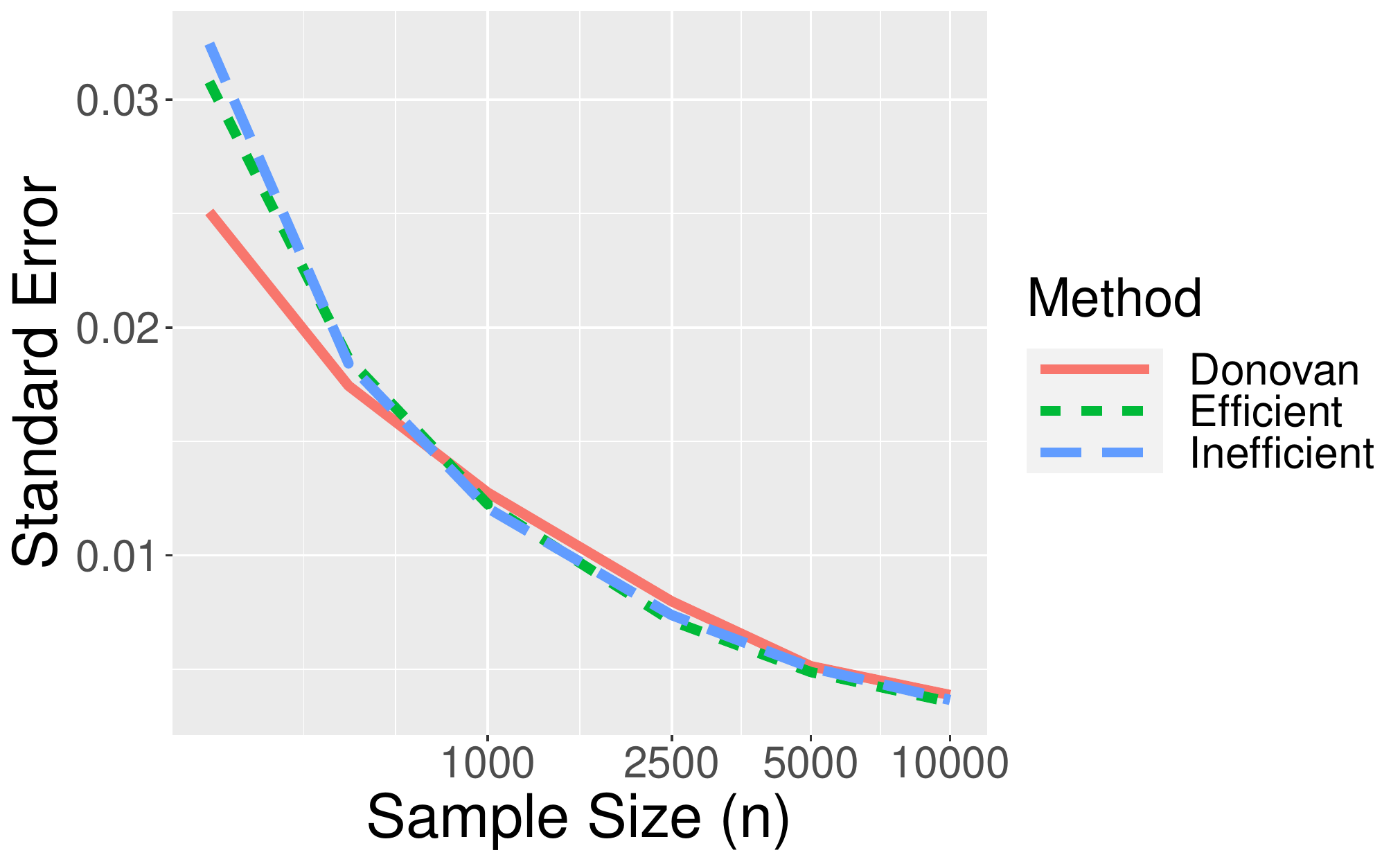}

\end{subfigure}
\begin{subfigure}{.5\textwidth}
  \centering
  
  \includegraphics[width=.8\linewidth]{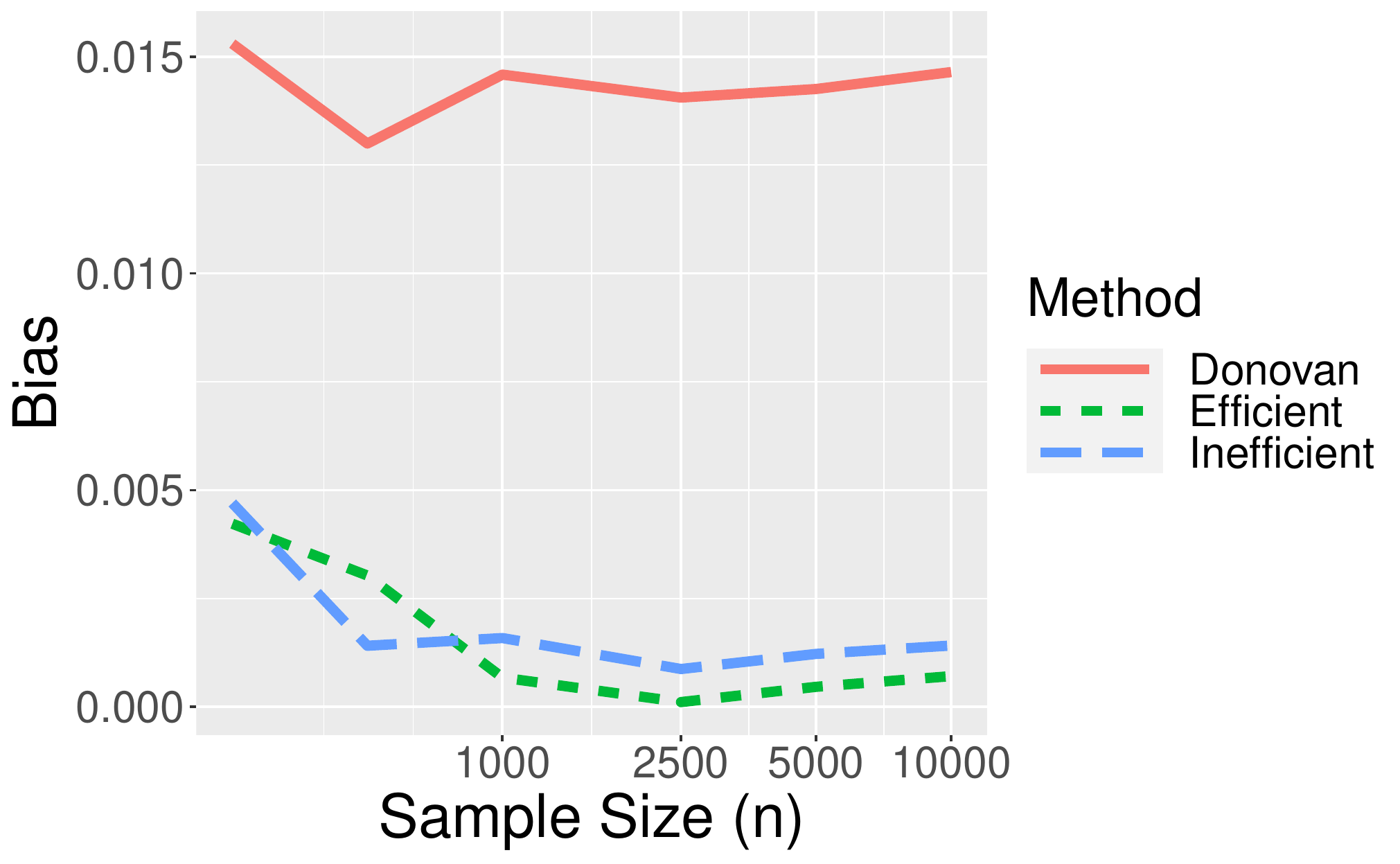}
   \caption{Absolute bias of estimators}

\end{subfigure} 
\begin{subfigure}{.5\textwidth}
  \centering
    \caption{$\sqrt{\text{Mean-squared-error}}$}
  \includegraphics[width=.8\linewidth]{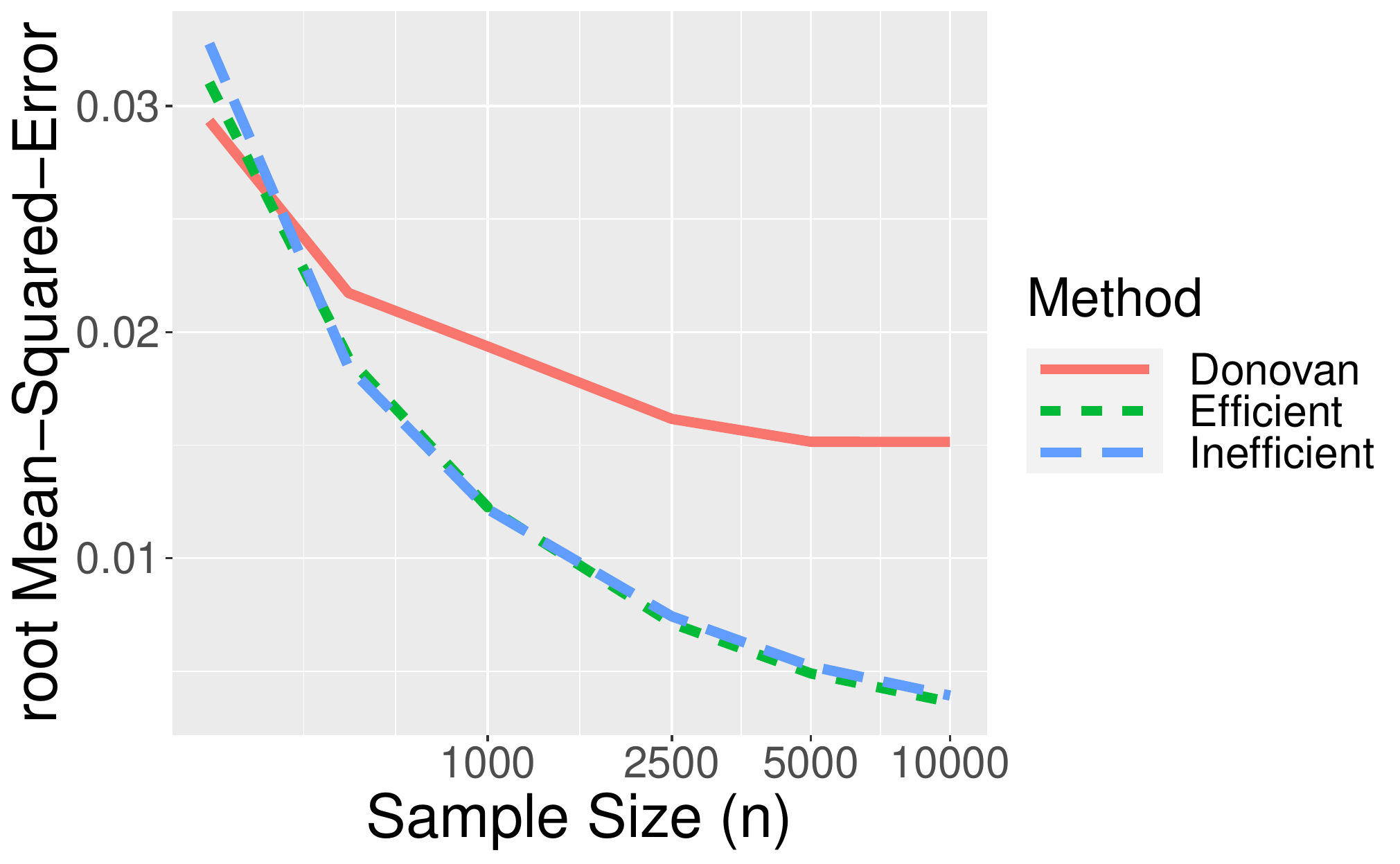}

\end{subfigure}
\begin{subfigure}{.5\textwidth}
  \centering
  
  \includegraphics[width=.8\linewidth]{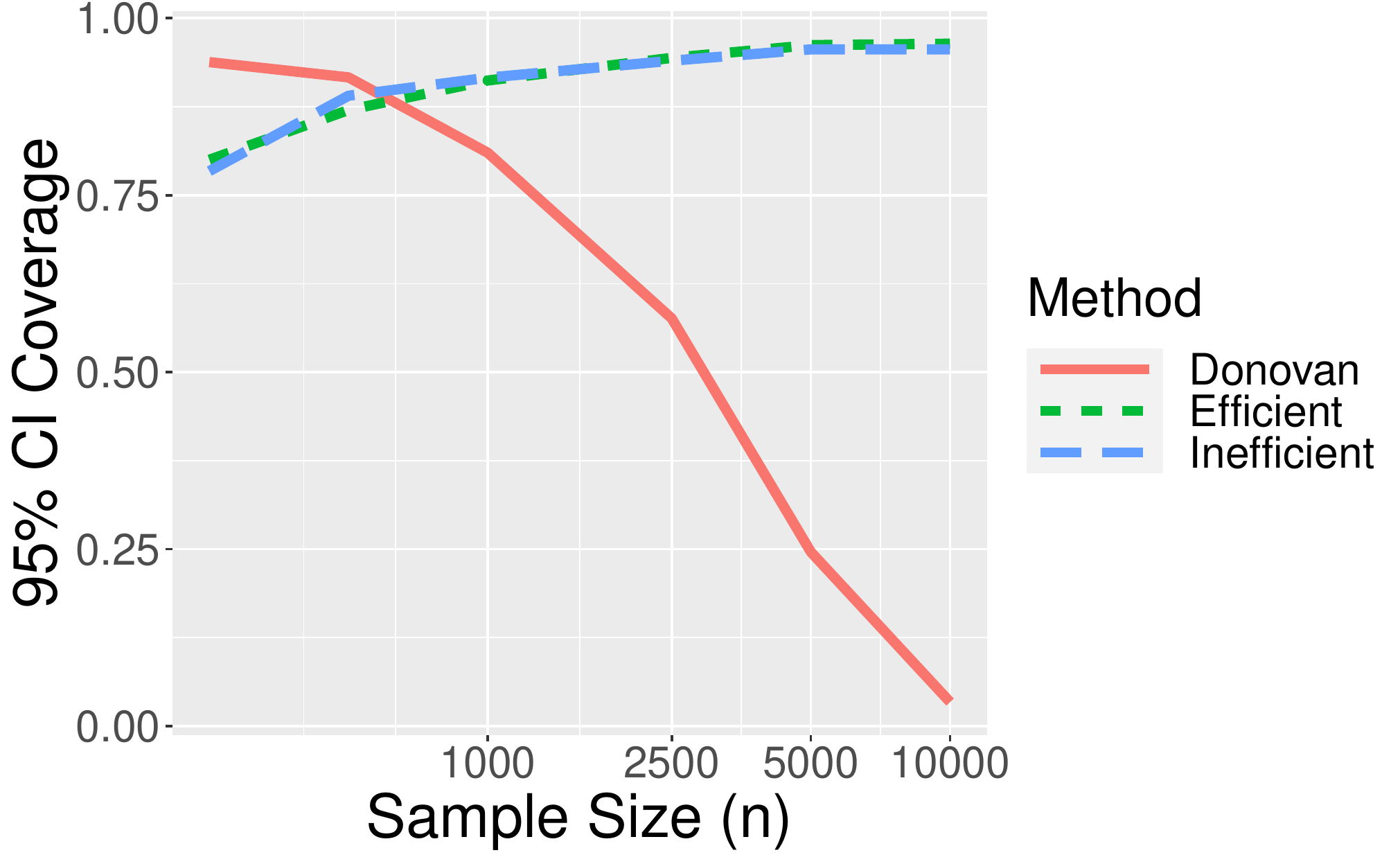}
   \caption{Coverage probability of estimated $95\%$ CI}

\end{subfigure} 
\caption{Results for second simulation (simple outcome missingness): (a) Standard error, (b) Absolute bias, (c) mean-squared error, and (d) confidence interval coverage computed from 1000 monte-carlo simulations for the proposed srTMLE, the inefficient binTMLE, and the Donovan unadjusted estimator. This figure appears in color in the electronic version of this article, and any mention of color refers to that version. }
\label{Figure::SimBiasLess}
\end{figure}

\subsection{Comparison of unadjusted Donovan estimator vs adjusted srTMLE at various levels of confounding}
To understand the effect of covariate adjustment in reducing confounding bias, we evaluate the absolute bias $\left | E_{0}\left[\Psi_v^{adj}(P_{n,v}^*)-\Psi_v^{adj}(P_0)\right] \right | $ of the Donovan et al. (unadjusted) estimator and (covariate-adjusted) srTMLE at sample sizes $n=500,1000,2000$ for rare event settings with varying levels of confounding. Holding the correlation between $W$ and $A$ fixed at $0.5$, we measured the degree of confounding as the amount of correlation between the univariate confounding variable ($W$) and the outcome ($Y$). Across all simulations, we kept the average risk fixed at $0.04$. No outcome missingness was included for simplicity and the simulation design can be found in Web Appendix D. The results given in Figure \ref{fig:simconfound} demonstrate that confounding bias can be significant in rare event settings. Even though the correlation between $A$ and $Y$ is small in such settings, the magnitude of the estimand is also small and thus a small absolute confounding bias can still lead to a large relative bias in the estimates. 

\begin{figure}[ht]
\begin{subfigure}{.5\textwidth}
  \centering
 
  \includegraphics[width=1\linewidth]{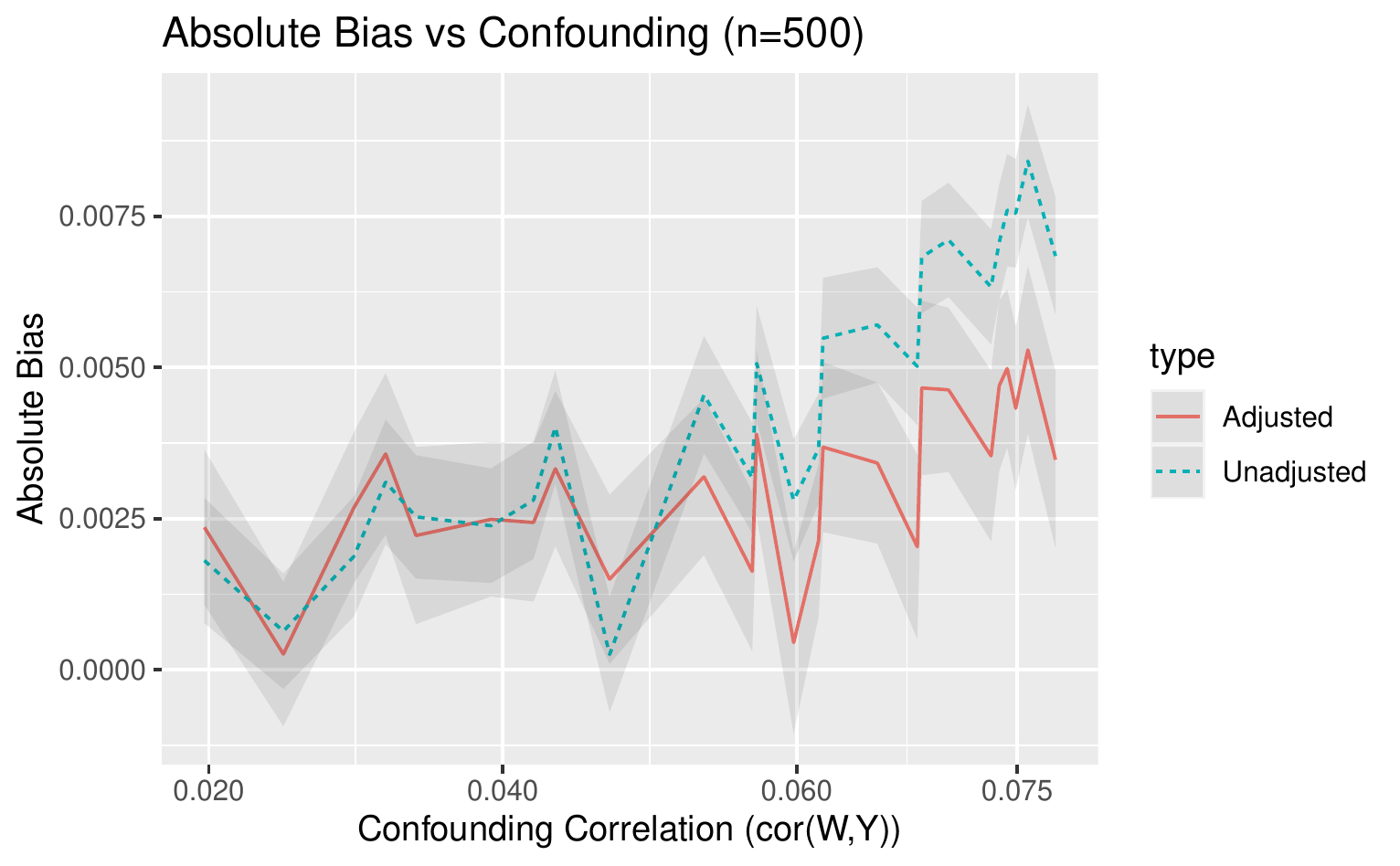}  
 
  \label{fig:sub-first}
\end{subfigure}
\begin{subfigure}{.5\textwidth}
  \centering
 
  \includegraphics[width=1\linewidth]{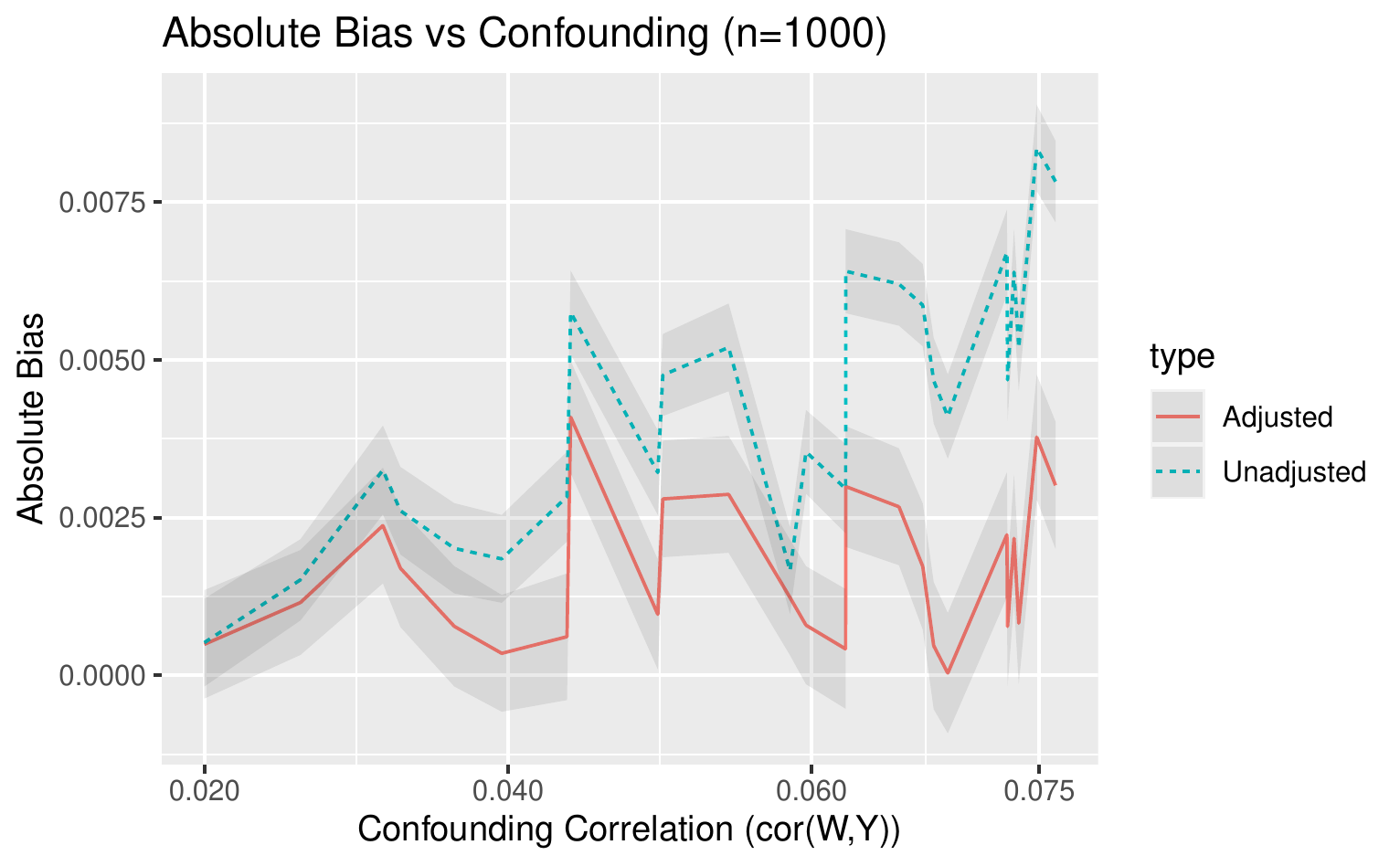}  

  \label{fig:sub-second}
\end{subfigure}
 
\hspace{5cm}\begin{subfigure}{.5\textwidth}
  \centering
  
  \includegraphics[width=1\linewidth]{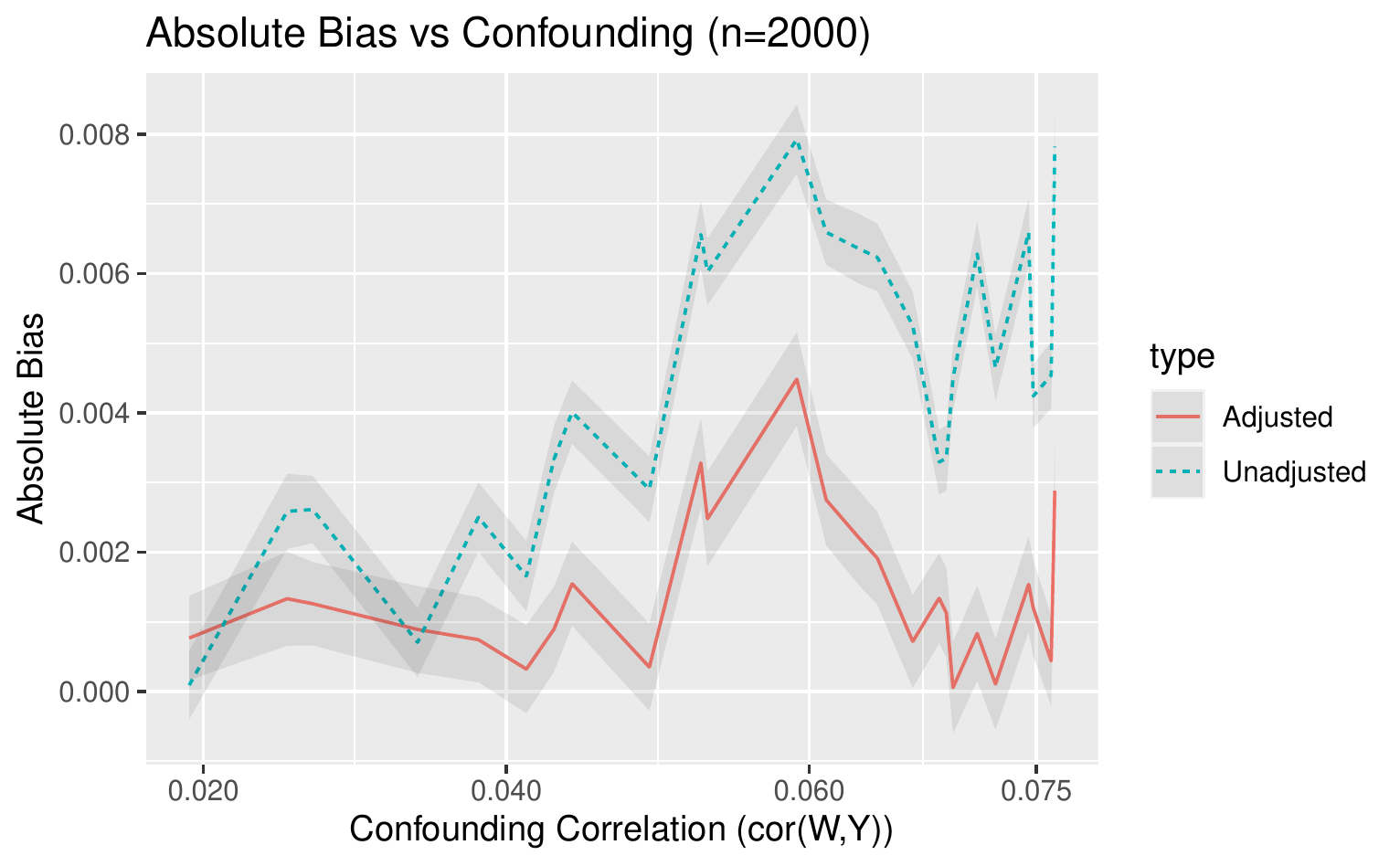}

  \label{fig:sub-second}
\end{subfigure}

\caption{Simulation results depicting relative bias (absolute bias divided by estimand value) for the Donovan estimator and covariate-adjusted srTMLE for various levels of confounding correlation (of the confounder $W$ with $Y$) in rare event settings with average risk $\approx 0.04$. Estimates are based on $500$ monte-carlo estimates at sample sizes $n = 500, 1000, 2000$. Cor(W,A) = 0.5 across all simulations. Uncertainty intervals, defined as $1.95/\sqrt{500}$ times the standard deviation of monte-carlo estimates, are also displayed. This figure appears in color in the electronic version of this article, and any mention of color refers to that version. }
\label{fig:simconfound}
\end{figure}


\section{Application}
We apply the methods developed in this manuscript 
to the same CYD14 and CYD15 dengue vaccine trial data sets analyzed by Donovan, Hudgens, and Gilbert (2019).
CYD14 and CYD15 were Phase 3 placebo-controlled trials that
evaluated the efficacy of the dengue vaccine CYD-TDV in children. CYD14 was conducted in five Asian-Pacific countries with participants between 2 and 14 years old. CYD15 was conducted in five Latin American countries with participants between 9 and 16 years old. The two study designs were harmonized allowing for an analysis of the pooled data, where for pooled analysis we restrict to 9--16-year-olds given the vaccine is approved for this age range. Doses were given at the start of the study (month 0), month 6, and month 12, with follow-up visits at month 13 and month 25. The primary objective assessed vaccine efficacy against the dengue disease primary endpoint occurring between months 13 and 25 in the per-protocol population, where per-protocol was defined as receiving all three immunizations and not experiencing the dengue endpoint between months 0 and 13. Both studies used a case-cohort sampling design where a simple random sample of participants was selected for measurement of anti-dengue neutralizing antibody titers at month 13, augmented with these titer measurements for all per-protocol dengue endpoints (Moodie et al., 2018).\nocite{Moodieetal2018} 
Using the same convention as Donovan et al., the marker of interest, 
month 13 log10 neutralizing antibody titer, is defined
as the average of the four log10 antibody titers to the four serotypes represented inside the vaccine.

We perform the threshold-analysis for both datasets separately and adjust for age, sex, and country. The biomarker variable $A$ is defined as the antibody titer and the outcome $ Y$ is defined as $Y = 1(T \leq t_f)$ where $T$ is the time from the month 13 visit to observed dengue endpoint diagnosis and $t_f$ is a reference time point defined as 336 days after the month 13 visit. Since 99.8\% of participants were evaluable for whether they experienced the dengue endpoint by month 25, we omitted all individuals censored before time $t_f$ from the analysis at a negligible increase in bias. To estimate the adjusted threshold-response function, we apply the srTMLE defined in Section \ref{section::seqregTMLE} with efficient IPW-adjustment to account for the cumulative-case control sampling design (see Web Appendix E for how to adjust the TMLE). The IP-weights and all other nuisance functions are estimated nonparametrically using the Highly Adaptive Lasso (Benkeser, van der Laan, 2016). We also estimate the unadjusted threshold-response function using the IPW-weighted estimator of Donovan et al., and for comparability the same IPW-weights as the srTMLE estimator were used. 
The estimated adjusted (TMLE) and unadjusted (Donovan) threshold-response functions with pointwise 95\% confidence intervals are given in Figures \ref{F1}, \ref{F2} and \ref{F3}. Figure \ref{figure::hists} displays three plots of the reverse-CDF (RCDF) of the immune-response biomarker as a function of the threshold by various covariate strata, and a plot of the estimated expected outcome within levels of the propensity score $P(A \geq v\mid W)$ for the pooled CYD14 + CYD15 analysis.

\begin{figure}
\centering

\hspace{5cm}\begin{subfigure}{.5\textwidth} 
\caption{CYD14 Study }
 \includegraphics[width=1\textwidth]{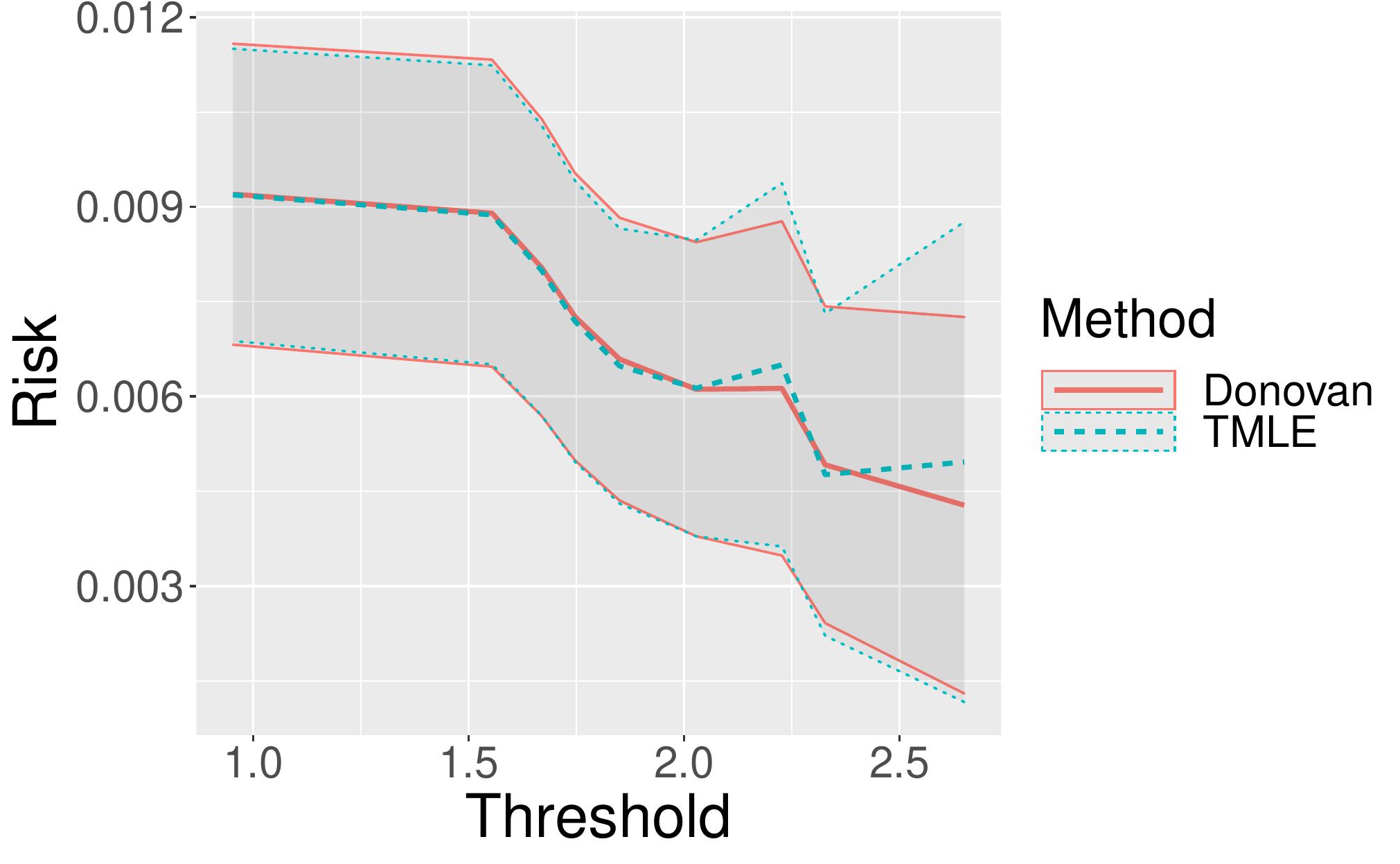} 
\label{F1}
\end{subfigure}\begin{subfigure}{.5\textwidth} 
\caption{CYD15 Study }
 \includegraphics[width=1\textwidth]{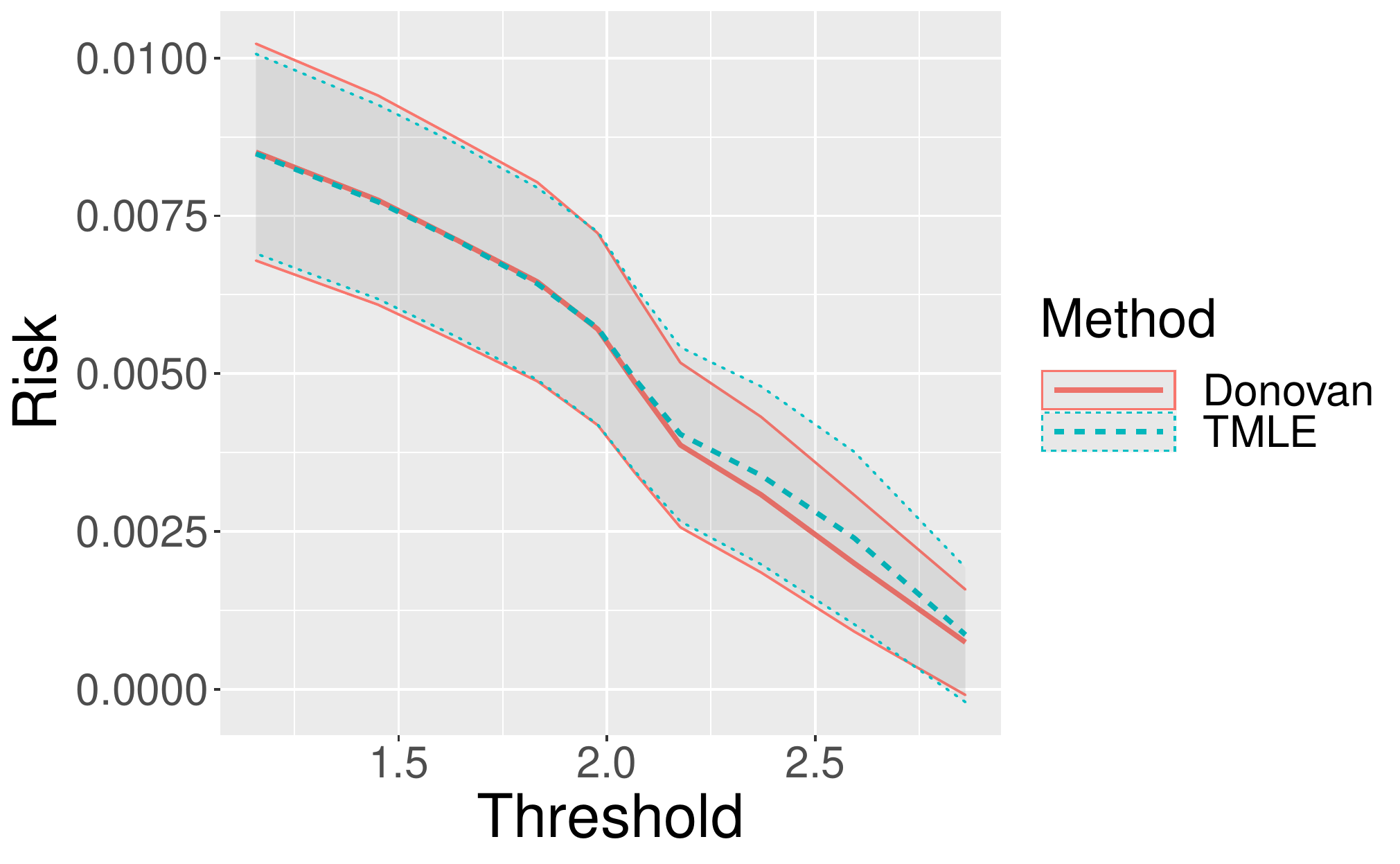}
\label{F2}
\end{subfigure}
\hspace{5cm}\begin{subfigure}{1\textwidth} 
\caption{CYD14 + CYD15 9--16 year-olds Pooled}
\includegraphics[width=0.5\textwidth]{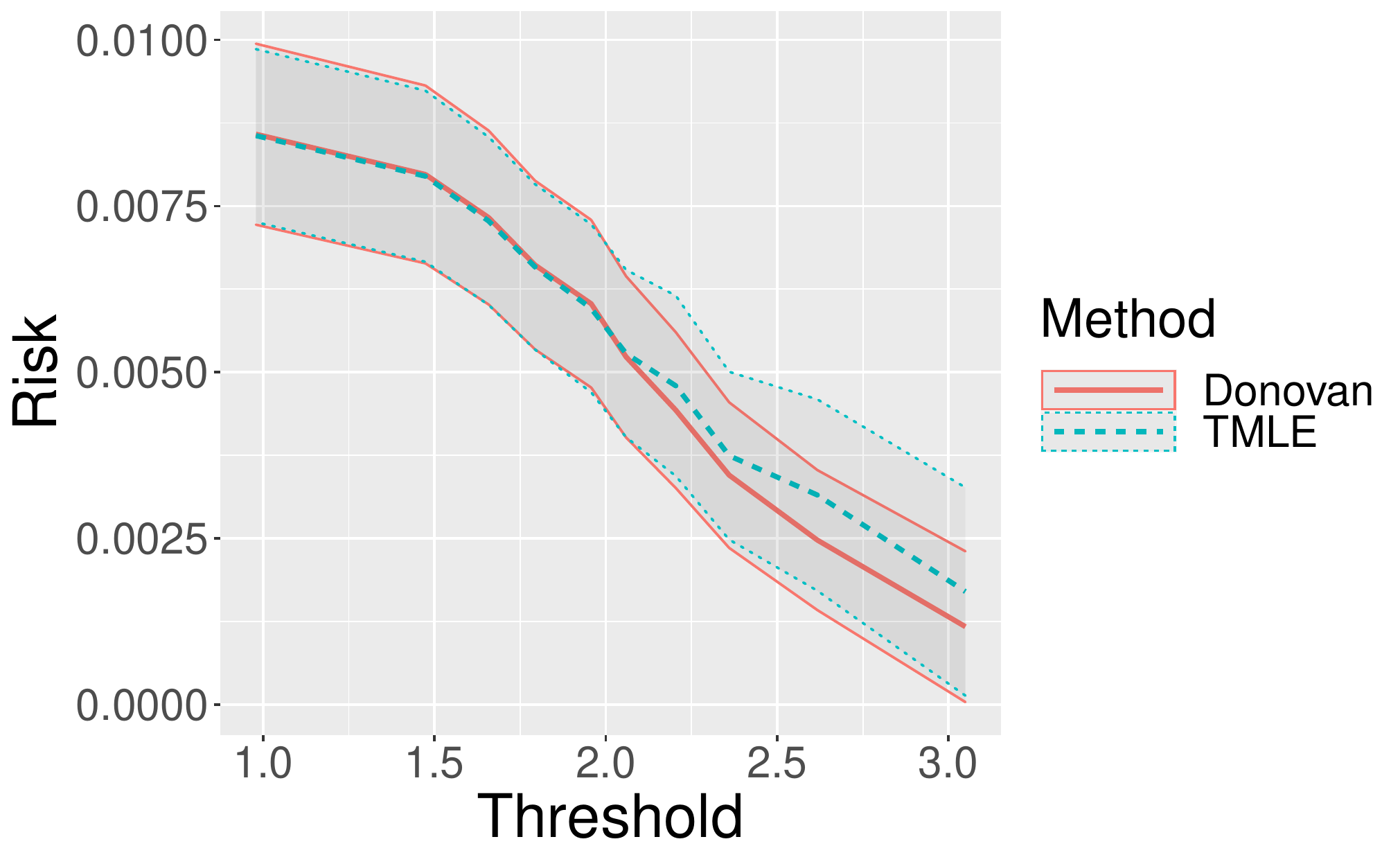}\includegraphics[width=0.5\textwidth]{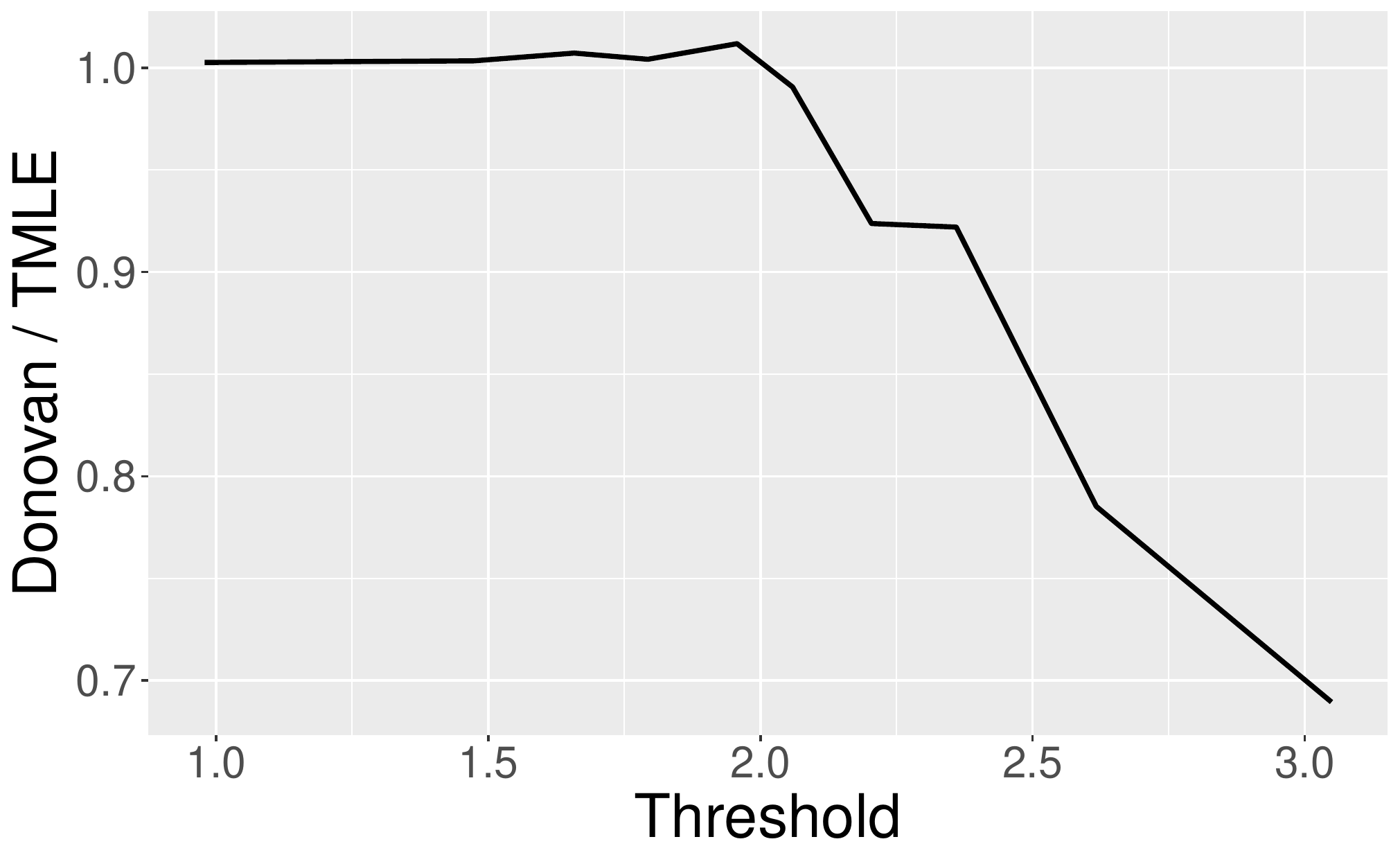}

\label{F3}
\end{subfigure}
\caption*{Adjusted (blue) and unadjusted (red) threshold-response function for a range of thresholds of log10 titer at month 13 for the CYD14 (a), CYD15 (b), and CYD14+15 9--16 years-old (c) pooled studies with pointwise 95\% confidence bands. For (c), the ratio of the Donovan and srTMLE estimates are also plotted. This figure appears in color in the electronic version of this article, and any mention of color refers to that version.
}
\end{figure}

\begin{figure}
\centering

\hspace{5cm}\begin{subfigure}{.5\textwidth} 
 
 \includegraphics[width=1\textwidth]{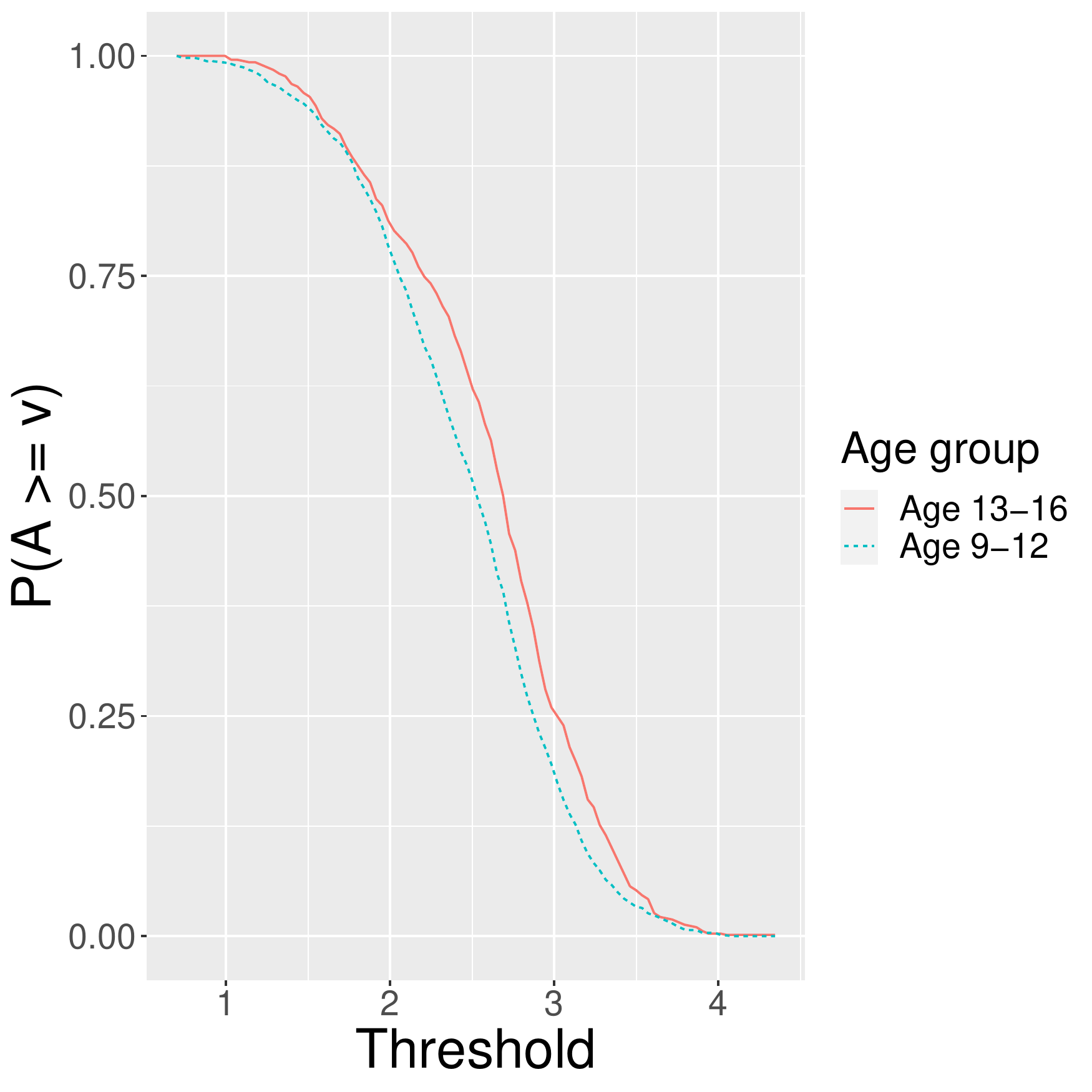}

\end{subfigure}\begin{subfigure}{.5\textwidth} 
 
 \includegraphics[width=1\textwidth]{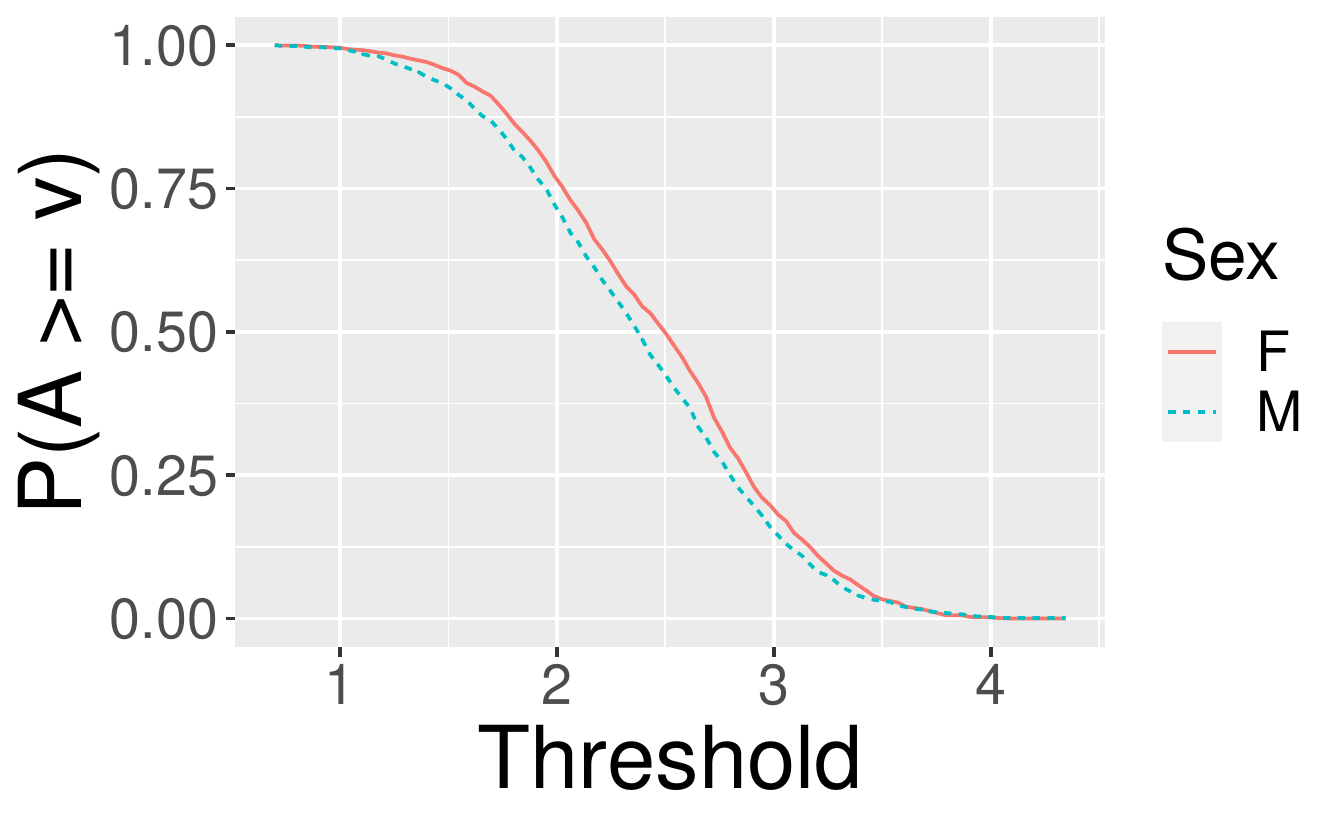} 
 
\end{subfigure}

\hspace{5cm}\begin{subfigure}{.5\textwidth} 
 
 \includegraphics[width=1\textwidth]{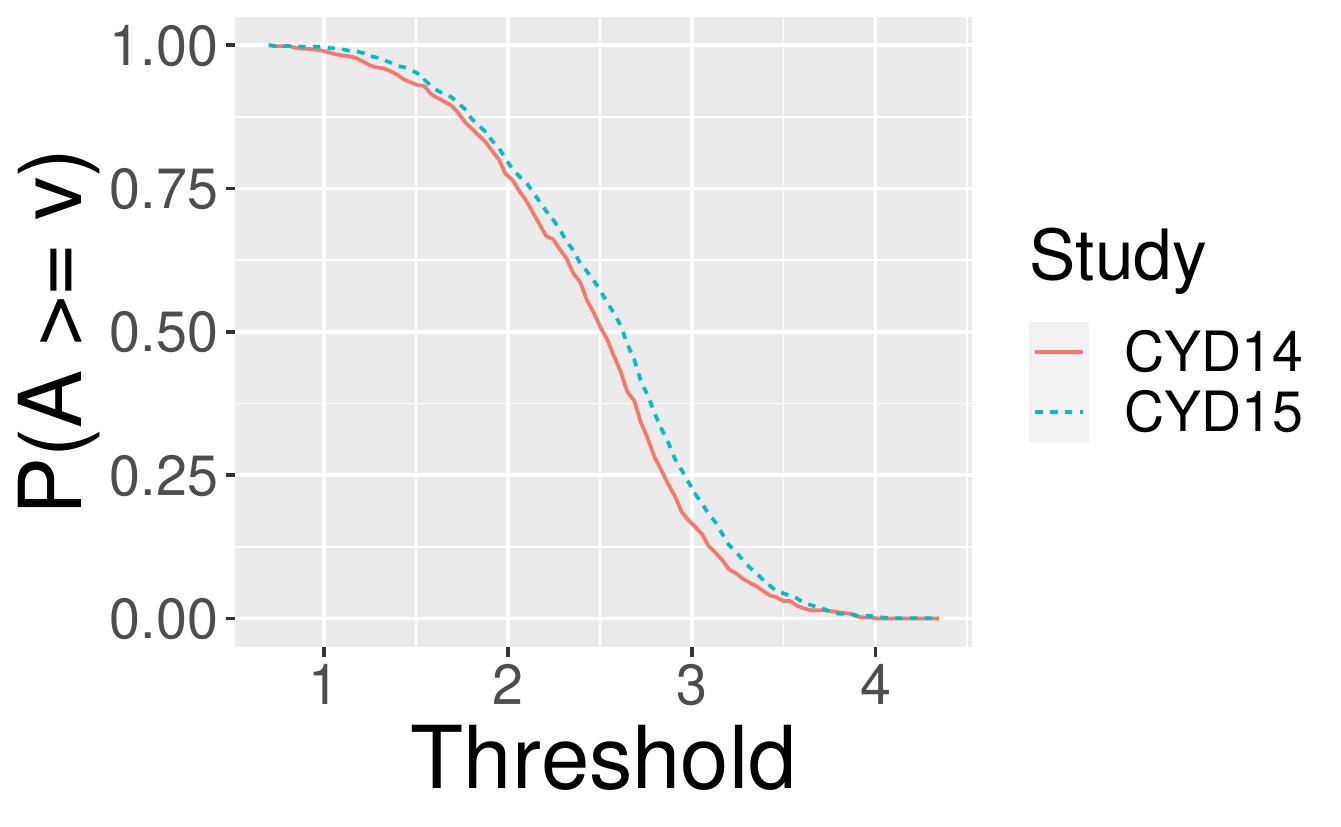}

\end{subfigure}\begin{subfigure}{.5\textwidth} 
 
 \includegraphics[width=1\textwidth]{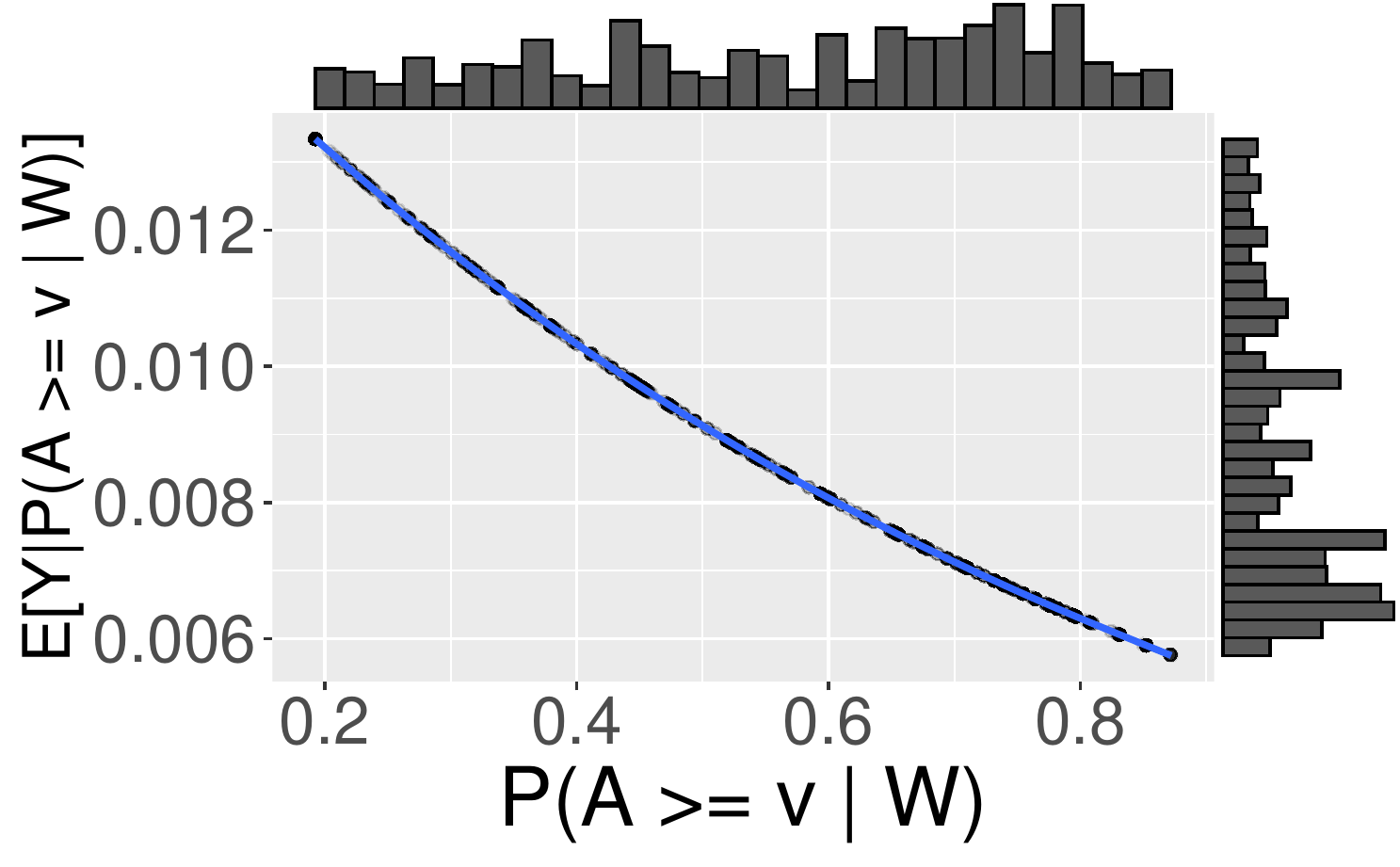} 
 
\end{subfigure}

\caption{(Top + Bottom left) Plots of the reverse-CDF as a function of the biomarker threshold by covariate stratum. (Bottom right) A plot of the estimated expected outcome within levels of the estimated propensity score -- $E_0[Y \mid P(A \geq v\mid W)]$ as a function of the estimated propensity score where $v = \text{Median}(A)$. Marginal histograms of the estimates of $E_0[Y \mid P(A \geq v\mid W)]$ and $P(A \geq v \mid W)$ evaluated at observations pooled over the CYD14 and CYD15 studies are also shown. This figure appears in color in the electronic version of this article, and any mention of color refers to that version.
}
\label{figure::hists}
\end{figure}

We see that month 13 antibody titer is inversely associated with dengue, where subgroups with threshold value above 3.2 (1585 on natural scale) have on average 7-fold lower estimated risk than those with a value above 1.5 (32 on natural scale) (pooled data). There is little-to-no difference between the adjusted and unadjusted threshold-response function estimates for the CYD14 study. On the other hand, for the CYD15 analysis and pooled analysis  there are some noticeable differences between the adjusted and unadjusted threshold-response function estimates for larger thresholds. For the pooled analysis, the unadjusted estimates are around 20-30\% less than the adjusted estimates for the larger thresholds. Figure \ref{figure::hists} (bottom-right) shows that there is reasonable variation of the risk of disease acquisition within levels of the estimated propensity score and that the estimated propensity score values evaluated at observations are close to homogeneously distributed on $[0.2,0.8]$. Figure \ref{figure::hists} (Top and Bottom-left) shows that the distribution of the immune-response biomarker are fairly similar within covariate strata. However, the relative difference in the RCDF becomes more significant as the threshold increases, which suggests the presence of more noticeable confounding bias for larger thresholds. 

     

\section{Conclusions}

In this manuscript, we developed a novel nonparametric efficient and double-robust targeted minimum-loss estimator (srTMLE) for the covariate-adjusted threshold-response function with informative outcome missingness, extending the previous work of Donovan et al. (2019). We also presented an inefficient and possibly biased estimator (binTMLE) for the threshold-response that equals the binary treatment-specific mean TMLE (van der Laan, Rose, 2011) where the continuous biomarker is discretized as above or below a threshold. We showed that the threshold-response at a given threshold can be causally identified with a stochastic intervention whose support is restricted to be above the threshold. The theoretical properties of the novel srTMLE were confirmed in a variety of simulation settings, and it's performance was compared to the binTMLE and the Donovan et al. unadjusted estimator. 
Finally, we applied the method to the CYD14 and CYD15 trial data to estimate the covariate-adjusted threshold-response for associating neutralizing antibody titer in vaccine recipients with dengue disease incorporating IPW to accommodate case-cohort sampled antibody titer data. We compared the results with the unadjusted threshold-response estimator as presented in Donovan et al. (2019). The difference between the unadjusted and adjusted estimates was negligible, although some small differences were observed for the trial-pooled analysis. While in this article we considered estimation of $E_{0,W}E_0[Y \mid  A \geq v, \,W]$, the results can be applied to the parameter $E_{0,W}E_0[Y \mid  A < v,\, W]$ by defining $A := - A$.  Various relative and additive contrasts of threshold-response-type estimands can be estimated using a substitution estimator based on the proposed srTMLE and inference can be obtained from the delta method. A useful extension of our work would be towards estimation of the threshold-specific survival function $E_{0,W}P_0(T > t \mid  A \geq v,\, W)$. While out of the scope of this article, a hazard-based TMLE or sequential-regression-based TMLE for this parameter can be developed using similar arguments as in this manuscript. Also, interesting future work would be the investigation of immune marker surrogate endpoints based on the threshold-response function.

\begin{center}
\textsc{Acknowledgements}
\end{center}
\vspace*{-8pt}
\noindent Research reported in this publication was supported by the National Institute Of Allergy And Infectious
Diseases (NIAID) of the National Institutes of Health (NIH) under Award Number 2R37AI054165. The content is
solely the responsibility of the authors and does not necessarily represent the official views of the NIH.  The authors thank the participants of the CYD14 and CYD15 trials and our SanofiPasteur colleagues who conducted these trials. We would like to thank the reviewers and the editor for their helpful comments which led to numerous improvements to the manuscript.

\vspace*{-14pt}

\begin{center}
\textsc{Data Availability Statement}
\end{center}
\vspace*{-8pt}
\noindent The CYD14 and CYD15 data are available upon request to the sponsor of the studies, Sanofi Pasteur. Data requests can be made through the following url: https://vivli.org.

\vspace*{-14pt}

\bibliographystyle{biom}
\bibliography{ref}
\section*{Supporting Information}
Web Appendices referenced in Sections 2.2, 2.3, 3.1, 3.2, 3.3, 4.1 are available with this paper at the Biometrics website on Wiley Online Library:
(A) Derivation of efficient influence function of target parameter and causal identification.
(B) Proof of efficiency of the TML estimators.
(C) Definition and discussion of Donovan estimator.
(D) Simulation designs.
(E) Adjusting TMLE when biomarker or treatment is missing-at-random.
(F) Some miscellania regarding inference, testing, causal interpretation.
(G) Nuisance parameter estimation and computational considerations.
(H) R code implementing the estimators given in Section 5.
(I) Applying method to bounded continuous outcomes.

\newpage

\section{Web Appendix A}

\subsection{Efficient Influence Function of $\Psi_v^{adj}$ }

The following argument is standard in efficiency theory. We recommend Bickel, Klaassen, Ritov, Wellner (1993) for a comprehensive treatment. We will first derive the efficient influence function for the more general parameter $\Psi_{v}^{adj}$. Afterwards, by setting the $P(\Delta = 1|A, W) = 1$ and $\Delta = 1$, we obtain the EIF of $\Psi_v^{adj}$ in the no outcome-missingness setting.

In order to compute the EIF, we must compute the pathwise derivative of the map $P \mapsto \Psi_v^{adj}(P)$, viewed as a smooth map on the statistical model $\mathcal{M}$ equipped with a non-parametric tangent bundle.

First, we rewrite the parameter as
$$\Psi_{v}^{adj}(P) = E_W \int E[Y| A = s, W, \Delta = 1] P(A=ds |A\geq v, W)$$
$$=  E_W \int E[Y|A =s, W, \Delta = \delta] \frac{1(\delta=1)}{P(\Delta=1|A=s,W)}  \frac{1(s \geq v)  P(A = ds|W)}{P(A\geq v|W)} P(\Delta= d\delta|A=s,W) $$

$$= E_W \int y \frac{1(\delta=1)}{P(\Delta=1|A=s,W)} P(Y=dy| A =s, W) \frac{1(s \geq v) P(A=ds|W)}{P(A\geq v|W)} P(\Delta=d\delta|A=s,W) $$

$$= E_W \int y \frac{1(\delta=1)}{P(\Delta=1|A=s,W)} P(Y=dy| A =s, W) \frac{1(s \geq v) P(A=ds|W)}{\int_v^\infty P(A =da|W) } P(\Delta=d\delta|A=s,W).  $$

Define the tangent spaces,
$$T\mathcal{M}_Y(P) = \left\{(y,a,w, \delta) \mapsto D_Y(y,a,w, \delta): E_P[D_Y|A,W, \Delta] = 0,\, D_Y \in L^2(P) \right\},$$
$$T\mathcal{M}_A(P) = \left\{(a,w) \mapsto D_A(a,w): E_P[D_A|W] = 0,\, D_A \in L^2(P) \right\},$$
$$T\mathcal{M}_W(P) = \left\{(w) \mapsto D_W(w): E_P[D_W] = 0,\, D_W \in L^2(P) \right\}.$$
Let $P \in \mathcal{M}$. Smooth paths $P_{\varepsilon} \subset \mathcal{M}$ with $P_{\varepsilon = 0} = P$  are of the form: 
$$P(Y = dy,\Delta = d\delta, A = da,W=dw) 
$$ $$= P(Y=dy|A = a, W = w, \Delta = \delta)(1+ \varepsilon D_Y(y,a,w, \delta)) P(A=da| W=w)(1+ \varepsilon D_A(a,w))$$
$$\cdot dP(W=dw)(1+ \varepsilon D_W(w)) P(\Delta = d\delta|a, w)$$
for $D_W \in T\mathcal{M}_W(P),\, D_A \in T\mathcal{M}_A(P),\, D_Y \in T\mathcal{M}_Y(P).$ Note, we assume without loss of generality that the missingness mechanism is known, and do not vary the missingness mechanism along these paths. This can be done because $\Psi_{v}^{adj}$ does not depend on the missingness mechanism, and therefore the canonical gradient in the statistical model where the missingness mechanism is known is identical to the canonical gradient in the model where it is not known.
It can be verified that the score of such a path $P_{\varepsilon}$ is exactly $D:= D_Y + D_A + D_W.$
Define $ h(\delta, s, W) = \frac{1(\delta=1)}{P(\Delta=1|s,W)}$ and $G(\delta, s, W) = P(\Delta=\delta|A=s,W)$
Computing the path-wise derivative, we find
$$\frac{d}{d\varepsilon} \Psi_{v , \Delta}^{adj}(P_{\varepsilon}) |_{\varepsilon = 0}=$$
$$
E_W \int y P(Y=dy| s, W, \delta)( D_Y(y, s, W, \delta)) \frac{1(s\geq v) P(A = ds|W) }{\int_v^\infty P(A =da|W)} h(\delta, s, W) G(d\delta, s, W) $$

$$+ E_W \int y P(dy| A =s, W, \delta) \frac{1(s\geq v)  P(A = ds|W) ( D_A(s, W))}{\int_v^\infty P(A =da|W)  }  h(\delta, s, W) G(d\delta, s, W)$$

$$ -E_W \int y  P(Y=dy| A =s, W) \frac{1(s\geq v) P(A = ds|W) }{ \left(\int_v^\infty  P(A =da|W)  \right) ^2} $$
$$\cdot \left( \int_v^\infty  P(A =da|W) ( D_A(a, W))  \right) h(\delta, s, W) G(d\delta, s, W) $$
$$+  E_W \int D_W(w) E[Y|A = s, W, \delta] P(A=ds |A\geq v, W)h(\delta, s, W) G(d\delta, s, W)$$

Define,
$$\Psi_{v,W}^{adj}(W) = \int E[Y| A = s, W, \Delta = 1] P(A=ds |A\geq v, W) .$$
Some simplification of the previous expressions give

$$\frac{d}{d\varepsilon} \Psi_{v}^{adj}(P_\varepsilon) \Big |_{\varepsilon = 0} = 
\int Y  D_Y(Y, A, W, \Delta) \frac{1(A \geq v)}{P(A \geq v |W)}\frac{1(\Delta = 1)}{P(\Delta=1|A,W)} dP$$

$$+ \int \left( E[Y| A, W, \Delta=1] -  \Psi_{v,W}^{adj}(W)\right) \frac{1(A \geq v) D_A(A, W)}{P(A \geq v |W)} dP $$
$$+ \int  \Psi_{v,W}^{adj}(W) D_W(W) dP.$$
Now, define,
$$D_Y^* = \frac{1(A \geq v)}{P(A \geq v |W)}\frac{1(\Delta = 1)}{P(\Delta=1|A,W)}\left[Y - E[Y| A, W, \Delta = 1]\right],$$
$$D_A^* = \left( E[Y| A, W, \Delta = 1] -  \Psi_{v,W}^{adj}\right) \frac{1(A \geq v)}{P(A \geq v |W)}, $$
$$D_W^* =   \Psi_{v,W}^{adj}(W) - \Psi_v^{adj},$$

$$D^*(P) = D_Y^* + D_A^* + D_W^* $$
It is easily verified that $D_Y^* \in T\mathcal{M}_Y(P),\, D_A^* \in T\mathcal{M}_A(P),\, D_W^* \in T\mathcal{M}_W(P), \,D^*(P) \in L^2_0(P).$ In addition, it can be shown that $D_Y^*,  D_A^*$ and $ D_W^* $ are orthogonal in $L^2_0(P).$
It follows that for a path $P_{\varepsilon}$ with score $D(P) \in L^2(P)$, we have
$$\frac{d}{d\varepsilon} \Psi_{v , \Delta}^{adj}(P_\varepsilon) \Big |_{\varepsilon = 0} = \int D(P) \cdot D^*(P) dP.$$
Since, $D^*(P) \in L^2_0(P)$ is a valid score, it follows that $D^*(P)$ is the canonical gradient, and thus the efficient influence function, of $\Psi_{v}^{adj}.$

Now, suppose we have no outcome missingness so that $P(\Delta = 1|A,W) = 1$. Then, $\Psi_{v,W}^{adj}(W) = E[Y|A\geq v,W]$. We then find

$$D^*(P) = \frac{1(A \geq v)}{P(A \geq v |W)}\left[Y - E[Y|A\geq v,W]\right] + E[Y|A\geq v,W] - \Psi_v^{adj}(P).$$

\newpage 
\subsection{Causal Identification of the Parameter}
Define $\widetilde{A}_v = \widetilde{d}_{0,v}(A,W)$. The conditions in the text are equivalent to:
\begin{itemize}
    \item A1. There exists a $\delta > 0$ such that $P(A \geq v |W) > \delta$ a.e. $W$, and $P(\Delta = 1|A,W) > \delta$ a.e. $A, W$ (positivity)
    \item A2. $\widetilde{A}_v \independent  U_Y  |W$
    \item A3. $U_Y \independent  A | W $
    \item A4. $U_Y \independent \Delta| A,W $
    
\end{itemize}
Let $a \geq v$ be in the support of $A$ and let $w$ be in the support of $W$.
$$E \left[Y_{\widetilde{A}_v}| \widetilde{A}_v =a, W = w \right] = E \left[ f_Y( \widetilde{A}_v, W, U_Y) \Big |\widetilde{A}_v = a, W = w\right]$$
$$= E \left[ f_Y(a, w, U_Y) \Big |\widetilde{A}_v = a, W = w\right] = E \left[  f_Y(a, w, U_Y) \Big |W = w\right]$$
$$=  E \left[  f_Y(a, w, U_Y) \Big |A = a, W = w\right] =   E \left[ f_Y(a, w, U_Y) \Big |A = a, W = w, \Delta = 1\right] $$ $$= E \left[   Y \Big |A = a, W = w, \Delta = 1\right] .$$
The first equality  follows by definition of $Y_{\widetilde{A}_v}$ and the second inequality follows from the properties of the conditional expectation. The third inequality follows from independence assumption A2 and the fourth inequality follows from independence assumptions A3. The fifth equality follows from assumption A4 and the conditional expectation with respect to $\Delta$ is well-defined a.e. by the second part of assumption A5.
Therefore,
$$E\left[ Y_{\widetilde{A}_v} \right] = E \left[ f_Y(\widetilde{A}_v, W, U_Y)  \right] = E\left[ E \left[ f_Y(\widetilde{A}_v, W, U_Y) \Big |\widetilde{A}_v, W \right] \right]$$
$$ = \int E \left[ f_Y(\widetilde{A}_v, W, U_Y) \Big |\widetilde{A}_v = a, W = w\right] P_{\widetilde{A}_v}(da|W=w) P_W(dw)$$
$$ = \int  E \left[Y \Big |A = a, W = w, \Delta = 1\right] P_{\widetilde{A}_v}(da|W=w) P_W(dw) = E_W E[E[Y| A, W, \Delta = 1] | A \geq v, W] $$
which concludes the proof of the identification result.
 
 \newpage 

\section{Web Appendix B}

\subsection{Efficiency of the sequential-regression-based TMLE}

\begin{enumerate}[ label=({B}{{\arabic*}})]
    \item  $\delta < P(Y = 1 \mid A \geq v) < 1- \delta$ for some $\delta > 0$.
    \item $g_{0,v}, G_{0} > \delta$ and $g_{n,v}, G_n > \delta$ with probability tending to 1 for some $\delta >0$.
    \item  The set of realizations of $w \mapsto g_{n,v}(w)$, $(a,w) \mapsto G_n(a,w)$, $(a,w) \mapsto Q_n(a,w)$, and $w \mapsto Q_{n,v}(w)$ are $P_0$-Donsker.  
    \item  $\norm{Q_{n} - Q_{0}} = o_P(n^{-1/4})$, $\norm{Q_{n,v} - Q_{0,v}} = o_P(n^{-1/4})$, $\norm{g_{n,v} - g_{0,v}}  = o_P(n^{-1/4})$, $\norm{G_{n} - G_{0}}  = o_P(n^{-1/4})$.
    \item  The union over all $v \in K$ of the set of realizations of the functions given in B3 is $P_0$-Donsker. 
    
\end{enumerate}

 Let $\Psi^*_{n,v} = P_n Q_{n,v}^*$ denote the targeted estimate of $\Psi^{adj}_{v}$.

\begin{theorem}
Suppose conditions B0, B1, B2, B3 and B4 hold. Then, the TML estimator $\Psi^*_{n,v}$ satisfies
$$\sqrt{n} (\Psi^*_{n,v} - \Psi_{v}^{adj}(P_0)) = n^{-1/2} \sum_{i=1}^n D_{P_0,v}(W_i, A_i, \Delta_i, \Delta_i Y_i) + o_p(1).$$
If in addition assumption $C5$ holds, then $\left( \sqrt{n} (\Psi^*_{n,v} - \Psi_{v}^{adj}(P_0)): v \in K\right)$ converges to a tight mean-zero Gaussian process in $l^\infty(K)$ with covariance function $\rho(v_1,v_2) = P_0 D_{P_0,v_1} D_{P_0,v_2}.$

\end{theorem}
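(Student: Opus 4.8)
I would follow the standard TMLE template. By construction (Section~\ref{section::seqregTMLE}) the two fluctuation steps force the targeted fit $P_{n,v}^* = (P_{W,n}, g_{n,v}, G_n, Q_n^*, Q_{n,v}^*)$ to solve the efficient-score equation $P_n D_{P_{n,v}^*,v}=0$, with $D_{P,v}$ the canonical gradient of Lemma~\ref{lemma::EIF}. Being a canonical gradient, $D_{P,v}$ obeys the exact von Mises identity
\[
\Psi_v^{adj}(P)-\Psi_v^{adj}(P_0) = -P_0 D_{P,v} + R_{2,v}(P,P_0)
\]
for every $P\in\mathcal M$, and a direct computation — rewriting $\Psi_v^{adj}$ as a sequential regression, integrating out $\Delta$ using the missing-at-random assumption (A2), and observing that the $P_W$-dependence cancels — shows that the remainder depends on $P$ only through $(Q, Q_v, g_v, G)$ and equals
\[
R_{2,v}(P,P_0) = E_{0,W}\!\left[(Q_v - Q_{0,v})\,\frac{g_v-g_{0,v}}{g_v}\right] + E_0\!\left[\frac{1(A\geq v)\,(G_0-G)\,(Q_0-Q)}{g_v\,G}\right].
\]
Subtracting $P_n D_{P_{n,v}^*,v}=0$ and adding and subtracting $D_{P_0,v}$ then yields
\[
\Psi_v^{adj}(P_{n,v}^*)-\Psi_v^{adj}(P_0) = (P_n-P_0)D_{P_0,v} + (P_n-P_0)\big(D_{P_{n,v}^*,v}-D_{P_0,v}\big) + R_{2,v}(P_{n,v}^*,P_0),
\]
so the proof reduces to showing the last two terms are $o_P(n^{-1/2})$.

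\textbf{The remainder term.} First I would check that the targeting steps preserve the good behaviour of the initial estimators: each logistic submodel is Lipschitz in its fluctuation parameter on the region where the fitted regression is bounded away from $\{0,1\}$, and the maximum-likelihood fluctuation parameters are $o_P(n^{-1/4})$ because the corresponding weighted score, evaluated at $\varepsilon=0$, equals $O_P(n^{-1/2})+O_P(\norm{Q_n-Q_0})$ (respectively $+\,O_P(\norm{Q_{n,v}-Q_{0,v}})$), while the observed information is bounded away from $0$ under (B1)--(B2). Hence $\norm{Q_n^*-Q_0}=o_P(n^{-1/4})$ and $\norm{Q_{n,v}^*-Q_{0,v}}=o_P(n^{-1/4})$, and $Q_n^*, Q_{n,v}^*$ remain in $P_0$-Donsker classes by (B3). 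Applying the Cauchy--Schwarz inequality to the displayed $R_{2,v}$, with $1/g_{n,v}$ and $1/G_n$ bounded via (B2), then gives $|R_{2,v}(P_{n,v}^*,P_0)| \lesssim \norm{Q_{n,v}^*-Q_{0,v}}\,\norm{g_{n,v}-g_{0,v}} + \norm{G_n-G_0}\,\norm{Q_n^*-Q_0} = o_P(n^{-1/2})$ by (B4). The block structure of this bound — the $g_v$-error pairs only with the $Q_v$-error and the $G$-error only with the $Q$-error — is also exactly what yields the double-robustness statement for the srTMLE.

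\textbf{Empirical-process term, the CLT, and the degenerate case.} Since $D_{P,v}$ is a fixed Lipschitz functional of $(Q,Q_v,g_v,G)$ on the positivity region, (B3) together with the previous step places $D_{P_{n,v}^*,v}$ in a fixed $P_0$-Donsker class eventually, and $\norm{D_{P_{n,v}^*,v}-D_{P_0,v}}_{L^2(P_0)} = o_P(1)$ (from (B4) and from $\Psi_v^{adj}(P_{n,v}^*)\to\Psi_v^{adj}(P_0)$ in probability, which itself follows from the displayed decomposition, $\sup_{f\in\mathcal F}|(P_n-P_0)f|=o_P(1)$ over a $P_0$-Donsker class $\mathcal F$, and $R_{2,v}=o_P(1)$). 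The asymptotic-equicontinuity lemma for Donsker classes then gives $(P_n-P_0)\big(D_{P_{n,v}^*,v}-D_{P_0,v}\big) = o_P(n^{-1/2})$, and the ordinary CLT applied to $(P_n-P_0)D_{P_0,v}$ (finite variance by (B1)--(B2) and boundedness of $Y,\Delta$) completes the asymptotic-linearity claim; efficiency is immediate because $D_{P_0,v}$ is the efficient influence function. If only (B1) fails, then $P_0(Y=1\mid A\geq v)\in\{0,1\}$ forces $Y$ to be $P_0$-almost surely constant on $\{A\geq v\}$, whence $Q_{0,v}\equiv\Psi_v^{adj}(P_0)$ and $D_{P_0,v}\equiv 0$; since the remainder and empirical-process bounds never used (B1), the same decomposition gives $\Psi_v^{adj}(P_{n,v}^*)-\Psi_v^{adj}(P_0)=o_P(n^{-1/2})$. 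For the process statement I would run the decomposition simultaneously over $v\in K$: (B5) upgrades the empirical-process bound to hold uniformly in $v$, the uniform version of (B4) makes $\sup_{v\in K}|R_{2,v}(P_{n,v}^*,P_0)|=o_P(n^{-1/2})$, and $\{D_{P_0,v}:v\in K\}$ is $P_0$-Donsker — since $v\mapsto 1(A\geq v)$ generates a VC class and $v\mapsto g_{0,v}(w),Q_{0,v}(w)$ have uniformly bounded variation on $K$, the $v$-indexed efficient influence function is a Lipschitz image of finitely many VC / bounded-variation classes. The functional CLT then yields weak convergence in $l^\infty(K)$ to a tight mean-zero Gaussian process with covariance $\rho(v_1,v_2)=P_0 D_{P_0,v_1}D_{P_0,v_2}$.

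\textbf{Main obstacle.} I expect the crux to be twofold: (i) the exact derivation of $R_{2,v}$ and the verification that it genuinely factors into products of nuisance errors — the same computation underlies both the $\sqrt n$-rate here and the double robustness — and (ii) showing that the two fluctuation steps do not spoil the rates or the Donsker membership of the initial estimators, i.e.\ controlling the fluctuation parameters $\hat\varepsilon_n$ under (B1)--(B2). The empirical-process and functional-CLT portions are comparatively routine given (B3) and (B5).
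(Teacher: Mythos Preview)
Your proposal is correct and follows essentially the same route as the paper: the von Mises expansion using $P_n D_{P_{n,v}^*,v}=0$, the explicit computation of $R_{2,v}$ showing it factors into the two products $\norm{G_n-G_0}\norm{Q_n^*-Q_0}$ and $\norm{g_{n,v}-g_{0,v}}\norm{Q_{n,v}^*-Q_{0,v}}$, Cauchy--Schwarz plus (B2) and (B4) for the remainder, Donsker permanence and asymptotic equicontinuity for the empirical-process term, and a Donsker argument for $\{D_{P_0,v}:v\in K\}$ in the functional-CLT step. If anything you are slightly more careful than the paper's proof, which tacitly assumes the targeted fits $Q_n^*,Q_{n,v}^*$ inherit the rates and Donsker membership of the initial estimators and does not spell out the degenerate (B1)-fails case; your treatment of both points is appropriate.
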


\begin{proof}

Define,
$$D_Y^*(g_v,Q, G)(W,A, \Delta, Y) = \frac{1(A \geq v)}{P(A \geq v |W)}\frac{1(\Delta = 1)}{P(\Delta=1|A,W)}\left[Y - E[Y| A, W, \Delta = 1]\right],$$
$$D_A^*(g_v, Q, Q_v)(W,A) = \left( Q(A,W) -  Q_v(W)\right) \frac{1(A \geq v)}{P(A \geq v |W)}, $$
$$D_W^*(E_W, Q_v)(W) =   Q_v(W) - E_W Q_v.$$

The efficient influence function is then given by $D^*(g_v, G, Q, Q_v) = D_Y^*(g_{v},Q, G) + D_A^*(g_{v}, Q, Q_{v}^*) + D_W^*(P_W, Q_{v})$.

The targeting of the sequential regression TMLE ensures that 
$$\frac{1}{n} \sum_{i=1}^n D_Y^*(g_{n,v},Q_{n}^*, G_n)(W_i,A_i,Y_i) = 0,$$
$$\frac{1}{n} \sum_{i=1}^n D_A^*(g_{n,v}, Q_n^*, Q_{n,v}^*)(W_i,A_i,Y_i) = 0.$$
And by estimating $E_W$ with the empirical measure $P_n$, we find
$$\frac{1}{n} \sum_{i=1}^n D_W^*(P_n, Q_{n,v}^*)(W)  = 0.$$

Note that $Q_n^*$, $Q_{n,v}^*$, and $g_{n,v}$ may not be compatible with a single probability distribution $P_{n}^*$, which is why we stress the dependence on each nuisance parameter. Nonetheless since the EIF score equation is solved, we obtain an identical asymptotic expansion as in the proof of Theorem 1. With a slight abuse of notation, we have
$$\Psi(g_{n,v}, G_n, Q_n^*, Q_{n,v}^*) - \Psi(P_0) $$
$$= P_n D^*_v(P_0) + (P_n - P_0) \left[D^*(g_{n,v}, G_n, Q_n^*, Q_{n,v}^*) - D^*_v(P_0) \right] + R_2(g_{n,v}, G_n, Q_n^*, Q_{n,v}^*, P_0),$$
where
$$R_2(g_{n,v}, G_n, Q_n^*, Q_{n,v}^*, P_0) = P_0 \Big[ \frac{1(A \geq v)1(\Delta =1)}{g_{n,v}(W)G_n(A,W)} \left[Q_0(A,W) - Q_n^*(A,W) \right]$$
$$ + \left(Q_{n}^*(A,W) - Q_{n,v}^*(W)  \right) \frac{1(A \geq v)}{g_{n,v}^*(W)} $$
$$+ Q_{n,v}^*(W) - Q_{0,v}(W)  \Big].$$
The remainder can further be written
$$R_2(g_{n,v}, G_n, Q_n^*, Q_{n,v}^*, P_0) =P_0 \Big[ \frac{1(A \geq v)}{g_{n,v}(W)}\left( \frac{G_0(A,W)}{G_n(A,W)}- 1\right) \left[Y - Q_n^*(A,W) \right]$$
$$+ P_0 \Big[ \frac{1(A \geq v)}{g_{n,v}(W)} \left[Y - Q_n^*(A,W) \right]$$
$$ + \left(Q_{n}^*(A,W) - Q_{n,v}^*(W)  \right) \frac{1(A \geq v)}{g_{n,v}(W)} $$
$$+ Q_{n,v}^*(W) - Q_{0,v}(W)  \Big]=$$
$$P_0 \Big[ \frac{1(A \geq v)}{g_{n,v}(W)}\left( \frac{G_0(A,W)}{G_n(A,W)}- 1\right) \left[Q_0(A,W) - Q_{n,v}^*(A,W) \right]$$
$$+ P_0 \Big[\left(1- \frac{g_{0,v}(W)}{g_{n,v}(W)}\right) \left[Q_{0,v}(W) - Q_{n,v}^*(W) \right]. $$

An application of Cauchy-Schwartz and the positivity assumption shows that $$R_2(g_{n,v}, G_n, Q_n, Q_{n,v}, P_0) = O_P \left(\norm{G_n - G_0} \norm{Q_n^* - Q_0}  + \norm{g_{n,v} - g_{0,v}} \norm{Q_{n,v}^* - Q_{0,v}} \right), $$
which is $o_P(n^{-1/2})$ by the consistency assumptions. 

The Donsker assumptions in C2 on all the estimators and the Donsker permanence property (Theorem 2.10.6, van der Vaart, Wellner (1996)\nocite{vanderVaartWellner}) further implies that the empirical process term is $o_P(n^{-1/2}).$
Thus,
$$\sqrt{n} (\Psi^{*}_{n,v} - \Psi_{v}^{adj}(P_0)) = n^{-1/2} \sum_{i=1}^n D_{P_0,v}(W_i, A_i, \Delta_i, \Delta_i Y_i) + o_p(1)$$
as desired.

We now prove, under the regularity assumptions mentioned in Theorem 1, that the TML estimator indexed by the threshold $v \in K$ converges in $l^\infty (K)$ to a tight Gaussian process. Firstly, we claim that the regularity assumptions imply the remainder and empirical process term in the expansion are $o_P(n^{-1/2})$ uniformly in $v$. The uniform convergence of the remainder follows immediately by the assumption of uniform $L^2(P_0)$ convergence over $v \in K$ of $g_{n,v}$ and $Q_{n,v}$ stated in the Theorem. By equi-continuity of the empirical process, the empirical process term claim follows if $\left\{D_{P_n^*, v} - D_{P_0, v}: P_n^*, v \in K \right\}$ is Donsker.  Note that
 both  $\left\{w \mapsto \int_v^\infty g_0(a|w) da: v \in K \right \} $ and \newline $\left\{w \mapsto \int_v^\infty Q_0(a,w) \frac{1}{\int_v^\infty g_0(s|w) ds} g_0(a|w) da: v \in K \right \} $ are Donsker since $v \mapsto \int_v^\infty g_0(a|w)$ and $v \mapsto \int_v^\infty Q_0(a,w) \frac{1}{\int_v^\infty g_0(s|w) ds} g_0(a|w) da$ are Lipschitz, $K$ is bounded and $g_0$ has compact support. \newline Next, it is well known that $\left \{a \mapsto 1(a \geq v): v \in K \right\}$ is Donsker. The Donsker permanence property (Theorem 2.10.6, van der Vaart, Wellner (1996)\nocite{vanderVaartWellner}) implies $\{D_{P_0, v}: v \in K\}$ is Donsker. Now, assumption B5 implies the same for $\{D_{P_n^*, v}: v \in K\}$, and the fact that the difference of two Donsker classes is Donsker proves the claim.

It remains to prove $\sqrt{n} P_n D_{P_0, v}$ converges to a tight mean-zero Gaussian process with the desired covariance function. By standard CLT, we have that $\sqrt{n} \left(\Psi_v^{adj}(P_n^*) - \Psi_v^{adj}(P_0) \right)$ converges to a multivariate normal for all finite subsets $V \subset K$ with the desired covariance matrix. If $K$ is a finite set, then we are done. Otherwise, the functional convergence as a stochastic process in $l^\infty(K)$ follows if the set of functions $\left \{D_{P_0, v}: v \in K \right\}$ is Donsker (van der Vaart, Wellner, 1996), which we have already shown.

\end{proof}

\newpage

\section{Web Appendix C}

\subsection{Unadjusted threshold-response and Donovan's estimator NPMLE}

For the unadjusted parameter $\Psi_v^{unadj}$, the efficient influence function is
$$\tilde D_{P,v}(A, Y) = \frac{1(A\geq v)}{P(A \geq v)} (Y - E_P[Y | A \geq v]),$$
and the efficient influence function of the parameter $\Psi_{v}^{unadj}$ is given by
$$ D_{P,v}(A, \Delta, \Delta Y) =  \frac{1(A \geq v)}{P(A \geq v )}\frac{1(\Delta = 1)}{P(\Delta=1|A)}\left(Y -E_P[Y| A, \Delta = 1]\right) $$
$$+\left( E_P[Y| A, \Delta = 1] -  \Psi_{v}^{unadj}(P)\right) \frac{1(A \geq v)}{P(A \geq v)} .$$

In this section, we will present the nonparametric minimum loss (NPML) estimator for the threshold-response function as described in Donovan et al (2019). Since the estimator is an NPMLE, it solves a wide range of score equations. In fact, the NPMLE may be viewed as a TMLE. In this case, the parameter of interest is the unadjusted threshold-response function, 
$\Psi_v^{unadj}(P) = E[Y | A \geq v].$
The Donovan estimator is given by the NPMLE
$$\Psi_v^{unadj} = Q_{n,v}^{unadj}= \frac{\sum_{i=1}^n Y_i 1(A_i \geq v)}{\sum_{i=1}^n 1(A_i \geq v)}.$$
We claim that the Donovan estimator solves the efficient score equation:
$$\frac{1}{n} \sum_{i=1}^n \frac{ 1(A_i \geq v)}{g_{n,v}^{unadj}} \left(Y_i - Q_{n,v}^{unadj} \right)=0,$$
where $g_{n,v}^{unadj} = \frac{1}{n} \sum_{i=1}^n 1(A_i \geq v).$
To see this, note that $g_{n,v}$ and $Q_{n,v}$ are constant, since they are not functions of the baseline covariates $W_i$. Thus,
$$\frac{1}{n} \sum_{i=1}^n \frac{ 1(A_i \geq v)}{g_{v,n}^{unadj}} \left(Y_i - Q_{v,n}^{unadj} \right) =  \left[ \frac{1}{n} \sum_{i=1}^n  \frac{1(A_i \geq v)Y_i}{g_{v,n}^{unadj}} \right] -  Q_{v,n}^{unadj}  = 0.$$
As a result, one may view the Donovan estimator as a TMLE for the covariate-adjusted threshold-response function where the nuisance estimates are $g_{n,v}^{unadj}$ and $Q_{v,n}^{unadj}$. In this case, the parametric update performed in the targeting step does not change the initial estimator.

Since we may view the estimator as a special case of a TMLE, we find that in the nonparametric statistical model where no baseline covariates are utilized, the Donovan estimator is an efficient estimator for the unadjusted threshold response function $E[Y| A \geq v]$. In the more general case of no confounding baseline covariates, one can show that the Donovan estimator is consistent, but not efficient, for the covariate-adjusted threshold response function $E_W E[Y|A\geq v, W]$. The Donovan estimator is always consistent for $E[Y| A \geq v]$.

In the case of outcome missingness, the Donovan estimator can be extended as
$$\hat \Psi_{v,\Delta,n}^{unadj}= \frac{\sum_{i=1}^n Y_i \Delta_i 1(A_i \geq v)}{\sum_{i=1}^n \Delta_i 1(A_i \geq v)}.$$
However, this estimator is generally not efficient and is only consistent when the missingness mechanism does not depend on $A$. More generally, this is an estimator for $E[Y|A \geq v, \Delta = 1].$

We note that the Donovan estimator can be extended to right-censored survival data using the Kaplan Meier estimator (Donovan et al (2019)).

The following corollary follows from an application of Theorem 1.
\begin{corollary}
Suppose conditions B0 and B1 hold. Then, the Donovan NPML estimator satisfies
$$\sqrt{n} (\hat \Psi_v^{unadj} - \Psi_v^{unadj}(P_0)) = n^{-1/2} \sum_{i=1}^n \tilde D_{P_0,v}(A_i, Y_i) + o_p(1),$$
and converges to a tight mean-zero Gaussian process in $l^\infty(K)$ with covariance function $\rho(v_1,v_2) = P_0 \tilde D_{P_0,v_1} \tilde D_{P_0,v_2}.$

\end{corollary}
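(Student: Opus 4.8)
The plan is to obtain the corollary by specializing the argument that proves Theorem 1 in Web Appendix B to the submodel in which no baseline covariates are used and there is no outcome missingness. The key structural fact, established in the discussion preceding the corollary, is that the Donovan estimator is a (degenerate) TMLE: its nuisance estimates $g_{n,v}^{unadj} = P_n 1(A\ge v)$ and $Q_{n,v}^{unadj} = P_n[1(A\ge v)Y]/P_n[1(A\ge v)]$ solve the efficient score equation $P_n \tilde D_{n,v} = 0$, where $\tilde D_{n,v}(A,Y) = \frac{1(A\ge v)}{g_{n,v}^{unadj}}\big(Y - Q_{n,v}^{unadj}\big)$. Conditions B0 and B1 supply exactly what the Theorem 1 machinery needs in this setting: the non-degeneracy $\delta < P_0(Y=1\mid A\ge v) < 1-\delta$ makes $\tilde D_{P_0,v}$ non-vanishing, and the positivity $P_0(A\ge v) > \delta$ makes $g_{n,v}^{unadj}$ bounded below by $\delta/2$ with probability tending to one and $\tilde D_{P_0,v}$ bounded. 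The remaining hypotheses of Theorem 1 (the Donsker condition on nuisance estimators, the $o_P(n^{-1/4})$ rate condition, and the uniform Donsker condition) are vacuous here because the plug-in nuisances are finite-dimensional empirical means.

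For a fixed threshold $v$, I would use that $\hat\Psi_v^{unadj} = Q_{n,v}^{unadj}$ together with the score equation to derive the elementary algebraic identity
$$\hat\Psi_v^{unadj} - \Psi_v^{unadj}(P_0) = P_n \tilde D_{P_0,v} + \big(Q_{n,v}^{unadj} - Q_{0,v}\big)\,\frac{g_{0,v} - g_{n,v}^{unadj}}{g_{0,v}},$$
where $g_{0,v} = P_0(A\ge v)$ and $Q_{0,v} = E_0[Y\mid A\ge v]$. This is the covariate-free analogue of the first-order expansion used in the proof of Theorem 1, with the empirical-process term and the second-order remainder collapsed into the single product on the right. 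Since $g_{n,v}^{unadj}$ and $Q_{n,v}^{unadj}$ are (ratios of) sample averages of bounded variables, each estimation error is $O_P(n^{-1/2})$ by the delta method, so the product is $O_P(n^{-1}) = o_P(n^{-1/2})$; this is what replaces the appeal to the rate condition of Theorem 1. Combining yields the claimed asymptotic linear expansion with influence function $\tilde D_{P_0,v}$, and hence the pointwise mean-zero normal limit with variance $P_0 \tilde D_{P_0,v}^2$.

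For the process-level statement over $v \in K$, I would first note that $\{a \mapsto 1(a\ge v): v\in K\}$ is a VC class, hence $P_0$-Donsker, and that $\tilde D_{P_0,v}(a,y) = g_{0,v}^{-1} 1(a\ge v)(y - Q_{0,v})$ is, for $v$ over the compact set $K$, this indicator class multiplied by uniformly bounded scalars that vary continuously in $v$; by the Donsker permanence theorem (van der Vaart, Wellner, 1996) the class $\{\tilde D_{P_0,v}: v\in K\}$ is $P_0$-Donsker. The same Donsker property applied to $(P_0 - P_n)1(A\ge v)$ and, via continuous mapping (the denominator being bounded below eventually), to $Q_{n,v}^{unadj}$, gives $\sup_{v\in K}|g_{0,v} - g_{n,v}^{unadj}| = O_P(n^{-1/2})$ and $\sup_{v\in K}|Q_{n,v}^{unadj} - Q_{0,v}| = O_P(n^{-1/2})$, so the remainder in the display above is $o_P(n^{-1/2})$ uniformly in $v$. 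Hence $\sqrt n(\hat\Psi_\cdot^{unadj} - \Psi_\cdot^{unadj}(P_0)) = \sqrt n\, P_n \tilde D_{P_0,\cdot} + o_P(1)$ in $l^\infty(K)$, and the functional central limit theorem for empirical processes indexed by a Donsker class yields weak convergence to a tight mean-zero Gaussian process with covariance $\rho(v_1,v_2) = P_0 \tilde D_{P_0,v_1}\tilde D_{P_0,v_2}$. The only point requiring genuine care --- the main obstacle --- is establishing that the post-linear remainder is negligible without invoking the $o_P(n^{-1/4})$ rate conditions of Theorem 1; this is precisely where one exploits that, in the covariate-free model, the plug-in nuisances are parametric-rate empirical means, so their cross-product remainder is $O_P(n^{-1})$, uniformly over $K$ by the Donsker property of the threshold-indicator class.
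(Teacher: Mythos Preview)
Your proposal is correct and aligns with the paper's approach, which is simply the one-line statement that the corollary ``follows from an application of Theorem 1.'' You have effectively carried out that application: the Donovan estimator is a TMLE with nuisance estimates that are empirical means, so conditions B2--B5 of Theorem 1 are trivially met (the Donsker conditions because the nuisances are constants in $(a,w)$, and the $o_P(n^{-1/4})$ rate condition because empirical means converge at $O_P(n^{-1/2})$), leaving only the non-degeneracy and positivity hypotheses you invoke.

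Your derivation is somewhat more explicit than a bare citation of Theorem 1: rather than separately bounding an empirical-process term $(P_n-P_0)[\tilde D_{n,v}-\tilde D_{P_0,v}]$ and a second-order remainder $R_2$ as in the Theorem 1 proof, you collapse both into the single product $(Q_{n,v}^{unadj}-Q_{0,v})(g_{0,v}-g_{n,v}^{unadj})/g_{0,v}$ via the exact algebraic identity, then dispatch it as $O_P(n^{-1})$. This is a legitimate shortcut available in the covariate-free case because the nuisances are scalars, and it makes the argument self-contained; the paper's route, by contrast, simply observes that the general Theorem 1 machinery applies without writing anything out. Both approaches are equivalent in substance.
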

It follows that the Donovan NPML estimator is an efficient estimator for $ \Psi_v^{unadj}(P_0)$ when there are no baseline covariates.

\newpage

 \section{Web Appendix D}
 
 \subsubsection{Simulation design 1}
 
The baseline variables were generated as:
 $$W_1 \sim truncnorm(a = -0.75, b = 1.5, mean = 0.5, sd = 0.75$$
 $$W_2 \sim bernoulli(p = 0.6),$$
  $$W_3 \sim bernoulli(p = 0.3).$$
  \subsubsection{Simulation design 2}
  The baseline variables were generated as:
   $$W_1 \sim truncnorm(a = -0.75, b = 1.5, mean = 0.5, sd = 0.75$$
 $$W_2 \sim bernoulli(p = 0.6),$$
  $$W_3 \sim bernoulli(p = 0.3).$$
  All nuisance functions were estimated using generalized additive models (additive splines). 
\subsubsection{Simulation on effect of covariate adjustment in reducing confounding bias}
For parameters $c \in [0,2.5]$ and normalization parameter $K_c > 0$, we simulate the data structure $(W_1, W_2, A, Y)$ as follows.
$$W_1 \sim \text{Unif}(0,1)$$
$$W_2 \sim \text{Unif}(0,1)$$
$$A \sim \text{Normal}(\mu = -0.6W_2, \sigma = 0.3)$$
$$Y \sim \text{Bernoulli}(p =  \textbf{$K_c$}*0.1 *\text{expit}(-1 - 1.3*A - \exp( A) - 2*A^2 + W_1 - 0.25*A*W_1 + (1)\textbf{c}*W_2)).$$
 $W_2$ plays the role of the confounder and satisfies $Cor(W_2,A) \approx 0.5$. For each value of $c$, $K_c$ is chosen so that $P(Y=1) \approx 0.04$, so that the level of outcome rareness stays constant while confounding varies. The coefficient $c$ plays the role of confounding level. As $c$ increases, the confounding correlation $Cor(W_2,Y)$ increases accordingly. For each level of $c$, we estimate $\Psi_v^{adj}(P_0)$ with $v := \text{median}(A)$ using the unadjusted TMLE (Donovan estimator) and the adjusted TMLE. For the adjusted TMLE, we utilized gradient boosting with maximum depth $5$ to estimate the nuisance parameters. There was no missingness or biased sampling in $A$, so no IPW-based adjustment was performed. For each simulation setting, we performed $500$ monte-carlo simulations.

\subsubsection{Simultaneous confidence interval coverage}
The data structure $(W_1,W_2,A,Y)$ was generated as follows.
$W_1 \sim \text{Unif}(0,1)$, $W_2 \sim \text{Unif}(0,1)$, 
$A \sim \text{Normal}(\mu = -0.6W_2, \sigma = 0.3)$ and $Y \sim \text{Bernoulli}(p =  0.032764*0.1 *\text{expit}(-1 - 1.3*A - \exp( A) - 2*A^2 + W_1 - 0.25*A*W_1 + (1)\textbf{0.9375}*W_2))$ with
$$p=  0.032764*0.1 *\text{expit}(-1 - 1.3*A - \exp( A) - 2*A^2 + W_1 - 0.25*A*W_1 + (1)\textbf{0.9375}*W_2).$$

We estimated the nuisance parameters using gradient-boosting with the maximum tree depth selected by cross-validation. We performed 500 monte-carlo simulations with $n = 2000$ where we computed the TMLE for the threshold-response parameter for the $0,0.1,0.2,0.3,0.4$ and $0.5$ quantiles of $A$. To prevent positivity violations, which would hurt coverage, we did not estimate thresholds corresponding with quantiles above $0.5$. $95\%$ pointwise and simultaneous confidence interval coverage probabilities are displayed below.

\begin{table}[]
    \centering
    \begin{tabular}{c||c}
        Threshold Quantile & Pointwise coverage  \\
        \hline   \hline  \\
         0  & 0.954 \\  \hline  
         0.1 & 0.96\\  \hline  
        0.2  & 0.96\\  \hline  
        0.3  & 0.95  \\  \hline  
        0.4  & 0.97 \\ \hline  
        0.5 & 0.95  \\ \hline  
    \end{tabular}
    \vspace{0.5cm} 
    \caption{The table displays pointwise confidence interval coverage computed from $500$ monte-carlo estimates. Simultaneous confidence interval coverage was found to be 0.948.}
    \label{tab:my_label}
\end{table}

\section{Web Appendix E}

\subsection{Adjusting the TMLE for biased sampling designs}

In this section, we consider the setting where the treatment or marker $A$ is missing/unmeasured for a subset of the observations. In clinical trials, biomarker variables are often only measured in a subset of the study participants. In particular, clinical trials often employ biased sampling designs such as cumulative-case-control sampling or stratified two-stage sampling. The proposed TML estimator and inference method for the threshold-response function can be adjusted using inverse-probability weighting (IPW) as described in Rose and van der Laan (2011) \nocite{vanderLaanRose2011} to account for the biomarker missingness. 
To this end, we consider the more general data structure $O = (W, R, RA, \Delta, \Delta Y) \sim P_0$. $R$ is a binary random variable that takes the value $1$ if the random variable $A$ is observed/measured and $0$ otherwise. For cumulative case-control sampling, we can take $R$ to be a Bernoulli random variable that takes the value $1$ with probability $P_0(R=1|\Delta Y)$. Under the assumption that $A$ is missing-at-random (MAR) given the fully observed data, one can show that $\Psi_{v}^{adj}(P_{0})$ is identified by the data generating distribution of the observed data. Specifically, we assume $A$ is independent of $R$ conditional on $( W, \Delta, \Delta Y)$. This assumption is necessarily satisfied for the cumulative case-control and for two-stage stratified sampling designs that sample based on $(W, \Delta, \Delta Y)$. We define the inverse-probability weights (IPW), $w_0(O) = \frac{1}{P_0(R=1|O)}$. These weights can be estimated from the data by estimating the conditional probability $P_0(R=1|O)$, possibly using nonparametric minimum-loss estimation. 

When there is biomarker missingness, we can apply an inverse-probability weighted version of the srTMLE of the main text. Specifically, for the estimation of the nuisance parameters, we perform the regressions in the subset of the data where $R=1$ and adjust for the treatment missingness by incorporating the weights $w_0(O_i): i=1,\dots,n$ in the regressions. Similarly, in the targeting step where we perform minimum-loss estimation over a parametric fluctuation submodel, we perform the loss minimization in the subset of the data with $R=1$ and incorporate the weights $w_0(O_i): i=1,\dots,n$. The final estimate of $\Psi_v^{adj}(P_0)$ is then given by the substitution estimator where averaging over $W$ is performed using the $w_0$-weighted empirical distribution among observations with $R=1$. As shown in  Rose and M. van der Laan (2011), for efficiency, the initial estimates of the weight function $w_0$ should be targeted. We refer to Rose and M. van der Laan (2011) for an in-depth treatment of this adjustment in the context of TMLE.  

We will now give a fully-efficient biomarker-missing version of the srTMLE given in the main text. Let $w_n$ be an initial estimator of $w_0$ and $D_{n,v}$ an initial estimator of the efficient influence function $D_{P_0, v}$ for the non-biomarker-missingness case. Let $H_n(W, \Delta, \Delta Y)$ be an initial estimator of $E[D_{P_0, v}(O) \mid R =1, W, \Delta, \Delta Y]$, which can be obtained by performing the pseudo-outcome regression of $D_{n,v}$ on $( W, \Delta, \Delta Y)$ using only the observations with $R=1$.

\textbf{Step 1. Target IP-weights }
\begin{enumerate}
    \item Define the logistic submodel $w_{n, \varepsilon}(O) = \text{expit} \left\{\text{logit}(w_n)(O) + \varepsilon \frac{H_n(W, \Delta, \Delta Y)}{w_n(O)} \right\}$
    \item Define the targeted weight function $w_n^* := w_{n , \hat \varepsilon_n}$ where
    $$\hat \varepsilon_n = \argmax_{\varepsilon \in \mathbb{R}} \frac{1}{n} \sum_{i=1}^n R_i \cdot \log w_{n, \varepsilon}(O_i) + (1-R_i) \cdot \log (1-w_{n, \varepsilon}(O_i)) $$
    is the MLE along the submodel.
    \item For an event $\mathcal{A}$, define
    $$P_{W,n}^*(\mathcal{A}) = \frac{1}{n}\sum_{i=1}^n R_iw_n^*(O_i) 1(W_i \in \mathcal{A})$$
    to be the targeted IPW estimate of $P_{W,0}$.
\end{enumerate}

\textbf{Step 2. IPW-TMLE}
\begin{enumerate}
    \item Define the indicator fluctuation submodel
$Q_{n, \varepsilon} (A,W)= \text{expit} \left \{\text{logit}(Q_{n})(A,W) + \varepsilon 1(A \geq v)  \right\}. $ 
    \item The MLE along this submodel is given by
    $\hat \varepsilon_n = $
    $$\argmax_{\varepsilon \in \mathbb{R}}  \sum_{i=1}^n  \frac{R_iw_n^*(O_i) \Delta_i }{ g_{n,v}(W_i) G_n(A_i,W_i)} \left \{ Y_i \cdot \log Q_{n, \varepsilon}(A_i, W_i) + (1- Y_i) \cdot \log (1 - Q_{n,  \varepsilon}(A_i, W_i)) \right\}.$$
    \item Define the updated estimate of $Q_0$ as $Q_{n}^* = Q_{n, \hat \varepsilon_n}$.
    \item Obtain an initial estimator $Q_{n,v}$ of $Q_{0,v}(W) = E_{P_0}[E_{P_0}[Y \mid A, W, \Delta = 1] \mid  A \geq v, W] $ using sequential regression (e.g. estimate $E[Q_n^*(A,W) \mid  A \geq v, W]$).
    \item Define the intercept fluctuation submodel,
$Q_{n, v, \varepsilon}= \text{expit} \left \{\text{logit}(Q_{n,v}) + \varepsilon  \right\}. $
    \item The MLE along this submodel is given by $Q_{n,v}^* = Q_{n,v, \hat \varepsilon_n}$ where
    $\hat \varepsilon_n =$
$$\argmax_{\varepsilon \in \mathbb{R}}   \sum_{i=1}^n  \frac{R_iw_n^*(O_i)1(A_i \geq v)}{ g_{n,v}(W_i)} \left \{ Q_{n}^*(A_i,W_i) \log Q_{n, v, \varepsilon}(W_i) + (1- Q_n^*(A_i,W_i)) \log (1 - Q_{n,  v, \varepsilon}(W_i)) \right\}.$$
    \item Let $P_{n,v}^* := (P_{W,n}^*, g_{n,v}, G_n, Q_{n,v}^*, Q_n^*)$. The TMLE of $\Psi_{v}^{adj}(P_0)$ is then given by the substitution estimator
$\Psi_{v}^{adj}(P_{n,v}^*) = E_{P_{W,n}^*}  Q_{n,v}^*(W)= \frac{1}{n} \sum_{i=1}^n  R_iw_n^*(O_i) Q_{n,v}^*(W_i).$

\end{enumerate}

Inference for TMLE can be obtained as described in the main text except $D_{P_0,v}$ needs to be replaced with the new efficient influence function
$$\widetilde{D}_{P_0, v}(O) := R w_0(O) D_{P_0,v}(O) - \left( R w_0(O) - 1 \right) E\left[D_{P_0,v}(O) \mid R=1, W, \Delta, \Delta Y \right].$$

\subsection{Simulation comparing fully efficient IPW TMLE vs inefficient IPW TMLE}

 As discussed in Rose and M. van der Laan (2011), Step 1 above can be omitted if $w_0$ is estimated with a stratified empirical mean estimator or a correctly-specified parametric model at a possibly significant cost in statistical efficiency. For instance, if the biased sampling is due to cumulative case-control sampling then we have $w_0(O) \equiv w_0(\Delta Y) = P(R=1 \mid \Delta Y)$ is only a function of $\Delta Y$. It would then be straightforward to use the estimator $w_n(\delta y) = \frac{1}{\sum_{i=1}^n 1(\Delta_i Y_i = \delta y)} \sum_{i=1}^n R_i 1(\Delta_i Y_i = \delta y)$.  In this case, inference can be obtained based on the IPW-influence function $\frac{R}{w_0(O)}D_{P_0,v}(O)$. If Step 1 is omitted and Step 2 is performed with the stratified empirical mean weight estimator then the resulting IPW-srTMLE will be statistically inefficient. In the following simulation results, we show that the loss in efficiency can be striking by omitting step 1. and using the non-targeted weights. 
 
 For the simulation, the data-structure is $(W_1, W_2 ,W_3, R, RA, Y)$ and is generated as follows. $W_1, W_2 \sim \text{Uniform}(-1,1)$, $W_3 \sim \text{Exp}(1)$, $A \sim \text{Gamma}(shape = 3,\, rate = 13)$, and the outcome variable $Y$ is Bernoulli-distributed and takes the value 1 with probability,
$$P(Y = 1|A, W) =  \text{expit}(-4.7+
    0.7 \cdot (
      0.7\cdot(0.5-1.5\cdot A + 0.25\cdot(W_1 + W_2 + W_3)$$
      $$+ \sin(3\cdot W_1) + \sin(3\cdot W_2)+ \log(1+W_3)$$
      $$+ 2\cdot W_1\cdot \sin(3\cdot W_2) + 2\cdot W_2\cdot \sin(3\cdot W_1) + W_3\cdot \sin(3\cdot W_1) + W_3 \cdot (A-0.4) + W_3\cdot \cos(3\cdot W_1))))). $$
There is no outcome missingness. To mimic a cumulative case-control sampling design, we generate the indicator variable $R$ that satisfies $P(R=1 \mid Y=1) = 1$ and $P(R =1 \mid Y=0) = 0.1$. We choose the threshold $v = median(\text{A})$, which is estimated from a very large sample of $A$. We estimated $\Psi_v^{adj}(P_0)$ using three estimators. The first (Don) is the unadjusted estimator treated in Donovan et al. (2019), the second (Eff) is the fully efficient IPW-srTMLE given in the previous section, the third (IPW) is the inefficient IPW-weighted srTMLE where the inverse of the weights are estimated by empirical means stratified by $Y$. The standard error of each estimator is estimated by computing the estimator on 1000 Monte-Carlo simulations at sample sizes $n = 15000, 25000, 50000$ (which correspond with around 1500, 3300 and 6500 fully-observed observations). The results are plotted below. We see that the fully efficient IPW-srTMLE and the IPW-Donovan et al. estimator have virtually the same standard error. This is expected since the outcome is very rare and thus little gain in efficiency is expected. Since there is no confounding in this simulation, there is no (asymptotic) bias reduction by adjusting for covariates. Interestingly, we see that the inefficient IPW-srTMLE (IPW) has a significantly larger standard error than the Donovan estimator (Don) and fully-efficient IPW-srTMLE (Eff). This shows that adjusting for biomarker missingness using non-targeted IP-weighting can lead to significant losses in statistical efficiency.

\begin{figure}
     \centering
      \includegraphics[width=15cm]{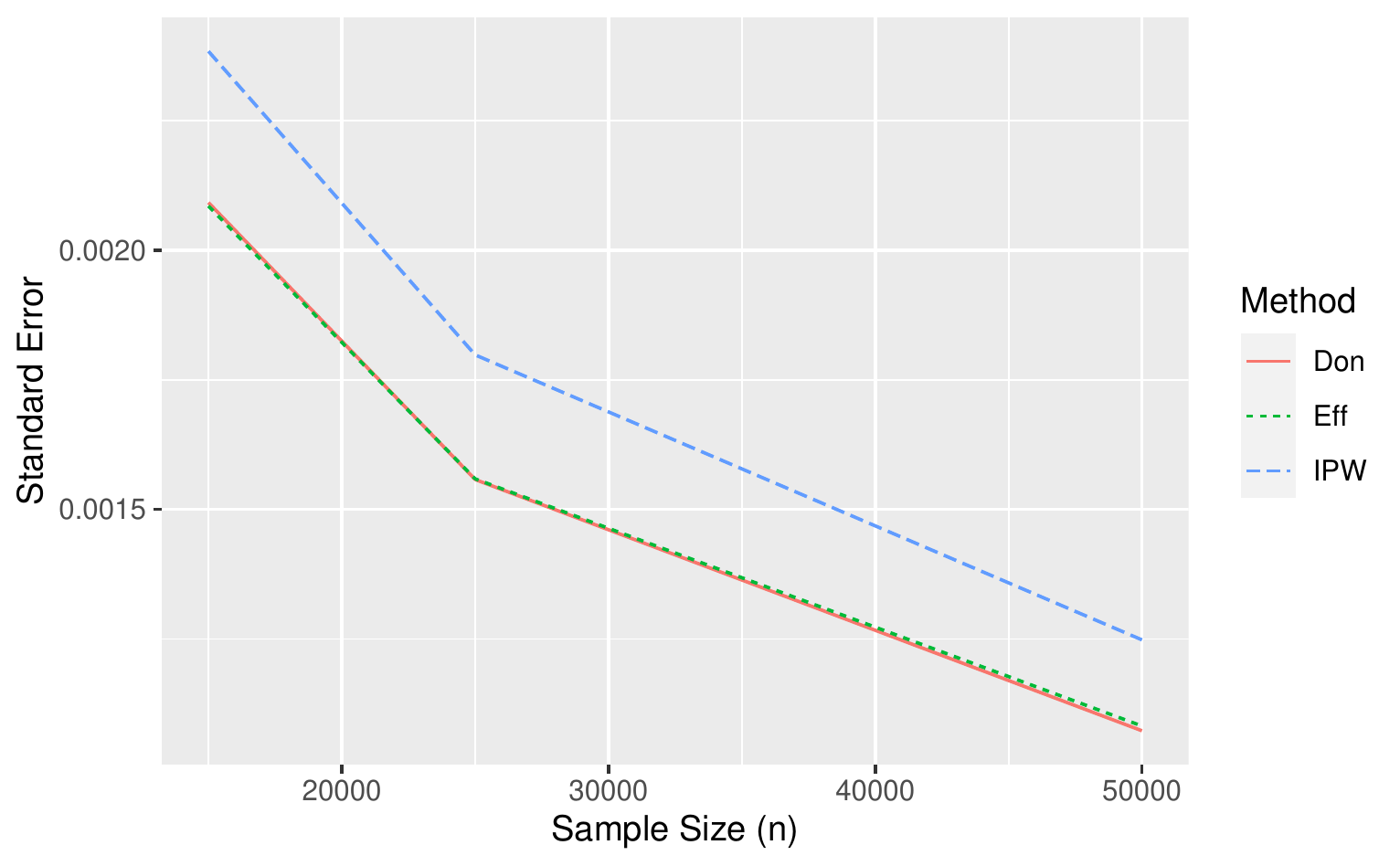}
     \caption{The figure shows a plot of monte-carlo estimated standard errors as a function of sample sizes for the IPW-Donovan estimator (``Don"), fully-efficient IPW-srTMLE (``Eff") and the inefficient IPW-srTMLE (``IPW")}
     \label{fig:my_label}
\end{figure}

\section{Web Appendix F:  Miscellania}
\subsection{Inference for thresholds of zero risk (i.e. absolute protection)}
When a threshold $v$ is such that $P(Y=1|A \geq v) = 0$ then Theorem 1 does not apply. The reason being that the efficient influence function $D_{P,v}$ vanishes:
Recall the efficient influence function is given by $\frac{1(A\geq v)}{P(A\geq v|W)}\left[Y - P(Y=1|A\geq v,W) \right] + P(Y=1|A\geq v, W) - E_W P(Y=1|A\geq v,W)$. If $P(Y=1|A\geq v, W) = 0$ a.e. $W$ then $1(A\geq v)Y = 0$ and $E_W P(Y=1|A\geq v, W) = 0$. But, this implies $\frac{1(A\geq v)}{P(A\geq v|W)}\left[Y - P(Y=1|A\geq v,W) \right] + P(Y=1|A\geq v, W) - E_W P(Y=1|A\geq v,W) = 0$, so that the EIF vanishes.
Because of this, under the assumptions of Theorem 1, the asymptotic component of $\sqrt{n}\left(\Psi_v^{adj}(P_{n,v}^*) - \Psi_v^{adj}(P_0) \right)$ vanishes, and in particular, $\sqrt{n}\left(\Psi_v^{adj}(P_{n,v}^*) - \Psi_v^{adj}(P_0) \right) = o_P(n^{-1/2})$. In fact, for reasonable non-extrapolating estimators, one would actually expect that $\left(\Psi_v^{adj}(P_{n,v}^*) - \Psi_v^{adj}(P_0) \right) $ is identically $0$, since there are no observed events $(Y=1)$ with $A \geq v$. It follows that the TMLE $\Psi_v^{adj}(P_{n,v}^*)$ is faster than $\sqrt{n}$-consistent, however an influence-function-based limit distribution is not available. 

Nonetheless, one can still obtain confidence intervals for the value of the threshold of absolute protection $v$ with $100\%$ finite sample coverage. Suppose that the risk decreases monotonously as the $A$ increases. Let $v_n$ be the maximum immune-response observed for individuals with $Y=1$. By construction, it must be that $P(Y=1|A\geq v_n,W) > 0$ for some $W$ since we observed an individual with $Y=1$ and $A = v_n$. Under the monotonicity constraint, it must be that a threshold of absolute protection (if it exists) is strictly bigger than $v_n$. Thus, $(v_n, \infty)$ is a confidence interval for the threshold of absolute protection that has exact coverage $1$.  

\subsection{Testing the existence of a threshold below a given risk level}
 It is in fact possible to construct a test for the existence of a threshold with risk below a desired cutoff that conservatively controls the type-1 error at level $\alpha$ using our method. To this end, suppose we have simultaneous confidence bands $v \mapsto (l_n(v),u_n(v))$ such that 
$$P \left(\{l_n(v) \leq E_WP(Y=1|A\geq v,W) \leq u_n(v): \forall v\}  \right) = 0.95 + o(1).$$
These simultaneous bands are the boundary of a region that contains the true threshold-response function with probability $0.95 + o(1)$. Thus, any question we answer by assuming that the true threshold-response function falls in this region will be correct at least $95\% + o(1)$ of the time. The question of whether a threshold exists with risk below a given level $\delta$ can be captured by the following null hypothesis:
$$H_0: \forall v,\, E_WP(Y=1|A \geq v,W) > \delta. $$
We propose rejecting the null hypothesis if there exists a threshold $v$ such that the upper simultaneous confidence band satisfies $u_n(v) \leq \delta.$ This in particular implies that every function contained in the confidence region necessary satisfies that its risk is below $\delta$ at the threshold $v$. But, by construction, the true threshold-response function falls in the confidence region with probability $0.95 + o(1)$. Thus, we can conclude that with probability at least $0.95 + o(1)$, the true threshold response function has a risk below $\delta$ at the threshold $v$, and therefore our rejection statistic controls the type-1 error at level $0.05$. A key step of this procedure is that the confidence region is simultaneous, since this allows us to search for a threshold whose risk/upper confidence interval bound is below $\delta$ without losing coverage due to looking at the data. This procedure would not work with pointwise confidence intervals. To achieve better power, one could restrict the region of thresholds $v$ for which to construct the simultaneous confidence bands. For example, one might know apriori that the region of low immune-response will not have risk below $\delta$. In this case, one could choose to only construct the simultaneous confidence bands for thresholds in the upper region of the immune-response. Since the region is smaller, the simultaneous confidence intervals will be less wide. The tradeoff is that your power will be $0$ against alternatives where the threshold of risk below $\delta$ is contained in the lower region, which you no longer look at. 

Alternatively, one could invert the pointwise confidence intervals for the risk of each threshold $v$ into p-values for the point-wise null hypothesis: $H_{0,v}: E_WP(Y=1|A \geq v,W) > \delta$, and then use other multiple testing-based procedures like sequential testing and FDR control to test the collection of null hypothesis': $H_{0,v}:   E_WP(Y=1|A \geq v,W) > \delta$ indexed by thresholds $v$.

\section{Web-Appendix G: Estimation of nuisance parameters and computational considerations}
A key step of the TML procedures for the threshold-response function is the initial estimation of the nuisance parameters $g_{0,v}:= P(A \geq v |W)$,  $Q_{0,v} := E[Y|A\geq v, W]$, and $Q_{0} := E[Y|A = a, W]$. 
$E[Y|A = a, W]$ can be estimated using standard linear or logistic regression methods. For estimating $E[Y|A\geq v, W]$, one procedure is as follows.
\begin{enumerate}
    \item Define weights $w_i = 1(A_i \geq v)$
    \item Perform the weighted regression using $Y_i$ as outcome and $W_i$ as covariates with weights $w_i$
    \item Obtain predicted probabilities
\end{enumerate}
For some regression algorithms, zero weights are not accepted. In this case, one should perform standard unweighted regression using only the observations $O_i$ with $A_i \geq v.$

One way to estimate $P(A \geq v|W)$ directly is as follows.
\begin{enumerate}

    \item Define the pseudo-outcome $\tilde Y_i = 1(A_i \geq v)$
    \item Perform regression using $\tilde Y_i$ as outcome and $W_i$ as covariates
    \item Obtain predicted probabilities
\end{enumerate}

The above estimation methods are reasonable when one is only interested in the risk at a few thresholds. However, recomputing estimates of $P(A\geq v|W)$ and $E[Y|A\geq v,W]$ for a large number of thresholds can be computationally expensive.

Alternatively, one could estimate the full regression function $E[Y|A,W]$ and the conditional density $P(A=da|W):=\frac{d}{da}P(A\leq a|W)$ (or a discrete version $P(v_1 \leq A \leq v_2|W)$ for all consecutive thresholds $v_1, v_2$) instead. Then, one can estimate $Q_{0,v}$ and $g_{0,v}$ through substitution estimators based on
$$Q_{0,v}(W) = \int E_P[Y|A=a,W] \frac{1(a \geq v)}{\int 1(s \geq v)P(A=ds|W)ds}P(A=da|W)da$$
and
$$g_{0,v}(W) = \int 1(a \geq v) P(A=da|W)da.$$
The computation time for estimating $E_P[Y|A,W]$ is not much more than that of $Q_{n,v}$ for a single threshold $v$.  The conditional density $P(A=da|W)$ of $A$ can be estimated using pooled hazard regression methods as described in Diaz, Hejazi (2019)\nocite{D_az_2020}. This is more computationally expensive than estimating $g_{n,v}$ for a single threshold but much cheaper than estimating $g_{n,v}$ for all thresholds.

Another option is to pool the regressions across the thresholds. For example, the following repeated-measures least squares risk function can be minimized to obtain an estimate of $(W,v) \mapsto E_{P_0}[Y|A \geq v,W]$ for the thresholds $\{v_1, v_2, \dots, v_J\}$:
$$R_n(f) = \sum_{j=1}^J\sum_{i=1}^n 1(A_i \geq v_j) \left\{Y_i - f(v_j, W_i) \right\}.$$
By smoothly estimating the dependence on $v$, one extrapolate the estimates to thresholds outside of $\{v_1, v_2, \dots, v_J\}$. For computational efficiency, one can include only a subset of the thresholds of interest in $\{v_1, v_2, \dots, v_J\}$. One issue with this approach is that the estimator might be perform poorly for thresholds $v_j$ that only have a small proportion of the sample above it. We noticed in simulations that this was an issue for poorly calibrated estimators. In practice, it might be helpful to add weights $\frac{1}{P_n(A \geq v_j|W_i)}$ (or $\frac{1}{P_n(A \geq v_j)}$ ) to the above risk function, which will ensure that all thresholds are given equal weight.

\newpage

\section{Web Appendix H:  Code}

\noindent Arguments: 
\begin{itemize}
    \item threshold: a threshold value for A at which to estimate threshold-response
    \item A: a vector of the biomarker observations
    \item Delta: A binary vector for the observed values of the missingness indicator $\Delta$.
    \item Delta: A vector for the observed values of the outcome $Y$.
    \item gv: the evaluation of an estimate of $w \mapsto g_v(w)$ at the observations.
    \item Q: the evaluation of an estimate of $(w,a) \mapsto Q(w,a)$ at the observations.
     \item Qv: the evaluation of an estimate of $(w) \mapsto Q_v(w)$ at the observations.
     \item G: the evaluation of an estimate of $(w,a) \mapsto G(w,a)$ at the observations.
      \item Gv: the evaluation of an estimate of $(w,a) \mapsto G_v(w,a)$ at the observations.
      \item weights: a vector of weights for each observation used for IPW-weighting.
\end{itemize}

\begin{verbatim}

The sequential regression-based TMLE (srTMLE): 

tmle.efficient <- function(threshold, A, Delta, Y,
gv, Q, Qv, G, 
weights = rep(1, length(A)), bound = 0.005) {
  n <- length(A)
  gv <- pmax(gv, bound)
  G <- pmax(G, bound)
  Av <- as.numeric(A >= threshold)
  # Step 1
  eps <- coef(glm.fit(Av, Y, offset = qlogis(Q),
  weights = weights * Delta/(gv*G), 
  family = binomial(), start = 0))
  Q_star <- plogis(qlogis(Q) + eps * Av )
  # Step 2 (sequential regression)
  
  eps <- coef(glm.fit(as.matrix(rep(1,n)), Q_star,
  offset = qlogis(Qv), weights = weights * Av/(gv), 
  family = binomial(), start = 0))
  Qv_star  <- plogis(qlogis(Qv) + eps )
  
  psi <- weighted.mean(Qv_star, weights)
  EIF <- Delta*Av/(gv*G) * (Y - Q_star) + Av/gv * (Q_star - Qv_star) + Qv_star - psi
  EIF <- weights * EIF
  return(list(psi = psi, EIF = EIF))
}


The binary-treatment-based TMLE (binTMLE): 

tmle.inefficient <- function(threshold, A, Delta, Y,
gv, Qv, Gv, 
weights = rep(1, length(A))) {
   n <- length(A)
  Av <- as.numeric(A >= threshold)
  eps <- coef(glm.fit(as.matrix(rep(1,n)), Y,
  offset = qlogis(Qv), weights = weights * Delta*Av/(gv*Gv), 
  family = binomial(), start = 0))
  Qv_star <- plogis(qlogis(Qv) + eps )
  psi <- weighted.mean(Qv_star, weights)
  EIF <- Delta*Av/(gv*Gv) * (Y - Qv_star) + Qv_star - psi
  EIF <- weights * EIF
  return(list(psi = psi, EIF = EIF))
}

\end{verbatim}

\section{Web Appendix I}
The TMLEs presented in this paper can be applied to continuous bounded outcomes as well. A general approach for translating a TMLE for binary outcomes to a TMLE for bounded continuous outcomes is described in Chapter 7 of van der Laan, Rose (2012). We first consider the case when the outcome is contained in the interval $[0,1]$. In this case, there are no changes to the TMLE algorithms, noting that the logistic fluctuation submodels and the risk minimization in the targeting step remain well defined. For general bounded outcomes contained in the interval $[a,b]$, one can transform the outcome by shifting and scaling to a new outcome variable that falls in $[0,1]$. The TMLE for outcomes in $[0,1]$ can then be applied and the resulting estimates and confidence intervals for the estimand associated with the transformed outcome can be transformed back to the original scale.

\end{document}